\newif\ifreport
\DeclareMathOperator{\Tr}{Tr}
\theoremstyle{definition}
\newtheorem{theorem}{Theorem}
\newtheorem{lemma}{Lemma}
\begin{document}

\title{\vspace{-0mm}Joint Beam and Channel Tracking for Two-Dimensional Phased Antenna Arrays\vspace{0mm}}
% \title{\vspace{-2mm}Super Fast Beam and Channel Tracking in 2D Phased Antenna Arrays\vspace{-10mm}}
% \title{Super Fast Beam Tracking in Phased \\ Antenna Arrays}

 \author{Yu Liu$^*$, Jiahui Li$^*$, Yin Sun$^\S$, Shidong Zhou$^*$\\
 $^*$Dept. of EE, Tsinghua University, Beijing, 100084, China\\
 $^\S$Dept. of ECE, Auburn University, Auburn AL, 36849, U.S.A}

\maketitle
\begin{abstract}
Analog beamforming is a low-cost architecture for millimeter-wave (mmWave) mobile communications. However, it has two disadvantages for serving fast mobility users: (i) the mmWave beam in the wireless channel and the beam steered by analog beamforming have small angular spreads which are difficult to align with each other and (ii) the receiver can only observe the mmWave channel in one beam direction and rely on beam-probing algorithms to check other directions. In this paper, we develop a beam probing and tracking algorithm that can efficiently track fast-moving mmWave beams in three-dimensional (3D) space. This algorithm has several salient features: (1) fading channel supportive: it can simultaneously track the channel coefficient and two-dimensional (2D) beam direction in fading channel environments; (2) low probing overhead: it achieves the minimum probing requirement for joint beam and channel tracking; (3) fast tracking speed and high tracking accuracy: its tracking error converges to the minimum Cram\'{e}r-Rao lower bound (CRLB) in static scenarios in theory and it outperforms several existing tracking algorithms with lower tracking error and faster tracking speed in simulations.
% Millimeter wave (mmWave) is an attractive candidate for high-speed mobile communications in the future. However, due to the propagation characteristics of mmWave, beam and and and and alignment becomes a key challenge for serving users with fast moving speeds. In this paper, we develop a joint beam and channel tracking algorithm that can track beams from the horizontal and vertical directions by using two-dimensional (2D) phased antenna arrays. A general sequence of optimal trial beamforming parameters is obtained to achieve the minimum Cramer-Rao lower bound (CRLB) of joint beam and channel tracking asymptotically as antenna number grows to infinity. This sequence is proved to be asymptotically optimal in different conditions, e.g., channel coefficients, path directions, and antenna array sizes. We prove that the proposed algorithm converges to the minimum CRLB in static scenarios. Simulation results show that our algorithm outperforms several existing algorithms in tracking accuracy and speed band.
\end{abstract}

\bstctlcite{BSTcontrol}

\vspace{-0mm}%2mm
\section{Introduction}\label{sec_intro}
Due to the low hardware cost and energy consumption, analog beamforming is often used in mmWave mobile communications to provide large array gains \cite{Heath2016overview,Molisch2017Hybrid}. However, the beam steered by analog beamforming has small angular spreads. Slight misalignment can cause severe energy loss. Accurate alignment can be achieved by beam training at the expense of large pilot overhead in static or quasi-static scenarios. Nevertheless, this price is unacceptable in fast-moving environments. Therefore, efficient beam tracking is important for serving fast mobility users in mmWave communication.

Some beam tracking methods has been proposed \cite{Zhu2016,Alkhateeb2015Limited,Alkhateeb2015Compressed}, utilizing historical observations and estimations to obtain current estimate. Despite this, the analog beamforming vectors are not optimized in those tracking algorithms, resulting in a waste of transmission energy. A beam tracking algorithm is proposed in \cite{JLiAnalogbeamtracking2017}, trying to optimize the analog beamforming vectors, assuming the channel coefficient is known. In \cite{JLiJoint2018}, the authors start to jointly track the channel coefficient and beam direction with optimal analog beamforming vectors. The theorems of convergence and optimality are established for joint tracking. However, all these algorithms are based on uniform linear array (ULA) antennas, which can only support one-dimensional (1D) beam tracking. While in several mobile scenarios, e.g., unmanned aerial vehicle (UAV) scenarios \cite{Brown2016Promise}, the beam may also come from different horizontal and vertical directions. Hence, we need to dynamically track the two-dimensional (2D) beam direction with 2D phased antenna arrays.
%dense urban area \cite{Rappaport2013CompactBroadband}

This problem is challenging due to the following three reasons: (i) with analog beamforming, we can only obtain part of the system information through one observation. (ii) We need to jointly track channel coefficient and 2D beam direction and the analog beamforming vectors also need to be adjusted. Therefore, it is a dynamic joint optimization problem with sequential analog beamforming vectors and these analog beamforming vectors also need to be optimized. (iii) Compared with 1D beam direction, more analog beamforming vectors are required when tracking 2D beam direction. As a result, the optimization dimension greatly increases.

In this paper, we design a joint beam and channel tracking algorithm for 2D phased antenna arrays to handle the problem above. The main contributions and results are summarized as follows:
\begin{itemize}
\item This algorithm can achieve the minimum probing overhead for joint beam and channel tracking.
\item In static scenarios, we get the performance bound, i.e., the minimum CRLB by optimizing the analog beamforming vectors under some constraints. A general way to generate the optimal analog beamforming vectors is proposed with a sequence of parameters. These parameters are proved to be asymptotically optimal in different conditions, e.g., channel coefficients, and path directions, as the number of antennas grows to infinity.
   % A sequence of general asymptotically optimal beamforming parameters is obtained and proven to be optimal even for different channel gains, different beam directions and different antenna numbers when antenna number is sufficiently large.
\item We prove that our algorithm can converge to the minimum CRLB with high probability in static scenarios.
\item Simulation results show that our algorithm approaches the minimum CRLB quickly in static scenarios. In dynamic scenarios, our algorithm can achieve lower tracking error and faster tracking speed compared with several existing algorithms.
\end{itemize}

\vspace{-0mm}%-3mm
\section{System Model}\label{sec_model}
%\fi
%\subsection{System Model}
\vspace{-0mm}%-1mm
We consider a mmWave receiver equipped with a planar phased antenna array\footnote{Note that tracking is needed at both the transmitter and receiver. However, considering the transmitter-receiver reciprocity, the beam and channel tracking of both sides have similar designs. Hence, we focus on beam and channel tracking on the receiver side.}, as shown in Fig. \ref{antenna}. The planar array consists of $M \times N$ antenna elements that are placed in a rectangular area, with a distance $d_1$ ($d_2$) between neighboring antenna elements along $x$-axis ($y$-axis)\footnote{To obtain different resolutions in horizontal direction and vertical direction, the antenna numbers along different directions may not be the same, i.e., $M \neq N$ \cite{Roh2014mmwave}. To suppress sidelobe, the antennas may be unequally spaced, i.e., $d_1 \neq d_2$ \cite{Rajarshi2012Position}.}. The antenna elements are connected to the same RF chain through different phase shifters. The system is time-slotted. To estimate and track the direction of the incoming beam, the transmitter sends $q$ pre-determined pilot symbols $s_p$ in each time slot, where $\lvert s_p \rvert^2=\text{E}_p$ is the transmit power of each pilot symbol.

In mmWave channels, only a few paths exist due to the weak scattering effect \cite{Heath2016overview}. Because the angle spread is small and the mmWave system is usually configured with a large number of antennas, the interaction between multi-paths is relatively weak. In other words, the incoming beam paths are usually sparse in space, making it possible to track each path independently \cite{Zhenyu2016Enabling}. Hence, we focus on the method for tracking one path. Different paths can be tracked separately by using the same method.
\begin{figure}[!t]
	\centering
	\vspace{-5mm}
	\includegraphics[width=5.5cm]{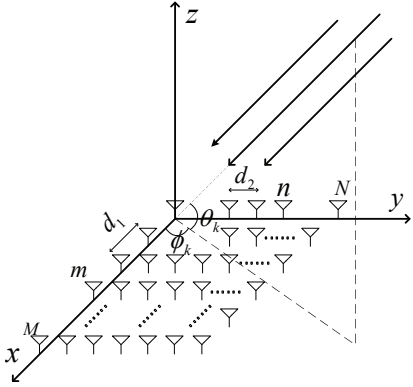}
	\vspace{-4mm}
	\caption{2D phased antenna array.}
	\vspace{-5mm}
	\label{antenna}
\end{figure}In time-slot $k$, the direction of the incoming beam path is denoted by ($\theta_k, \varphi_k$), where $\theta_k\in[0,\pi/2]$ is the elevation angle of arrival (AoA) and $\varphi_k\in[-\pi,\pi)$ is the azimuth AoA. The channel vector of this path is
\vspace{-1mm}
\begin{equation}\label{eq_channelvector}
\textbf{h}_k = \beta_k \textbf{a}(\textbf{x}_k),
\vspace{-1mm}
\end{equation}
%be the ,
where $\beta_k=\beta_k^\text{re}+j\beta_k^\text{im}$ is the complex channel coefficient, $\textbf{x}_k=\left[x_{k,1},x_{k,2} \right]^\text{T}=\left[\frac{M d_1 \cos(\theta_k) \cos(\varphi_k)}{\lambda},\frac{N d_2 \cos(\theta_k) \sin(\varphi_k)}{\lambda} \right]^\text{T}$ is the direction parameter vector determined by ($\theta_k, \varphi_k$),
\vspace{-1.2mm}
\begin{equation}\label{eq_steeringvector}
\textbf{a}(\textbf{x}_k)=\left[ a_{11}(\textbf{x}_k)~ \cdots~a_{1N}(\textbf{x}_k)~a_{21}(\textbf{x}_k) ~\cdots~a_{MN}(\textbf{x}_k) \right]^\text{T}
\vspace{-1.2mm}
\end{equation}is the steering vector with $a_{mn}(\textbf{x}_k)=e^{j 2 \pi \left(\frac{m-1}{M} x_{k,1} + \frac{n-1}{N} x_{k,2}\right)}$ $\,(m=1,\cdots,M;\,n=1,\cdots,N)$, and $\lambda$ is the wavelength.

Let $\textbf{w}_{k,i}$ be the analog beamforming vector for receiving the $i$-th $(i=1,\cdots,q)$ pilot symbol in time-slot $k$, given by
\vspace{-1.2mm}
\begin{equation}\label{eq_bf}
\textbf{w}_{k,i} = \frac{1}{\sqrt{MN}}\textbf{a}\left(\textbf{x}_k+{\boldsymbol{\Delta}}_{k,i}\right),
%\textbf{w}_{k,i} = \frac{1}{\sqrt{MN}}\textbf{a}\left(\left[x_{k,1}+\delta_{k,i1},x_{k,2}+\delta_{k,i2}\right]^\text{T}\right),
\vspace{-1.2mm}
\end{equation}
where $\widetilde{\boldsymbol{\Delta}}_{k,i}$ is the direction parameter offset corresponding to $\textbf{w}_{k,i}$. After phase shifting and combining, the observation at the baseband output of RF chain is given by
\vspace{-1mm}
\begin{equation}\label{eq_observation}
y_{k,i} = \textbf{w}_{k,i}^\text{H}\textbf{h}(\textbf{x}_k)s_p + z_{k,i}=s_p \beta_k \textbf{w}_{k,i}^\text{H} \textbf{a}(\textbf{x}_k)+z_{k,i},
\vspace{-1mm}
\end{equation}
$\boldsymbol{\psi}_k \triangleq \big[\beta_k^\text{re},\beta_k^\text{im},x_{k,1},
x_{k,2}\big]^\text{T}$
where ${z}_{k,i}\sim\mathcal{CN}(0,\sigma^2)$ is an \emph{i.i.d.} circularly symmetric complex Gaussian random variable. Define $\boldsymbol{\psi}_k \triangleq \big[\beta_k^\text{re},\beta_k^\text{im},x_{k,1},$
$x_{k,2}\big]^\text{T}$ as the channel parameter vector in time-slot $k$, $\textbf{W}_k\triangleq \left[\textbf{w}_{k,1},\ldots,\textbf{w}_{k,q}\right]$ as the analog beamforming matrix, and $\textbf{z}_k \triangleq \left[z_{k,1},\ldots,z_{k,q}\right]$ as the noise vector. Then the conditional probability density function of the observation vector $\textbf{y}_k\triangleq \left[y_{k,1},\ldots,y_{k,q}\right]^\text{T}$\! is given by
\vspace{-1.5mm}
\begin{equation}\label{eq_pdf}
p(\textbf{y}_k| \boldsymbol{\psi}_k, \textbf{W}_k) = {\frac{1}{\pi^{q} \sigma^{2q}} e^{- \frac {{\left\| \textbf{y}_k-s_p \beta_l \textbf{W}_k^\text{H} \textbf{a}(\textbf{x}) \right\|}_2^2} {\sigma^2}}}.
\vspace{-1.5mm}
\end{equation}

%$\hat {\boldsymbol{\psi}}_k \triangleq \big[\hat{\beta}_k^\text{re},\hat{\beta}_k^\text{im},\hat x_{k,1},\hat {x}_{k,2}\big]^\text{T}$

In time-slot $k$, the receiver needs to choose an analog beamforming matrix $\textbf{W}_k$ and obtain an estimate $\hat {\boldsymbol{\psi}}_k \triangleq \big[\hat{\beta}_k^\text{re},\hat{\beta}_k^\text{im},$
$\hat x_{k,1},\hat {x}_{k,2}\big]^\text{T}$ of the channel parameter vector $\boldsymbol{\Psi}_k$. From a control system perspective, ${\boldsymbol{\psi}}_k$ is the system state, $\hat{{\boldsymbol{\psi}}}_k$ is the estimate of the system state, the analog beamforming matrix $\textbf{W}_k$ is the control action and $\textbf{y}_k$ is a non-linear noisy observation determined by the system state and control action.
%satisfy the following conditions: (i)~The information that is available for determining the beamforming vector $\textbf{w}_n$ includes the history of received signals $(y_i: i = 1, \ldots, n-1)$ and the history of beamforming vectors $(\textbf{w}_i: i = 1, \ldots, n-1)$. (ii)~The beamforming vectors $\textbf{w}_1, \textbf{w}_2, \ldots$ satisfy (\ref{eq_bf}).

%The goal of this paper is to develop an efficient beam tracking algorithm that can achieve high accuracy with a low pilot overhead.

\vspace{-1mm}
\section{Problem Formulation and Optimal Beamforming Matrix}\label{sec_problem}
%%In this section, we first formulate the problem and give the lower bound of it. {\color{blue}Then we develop a general way to generate the optimal beamforming vectors and obtain the minimum CRLB.}
\subsection{Problem Formulation}\label{sbsec_ProblemBound}

Let $\zeta = (\textbf{W}_1, \textbf{W}_2, \ldots, \hat {\boldsymbol{\psi}}_1, \hat {\boldsymbol{\psi}}_2, \ldots)$ denote a beam and channel tracking scheme. We consider a particular set $\Xi$ of \emph{causal} beam tracking policies: in time-slot $k$, the analog beamforming matrix $\textbf{W}_k$ and estimate $\hat {\boldsymbol{\psi}}_k$ are based on the previously used analog beamforming matrix $\textbf{W}_1,\cdots, \textbf{W}_{k-1}$ and historical observations $\textbf{y}_1,\cdots, \textbf{y}_{k-1}$. Hence, in $k$-th time-slot, the beam and channel tracking problem is formulated as:
\vspace{-1.5mm}
\begin{align}\label{eq_problem}
  \underset{\zeta \in \Xi}{\min} ~& \frac {1}{MN} \,\mathbb{E} \left[{\left\|\hat{\textbf{h}}_k - \textbf{h}_k\right\|}_2^2 \right] \\
\label{eq_constrant1} \text{s.t.} ~& \mathbb{E}\left[\hat{\textbf{h}}_k\right] = \textbf{h}_k,\\
~&\eqref{eq_channelvector}-\eqref{eq_observation}, \nonumber
\vspace{-1.5mm}
\end{align}where the constraint \eqref{eq_constrant1} ensures that $\hat{\textbf{h}}_k \triangleq \hat {\beta}_k \textbf{a} \left(\hat {\textbf{x}}_k\right)$ is an unbiased estimation of the channel vector $\textbf{h}_k = \beta_k \textbf{a}\left(\textbf{x}_{k}\right)$ and the constraints \eqref{eq_channelvector}-\eqref{eq_observation} ensure the steering vector form of analog beamforming vectors.

Problem \eqref{eq_problem} is difficult to solve optimally due to several reasons: (i) it is a constrained partially observed Markov decision process (C-POMDP) that is usually quite difficult to solve. (ii) The analog beamforming matrix $\textbf{W}_k$ and the estimate $\hat{\boldsymbol{\psi}}_k$ need to be optimized. However, both the optimization of $\textbf{W}_{k}$ and $\hat{\boldsymbol{\psi}}_k$ are non-convex problems.

Before giving some theoretical results of problem \eqref{eq_problem}, we will first study the pilot overhead needed for beam and channel tracking in 2D phased antenna arrays.
%{\red which introduces multiple stable equilibrium points to the tracking .}
%\begin{align}\label{eq_problem}
%	& \underset{\begin{matrix}F_n(\cdot), W_{mn}(\cdot)  \\ m = 1,\ldots, M \end{matrix}}{\min} ~\mathbb{E}\left[ \left( \hat{x}_{n} - x \right)^2 \right] \\
%	&~~~~~~~\text{s.t.}\ \ \hat{x}_{n} = F_n(y_1, \ldots, y_n, \textbf{w}_1, \ldots, \textbf{w}_n), \mathbb{E}\left[ \hat{x}_{n} \right] = x, \nonumber \\
%	&~~~~~~~~~~~~\textbf{w}_n = \frac{1}{\sqrt{M}}\left[ e^{jw_{1n}}, e^{jw_{2n}}, \ldots, e^{jw_{Mn}} \right]^\text{H}, \nonumber \\
%&~~~~~~~~~~~~w_{mn} = W_{mn}(y_1, \ldots, y_{n-1}, \textbf{w}_1, \ldots, \textbf{w}_{n-1}), \nonumber \\
%&~~~~~~~~~~~~m = 1,\ldots, M. \nonumber
%\end{align}
% which is different from the traditional MMSE problem in full-digital MIMO systems \cite{Biguesh2006Training}, since only one combined signal $y_n$ is obtained in each time-slot, which is dependent on the analog beamforming vector $\textbf{w}_n$. Therefore, it is a joint optimization problem of both the analog beamforming vector $\textbf{w}_n$ and the estimate $\hat{x}_{n}$. Moreover, due to the constant-modulus constraint in (\ref{eq_constraint}), both the optimization of the analog beamforming vector $\textbf{w}_n$ and the optimization of the estimate $\hat{x}_{n}$ are non-convex \cite{Rial2016Hybrid,Heath2016overview, zhang2016mobile,palacios2016tracking}.
%We derive the optimum MSE that can be achieved for this problem.
\vspace{-1mm}
\subsection{How Many Pilots Are Needed? }\label{sbsec_LeastPilot}
%Since $\textbf{W}_k = \left[\textbf{w}_{k,1},\textbf{w}_{k,2},\ldots,\textbf{w}_{k,q}\right]$ is composed of $q$ beamforming vectors, we should determine the pilot number $q$ first before optimizing $\textbf{W}$ to get the minimum CRLB in \eqref{eq_CMMSE}.
%With one observation by using one trial beamforming vector, we can obtain two real number equations. When tracking the normalized beam direction $\textbf{x}_k$ and channel coefficient $\beta_k$ simultaneously, four real variables need to be estimated and four independent real number equations are required in each time-slot.
\vspace{-0.5mm}
According to \cite{JLiJoint2018}, two pilots in each time-slot are sufficient to jointly track the channel coefficient and 1D beam direction. When tracking the horizontal and vertical beam direction
simultaneously, four pilots  are feasible by separately using two pilots to track each dimension of the 2D beam direction. However, with four pilots, the channel coefficient is updated twice in each time-slot, possibly leading to redundancy. Hence, we can jointly track channel coefficient and 2D beam direction to further reduce pilot overhead.

When tracking the channel parameters jointly, four real variables (i.e., the real part $\beta_k^{\text{re}}$ and imaginary part $\beta_k^{\text{im}}$ of channel coefficient $\beta_k$ and the two direction parameters ${x}_{k,1}, {x}_{k,2}$) need to be estimated. Then the following lemma is proposed to help determine the smallest $q$:
\begin{lemma}\label{IndObservations}
\emph{If the analog beamforming vectors are steering vectors, i.e.,} $\textbf{w}_{k,i} \!=\!\! \frac{1}{\sqrt{MN}}\textbf{a}\!\left(\textbf{x}\!+\!{\boldsymbol{\Delta}}_{k,i}\right)$\emph{, then at least} $q$ \emph{observations are needed  to estimate} $q+1$ \emph{real variables in time-slot} $k$.
\end{lemma}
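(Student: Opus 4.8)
The plan is to turn ``estimable'' into a rank condition on the Fisher information and then to read off the number of pilots from the structure that the steering-vector constraint imposes on the mean of $\textbf{y}_k$. First I would write the noiseless part of \eqref{eq_observation} for the $i$-th pilot as $s_p\beta_k c_{k,i}$ with $c_{k,i}\triangleq\textbf{w}_{k,i}^{\text{H}}\textbf{a}(\textbf{x}_k)$, and stack the $q$ pilots into the mean vector $\boldsymbol{\mu}(\boldsymbol{\psi}_k)=s_p\beta_k\,\textbf{c}_k$, where $\textbf{c}_k=[c_{k,1},\ldots,c_{k,q}]^{\text{T}}$. Since the entries of $\textbf{z}_k$ are i.i.d.\ $\mathcal{CN}(0,\sigma^2)$, a set of real variables admits a finite-variance unbiased estimator only if the corresponding Fisher information matrix $\textbf{F}=\tfrac{2}{\sigma^{2}}\,\text{Re}\big[\textbf{J}^{\text{H}}\textbf{J}\big]$ is nonsingular, where the columns of the complex Jacobian $\textbf{J}$ are the derivatives of $\boldsymbol{\mu}$ with respect to those variables. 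As $\textbf{v}^{\text{T}}\textbf{F}\textbf{v}=\tfrac{2}{\sigma^{2}}\|\textbf{J}\textbf{v}\|_2^{2}$ for real $\textbf{v}$, nonsingularity of $\textbf{F}$ is equivalent to the $\mathbb{R}$-linear independence of the columns of $\textbf{J}$, so the whole lemma reduces to counting how many $\mathbb{R}$-independent columns $q$ pilots can generate.

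The decisive step is to compute these columns under the hypothesis $\textbf{w}_{k,i}=\tfrac{1}{\sqrt{MN}}\textbf{a}(\textbf{x}+\boldsymbol{\Delta}_{k,i})$. Differentiating $\boldsymbol{\mu}=s_p\beta_k\textbf{c}_k$ in $\beta_k^{\text{re}}$ and $\beta_k^{\text{im}}$ produces the two columns $s_p\textbf{c}_k$ and $js_p\textbf{c}_k$, i.e.\ a complex scaling of the single vector $\textbf{c}_k$. For a direction variable the key is that, with steering-vector beamforming, $c_{k,i}=\tfrac{1}{\sqrt{MN}}\textbf{a}(\textbf{x}+\boldsymbol{\Delta}_{k,i})^{\text{H}}\textbf{a}(\textbf{x}_k)$ factorizes across the two array axes, so $\partial c_{k,i}/\partial x_{k,a}$ equals $c_{k,i}$ times a scalar that depends on the pilot index $i$; hence $\partial\boldsymbol{\mu}/\partial x_{k,a}$ is again a coordinatewise rescaling of $\textbf{c}_k$. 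Consequently every column of $\textbf{J}$ lies in the set $\{\,\text{diag}(\textbf{v})\,\textbf{c}_k\,\}$, and $\textbf{v}\mapsto\text{diag}(\textbf{v})\textbf{c}_k$ is an $\mathbb{R}$-linear injection on the indices with $c_{k,i}\ne0$. Thus $\text{rank}(\textbf{F})$ equals the real dimension of the span of the scaling vectors attached to the unknowns, namely $\textbf{1}$, $j\textbf{1}$ (both carried by the common factor $\beta_k$) together with one vector per direction variable; normalizing by $\textbf{c}_k$ reduces the problem to a pure dimension comparison in $\mathbb{C}^{q}$.

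Assembling the two steps gives the statement: to make a $(q+1)\times(q+1)$ Fisher matrix nonsingular one needs $q+1$ $\mathbb{R}$-independent columns of the above form, and I would show that fewer than $q$ pilots cannot supply them, so at least $q$ observations are necessary. I expect the genuine obstacle to be precisely this last rank count: one must argue that the shared complex gain $\beta_k$ contributes exactly the single ``$+1$'' real dimension on top of the $q$ supplied by the pilots, rather than settling for the crude ambient bound $2q\ge q+1$, and one must handle the degenerate probes where some $c_{k,i}=0$ (a beam orthogonal to $\textbf{a}(\textbf{x}_k)$) by a continuity/limiting argument, since these can only decrease the rank and hence do not weaken the necessity. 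Once the scaling-vector span is shown to grow by at most one real dimension per added pilot, the bound $\text{rank}(\textbf{F})\le q+1$ and therefore the lemma follow.
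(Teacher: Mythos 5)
Your Fisher-information formalization is sound as far as it goes, and the structural observation that every column of the Jacobian has the form $\operatorname{diag}(\textbf{v})\,\textbf{c}_k$ is correct, but the proof stops exactly where the content of the lemma lies. As you concede, a ``pure dimension comparison in $\mathbb{C}^q$'' only yields $\operatorname{rank}(\textbf{F})\le 2q$, and nothing you have written rules out that value: if the per-pilot scalars $\rho_{i,a}=\partial \log c_{k,i}/\partial x_{k,a}$ were arbitrary complex numbers, two pilots could already make a $4\times4$ Fisher matrix nonsingular (take $\textbf{c}_k=(1,1)^{\text{T}}$, $\boldsymbol{\rho}_1=(1,0)^{\text{T}}$, $\boldsymbol{\rho}_2=(j,0)^{\text{T}}$: the four columns $\textbf{c}_k$, $j\textbf{c}_k$, $\operatorname{diag}(\boldsymbol{\rho}_1)\textbf{c}_k$, $\operatorname{diag}(\boldsymbol{\rho}_2)\textbf{c}_k$ are $\mathbb{R}$-independent in $\mathbb{C}^2\cong\mathbb{R}^4$). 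So the bound $\operatorname{rank}(\textbf{F})\le q+1$ cannot follow from the $\operatorname{diag}(\textbf{v})\textbf{c}_k$ structure alone; it requires a specific property of steering-vector probes that your write-up defers rather than proves, and that property is the entire lemma.

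The missing property is exactly what the paper's proof computes. Factoring $c_{k,i}$ into two Dirichlet kernels gives $c_{k,i}=A_i e^{j\phi_i}$ with $A_i$ a real (Dirichlet-ratio) amplitude and $\phi_i$ an affine function of the unknowns whose slope, $\pi\frac{M-1}{M}$ in $x_{k,1}$ and $\pi\frac{N-1}{N}$ in $x_{k,2}$, is the \emph{same for every pilot} (the paper's equations for $y_{k,i}$ and $\angle(y_{k,i})$, which is why all phase differences $\angle(y_{k,i})-\angle(y_{k,j})$ are known constants). Equivalently, $\rho_{i,a}=r_{i,a}+jc_a$ with $r_{i,a}$ real and $c_a$ independent of $i$. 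After the unitary normalization $\operatorname{diag}(e^{-j\phi_i})$ the scaling vectors become $\textbf{A}$, $j\textbf{A}$, $\operatorname{diag}(\textbf{r}_1)\textbf{A}+jc_1\textbf{A}$, $\operatorname{diag}(\textbf{r}_2)\textbf{A}+jc_2\textbf{A}$, whose real span lies in $\big(\text{a subspace of real vectors in }\mathbb{R}^q\big)\oplus\mathbb{R}(j\textbf{A})$ and hence has dimension at most $q+1$; this is precisely the paper's count of ``$q$ independent amplitude equations plus one independent phase equation.'' If you supply this computation, your route becomes a rigorous (arguably cleaner than the paper's informal equation counting) version of the same argument; without it, the proof is incomplete, and your diagnosis that the issue is ``the shared gain $\beta_k$'' slightly misplaces it --- the $+1$ cap comes from the pilot-independence of the phase slopes in the direction variables, not merely from $\beta_k$ being common to all pilots.
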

\begin{proof}
See Appendix \ref{proof_IndObservations}.
\end{proof}

Lemma \ref{IndObservations} tells us at least three observations are required in each time-slot to estimate four real variables. Hence, the smallest pilot number in each time-slot is $q = 3$, i.e., the analog beamforming matrix $\textbf{W}_k = \left[\textbf{w}_{k,1},\textbf{w}_{k,2},\textbf{w}_{k,3}\right]$.

\subsection{Lower Bound of Tracking Error}\label{sbsec_Bound}
\vspace{-0.5mm}
The huge challenge to solve problem \eqref{eq_problem} optimally makes it hard to complete in just one paper. Therefore, we perform some theoretical analysis for static scenarios as the first step in this paper.

Consider the problem of tracking a static beam, where $ {\boldsymbol{\psi}}_k\! = \!{\boldsymbol{\psi}}\! \triangleq \! \left[\beta^\text{re},\beta^\text{im},x_{1},x_{2}\right]^\text{T}$ for all time-slots. The Cram\'{e}r-Rao lower bound theory gives the lower bound of the unbiased estimation error according to \cite{Sengijpta1993Fundamental}. Based on this, we introduce the following lemma to obtain the lower bound of tracking error:
\begin{lemma}\label{MSEOpt}
\emph{The MSE of channel vector in \eqref{eq_problem} is lower bounded as follows:}
\vspace{-1.5mm}
\begin{align}\label{MSELB}
&~\frac{1}{{MN}} \mathbb{E}\left[\left\| \hat{\textbf{h}}_k - \textbf{h}_k \right\|_2^2 \right]\\
 \ge&~ \frac{1}{{MN}}\Tr \left\{ {{{\left(\sum\limits_{l = 1}^k {{\bf{I}}(\boldsymbol{\psi} ,{{\bf{W}}_l})} \right)}^{-1}}\sum\limits_{m = 1}^M {\sum\limits_{n = 1}^N {\left({\textbf{v}_{m,n}^\text{H}}{\textbf{v}_{m,n}}\right)} } } \right\},\nonumber
\end{align}
\vspace{-0mm}\emph{where} $\textbf{v}_{m,n} \triangleq \left[1,j,{j 2 \pi \frac {m-1}{M} \beta}, {j 2 \pi \frac {n-1}{N} \beta} \right]$ \emph{and the Fisher information matrix} $\textbf{I}(\boldsymbol{\psi}, \textbf{W}_k)$ \emph{is given by}
\vspace{-1.5mm}
\begin{small}
\begin{align}\label{eq_fisher}
&\!\textbf{I}(\boldsymbol{\psi}, \textbf{W}_k)\triangleq  \!- \mathbb{E}\left[\frac {\partial \text{log} \, p \left(\textbf{y}_k |\boldsymbol{\psi},\textbf{W}_k \right)}{\partial \boldsymbol{\psi}} \cdot \frac {\partial \text{log}\, p \left(\textbf{y}_k |\boldsymbol{\psi},\textbf{W}_k \right)}{\partial \boldsymbol{\psi}^\text{T}}\right]\\
=&\frac {2 {\lvert s_p \rvert}^2} {{\sigma}^2}\!
\left[
\begin{matrix}
{{\lVert \textbf{g}_k \rVert}_2^2} & 0 & {\text{Re} \{ {\textbf{g}}_k^\text{H} \tilde {\textbf{g}}_{k1}\} } & {\text{Re}  \{ {\textbf{g}}_k^\text{H} \tilde {\textbf{g}}_{k2}\} }\\
0 & \!{{\lVert \textbf{g}_k \rVert}_2^2} & {\text{Im} \{ \textbf{g}_k^\text{H} \tilde {\textbf{g}}_{k1} \}} & {\text{Im} \{ \textbf{g}_k^\text{H} \tilde {\textbf{g}}_{k2} \}}\\
{\text{Re} \{ \textbf{g}_k^\text{H} \tilde {\textbf{g}}_{k1} \}} & \!{\text{Im} \{ \textbf{g}_k^\text{H} \tilde {\textbf{g}}_{k1} \}} &  {\lVert \tilde {\textbf{g}}_{k1} \rVert}_2^2 & {\text{Re} \{{\tilde {\textbf{g}}_{k1}}^\text{H}{\tilde {\textbf{g}}_{k2} \}}}\\
{\text{Re} \{ \textbf{g}_k^\text{H} \tilde {\textbf{g}}_{k2} \}} & {\text{Im} \{ \textbf{g}_k^\text{H} \tilde {\textbf{g}}_{k2} \}} & \!{\text{Re} \{{\tilde {\textbf{g}}_{k1}}^\text{H}{\tilde {\textbf{g}}_{k2}} \}} & {\lVert \tilde {\textbf{g}}_{k2} \rVert}_2^2
\end{matrix}
\right],\nonumber
\vspace{-1.5mm}
\end{align}
\end{small}with $\textbf{g}_k\!\!=\!\!\textbf{W}_k^\text{H}\textbf{a}\left(\textbf{x}\right)$, $\tilde {\textbf{g}}_{k1}\!\!=\!\!\beta \textbf{W}_k^\text{H}\frac {\partial \textbf{a}\left(\textbf{x}\right)}{\partial x_1}$, and $\tilde {\textbf{g}}_{k2}\!\!=\!\!\beta \textbf{W}_k^\text{H}\frac {\partial \textbf{a}\left(\textbf{x}\right)}{\partial x_2}$.
\end{lemma}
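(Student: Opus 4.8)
The plan is to recognize the claimed bound as a Cram\'{e}r--Rao lower bound (CRLB) applied to the vector-valued reparametrization $\boldsymbol{\psi} \mapsto \textbf{h}(\boldsymbol{\psi}) = \beta\,\textbf{a}(\textbf{x})$, combined with the additivity of Fisher information across independent time-slots. First I would use the unbiasedness constraint \eqref{eq_constrant1}: since $\mathbb{E}[\hat{\textbf{h}}_k] = \textbf{h}_k$, the per-antenna MSE equals the normalized trace of the error covariance,
\begin{equation}
\frac{1}{MN}\mathbb{E}\big[\|\hat{\textbf{h}}_k - \textbf{h}_k\|_2^2\big] = \frac{1}{MN}\Tr\big(\mathbb{E}[(\hat{\textbf{h}}_k - \textbf{h}_k)(\hat{\textbf{h}}_k - \textbf{h}_k)^\text{H}]\big),\nonumber
\end{equation}
so that lower-bounding the MSE reduces to lower-bounding this covariance in the positive-semidefinite order.

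Next I would establish the Fisher information matrix \eqref{eq_fisher}. From the Gaussian likelihood \eqref{eq_pdf}, the observation in time-slot $k$ is circularly symmetric complex Gaussian with mean $\boldsymbol{\mu}(\boldsymbol{\psi}) = s_p\beta\,\textbf{g}_k$ and parameter-independent covariance $\sigma^2\textbf{I}$, where $\textbf{g}_k = \textbf{W}_k^\text{H}\textbf{a}(\textbf{x})$. I would then invoke the standard complex-Gaussian formula $[\textbf{I}(\boldsymbol{\psi},\textbf{W}_k)]_{ab} = \frac{2}{\sigma^2}\text{Re}\{(\partial\boldsymbol{\mu}/\partial\psi_a)^\text{H}(\partial\boldsymbol{\mu}/\partial\psi_b)\}$ and substitute the four partial derivatives $\partial\boldsymbol{\mu}/\partial\beta^\text{re} = s_p\textbf{g}_k$, $\partial\boldsymbol{\mu}/\partial\beta^\text{im} = js_p\textbf{g}_k$, $\partial\boldsymbol{\mu}/\partial x_1 = s_p\tilde{\textbf{g}}_{k1}$, $\partial\boldsymbol{\mu}/\partial x_2 = s_p\tilde{\textbf{g}}_{k2}$; the real/imaginary pattern in \eqref{eq_fisher} then drops out directly (e.g., the $(1,2)$ entry vanishes because $\text{Re}\{j\|\textbf{g}_k\|_2^2\}=0$). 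Because the noise is i.i.d.\ across time-slots, the observations $\textbf{y}_1,\dots,\textbf{y}_k$ are conditionally independent given the beamforming matrices, so their joint log-likelihood is additive and the aggregate Fisher information is $\sum_{l=1}^k \textbf{I}(\boldsymbol{\psi},\textbf{W}_l)$.

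I would then apply the reparametrized CRLB: for an unbiased estimator $\hat{\textbf{h}}_k$ of the differentiable map $\textbf{h}(\boldsymbol{\psi})$, the error covariance obeys $\mathbb{E}[(\hat{\textbf{h}}_k-\textbf{h}_k)(\hat{\textbf{h}}_k-\textbf{h}_k)^\text{H}] \succeq \textbf{J}\big(\sum_{l=1}^k\textbf{I}(\boldsymbol{\psi},\textbf{W}_l)\big)^{-1}\textbf{J}^\text{H}$, where $\textbf{J} = \partial\textbf{h}/\partial\boldsymbol{\psi}^\text{T}$ is the $MN\times 4$ Jacobian. Differentiating $h_{mn} = \beta\,a_{mn}(\textbf{x})$ gives the row $\textbf{J}_{mn} = a_{mn}(\textbf{x})\,\textbf{v}_{m,n}$ with $\textbf{v}_{m,n} = [1,\,j,\,j2\pi\tfrac{m-1}{M}\beta,\,j2\pi\tfrac{n-1}{N}\beta]$, and since $|a_{mn}(\textbf{x})|=1$ I obtain $\textbf{J}^\text{H}\textbf{J} = \sum_{m=1}^M\sum_{n=1}^N \textbf{v}_{m,n}^\text{H}\textbf{v}_{m,n}$. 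Taking the normalized trace, using monotonicity of the trace under the PSD order and the cyclic property $\Tr(\textbf{J}\textbf{A}^{-1}\textbf{J}^\text{H}) = \Tr(\textbf{A}^{-1}\textbf{J}^\text{H}\textbf{J})$, yields exactly \eqref{MSELB}.

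The main obstacle I anticipate is the rigorous justification of the CRLB in this complex-valued, sequentially-observed setting rather than the algebra. Two points need care: (i) the transformation CRLB must be stated for a complex vector function of a real parameter vector, which I would either reduce to the real CRLB by stacking the real and imaginary parts of $\hat{\textbf{h}}_k$ or handle through a complex-gradient version; and (ii) the matrices $\textbf{W}_2,\dots,\textbf{W}_k$ are themselves functions of past observations under a causal policy in $\Xi$, so the bound should be read as holding conditionally on the realized beamforming sequence (equivalently, the additivity of Fisher information is applied to the likelihood conditioned on $\textbf{W}_1,\dots,\textbf{W}_k$). Both can be resolved with standard CRLB regularity arguments, leaving the computation of $\textbf{J}^\text{H}\textbf{J}$ and the Fisher matrix as routine.
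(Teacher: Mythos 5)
Your proposal is correct and follows essentially the same route as the paper: both apply the reparametrized Cram\'{e}r--Rao bound to the map $\boldsymbol{\psi}\mapsto\textbf{h}(\boldsymbol{\psi})$, compute the Jacobian rows $a_{mn}(\textbf{x})\textbf{v}_{m,n}$ with $|a_{mn}(\textbf{x})|=1$, and sum over antenna elements to obtain the trace expression. The only cosmetic difference is that the paper applies the scalar transformation CRLB to each component $h_{mn}$ and sums, whereas you state the matrix-form covariance bound and take the trace --- these are equivalent, and your added remarks on Fisher-information additivity and conditioning on the causal beamforming sequence are consistent with (if slightly more explicit than) the paper's treatment.
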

\begin{proof}
See Appendix \ref{proof_MSEOpt}.
\end{proof}

The CRLB in \eqref{MSELB} is a function of the analog beamforming matrices $\textbf{W}_1,\ldots,\textbf{W}_k$. It is hard to optimize so many beamforming matrices simultaneously. Suppose that $\textbf{W}_1=\textbf{W}_2=\ldots=\textbf{W}_k$. Then we can get the \emph{minimum CRLB} under this constriant, given by
\vspace{-1.5mm}
\begin{small}
\begin{align}\label{eq_CMMSE}
\!{I}_{\min}(\boldsymbol{\psi})=&\min_{\textbf{W}_1,\ldots,\textbf{W}_k}\!\frac{1}{{MN}}\!\!\Tr\!\! \left\{\!\! {{{\!\left(\sum\limits_{l = 1}^k {{\bf{I}}(\boldsymbol{\psi} ,{{\bf{W}}_l})} \!\!\right)}^{\!\!-1}}\!\!\!\sum\limits_{m = 1}^M\! {\sum\limits_{n = 1}^N \!\!{\left(\!{\textbf{v}_{m,n}^\text{H}\!}{\textbf{v}_{m,n}}\!\!\right)} } } \!\!\!\right\} \nonumber\\
=& \min_{\textbf{W}}\!\frac{1}{{MN}}\!\Tr\! \left\{\! {{{\left( k\textbf{I} \left( \boldsymbol{\psi}, \textbf{W} \right) \right)}^{\! - 1}}\!\sum\limits_{m = 1}^M {\sum\limits_{n = 1}^N {{\!\left({\textbf{v}_{m,n}^\text{H}\!}{\textbf{v}_{m,n}}\!\!\right)} } } } \!\!\right\}\!\!.\!\!\!
%\overset{(b)}{=}& \min_{\textbf{W}} u\left(\boldsymbol{\psi},\textbf{W}\right),%=&u\left(\boldsymbol{\psi},\textbf{W}^*\right),\nonumber
\vspace{-1.5mm}
\end{align}
\end{small}Solving problem \eqref{eq_CMMSE} yields the optimal analog beamforming matrix $\textbf{W}^*=\left[\textbf{w}_1^*,\textbf{w}_2^*,\textbf{w}_3^*\right]$:
\vspace{-1.5mm}
\begin{equation}\label{eq_bfo}
\textbf{w}_{i}^* = \frac{1}{\sqrt{MN}}\textbf{a}\left(\textbf{x}+ \boldsymbol{\Delta}_{i}^*\right),i=1,2,3,
\vspace{-1.5mm}
\end{equation}where $\boldsymbol{\Delta}_{1}^*,\boldsymbol{\Delta}_{2}^*,\boldsymbol{\Delta}_{3}^*$ denote the optimal direction parameter offsets. Hence, let $\textbf{W}_1^{*}=\textbf{W}_2^{*}=\cdots\textbf{W}_k^{*}=\textbf{W}^{*}$ and we can obtain the minimum CRLB by \eqref{eq_CMMSE}.

\subsection{Asymptotically Optimal Analog Beamforming Matrix}\label{sbsec_OptBeamforming}
\vspace{-0.5mm}
Let us consider the optimal analog beamforming matrix $\textbf{W}^*$. In \eqref{eq_CMMSE}, three 2D direction parameter offsets need to be optimized. It is hard to get analytical results for such a six-dimensional non-convex problem. Numerical search is a feasible way to handle the problem. However, these optimal offsets may be related to some system parameters, e.g., channel coefficient $\beta$, direction parameter vector $\textbf{x}$ and antenna array size $M,\,N$. Once these system parameters change, numerical search has to be re-conducted, leading to high complexity. To overcome this challenge, we explore the properties of $\boldsymbol{\Delta}_{1}^{*},\boldsymbol{\Delta}_{2}^{*},\boldsymbol{\Delta}_{3}^{*}$ and obtain the following lemma:
\begin{lemma}\label{UnifiedOptShift}
\emph{The optimal direction parameter offsets $\boldsymbol{\Delta}_{1}^{*},\boldsymbol{\Delta}_{2}^{*},\boldsymbol{\Delta}_{3}^{*}$ have the following three properties}:

\emph{1)} $\boldsymbol{\Delta}_{1}^{*},\boldsymbol{\Delta}_{2}^{*},\boldsymbol{\Delta}_{3}^{*}$ \emph{are invariant to the channel coefficient} $\beta$;

\emph{2)} $\boldsymbol{\Delta}_{1}^{*},\boldsymbol{\Delta}_{2}^{*},\boldsymbol{\Delta}_{3}^{*}$ \emph{are invariant to the direction parameter vector} $\textbf{x}$;

\emph{3)} $\boldsymbol{\Delta}_{1}^{*},\boldsymbol{\Delta}_{2}^{*},\boldsymbol{\Delta}_{3}^{*}$ \emph{converge to constant values as} $\emph{M},\,\emph{N} \to +\infty$:
\vspace{-1mm}
\begin{align}\label{eq_MainLobe}
\widetilde{\boldsymbol{\Delta}}_{i}^{*} \triangleq {\lim\limits_{M,N \to +\infty}}\boldsymbol{\Delta}_{i}^{*},i=1,2,3. \nonumber
\end{align}
\vspace{-3mm}
\end{lemma}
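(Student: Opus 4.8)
The plan is to reduce the minimization in \eqref{eq_CMMSE} to one whose objective is free of $\beta$ and $\textbf{x}$, and then to pass to the large-array limit by a Riemann-sum argument. The starting point is that, since $\textbf{w}_{i}=\frac{1}{\sqrt{MN}}\textbf{a}(\textbf{x}+\boldsymbol{\Delta}_i)$, the phase $\textbf{x}$ cancels in every inner product entering \eqref{eq_fisher}: a direct computation gives $\textbf{w}_i^\text{H}\textbf{a}(\textbf{x})=\frac{1}{\sqrt{MN}}\sum_{m,n}e^{-j2\pi(\frac{m-1}{M}\Delta_{i,1}+\frac{n-1}{N}\Delta_{i,2})}$, a function of $\boldsymbol{\Delta}_i$ and $M,N$ only, and the same cancellation holds for $\textbf{w}_i^\text{H}\frac{\partial\textbf{a}}{\partial x_1}$ and $\textbf{w}_i^\text{H}\frac{\partial\textbf{a}}{\partial x_2}$. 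Hence I can write $\textbf{g}_k=\textbf{g}(\boldsymbol{\Delta})$, $\tilde{\textbf{g}}_{k1}=\beta\,\textbf{g}_1(\boldsymbol{\Delta})$, $\tilde{\textbf{g}}_{k2}=\beta\,\textbf{g}_2(\boldsymbol{\Delta})$, with $\textbf{g},\textbf{g}_1,\textbf{g}_2$ depending on the offsets and array size only. This already settles Property 2: neither \eqref{eq_fisher} nor $\textbf{v}_{m,n}$ carries $\textbf{x}$, so the whole objective of \eqref{eq_CMMSE}, and therefore its minimizer, is independent of $\textbf{x}$.

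For Property 1 I would isolate the $\beta$-dependence through a common congruence factor. Writing $\beta=|\beta|e^{j\phi}$ and $\textbf{Q}=\left[\begin{smallmatrix}\cos\phi & -\sin\phi\\ \sin\phi & \cos\phi\end{smallmatrix}\right]$, the block pattern of \eqref{eq_fisher} (top-left block proportional to $\textbf{I}_2$, off-diagonal block carrying a single factor of $\beta$, bottom-right block carrying $|\beta|^2$) lets me factor $\textbf{I}(\boldsymbol{\psi},\textbf{W})=\textbf{P}\,\textbf{I}_0(\boldsymbol{\Delta})\,\textbf{P}^\text{T}$, where $\textbf{P}=\mathrm{diag}(\textbf{Q},|\beta|\textbf{I}_2)$ depends on $\beta$ only and $\textbf{I}_0$ depends on $\boldsymbol{\Delta}$ (and $M,N$) only. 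Since the Fisher matrix is real symmetric, the skew-symmetric part of $\textbf{V}\triangleq\sum_{m,n}\textbf{v}_{m,n}^\text{H}\textbf{v}_{m,n}$ contributes nothing to the trace, so $\textbf{V}$ may be replaced by $\mathrm{Re}(\textbf{V})$, which has the identical block structure and factors as $\mathrm{Re}(\textbf{V})=\textbf{P}\,\textbf{V}_0\,\textbf{P}^\text{T}$ with the same $\textbf{P}$ and a constant $\textbf{V}_0$. The cyclic property of the trace then collapses the objective to $\Tr\{\textbf{I}_0(\boldsymbol{\Delta})^{-1}\textbf{V}_0\}$ times the $\beta$-free scalar $\frac{\sigma^2}{2|s_p|^2\,k\,MN}$, which is manifestly independent of $\beta$; hence $\boldsymbol{\Delta}^*$ is too.

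For Property 3 I would take $M,N\to\infty$ inside the reduced objective $\Tr\{\textbf{I}_0(\boldsymbol{\Delta})^{-1}\textbf{V}_0\}$. Each entry of $\textbf{I}_0$ and $\textbf{V}_0$ is $MN$ times a sum of the form $\frac{1}{MN}\sum_{m,n}\phi(\frac{m-1}{M})\psi(\frac{n-1}{N})e^{-j2\pi(\cdots)}$, so after dividing by $MN$ these become Riemann sums converging to products of one-dimensional integrals over $[0,1]$; thus $\frac{1}{MN}\textbf{I}_0\to\bar{\textbf{I}}_0(\boldsymbol{\Delta})$ and $\frac{1}{MN}\textbf{V}_0\to\bar{\textbf{V}}_0$, with limits depending on $\boldsymbol{\Delta}$ only. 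The $MN$ factors cancel in the ratio, so the objective converges to $\Phi_\infty(\boldsymbol{\Delta})\triangleq\Tr\{\bar{\textbf{I}}_0(\boldsymbol{\Delta})^{-1}\bar{\textbf{V}}_0\}$. Upgrading this to uniform convergence on the compact admissible offset region and invoking the standard consistency argument — uniform convergence of the objectives plus a unique minimizer of the limit forces the minimizers to converge — yields $\boldsymbol{\Delta}_i^*\to\widetilde{\boldsymbol{\Delta}}_i^*\triangleq\arg\min\Phi_\infty$.

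The hard part will be Property 3. The factorizations behind Properties 1 and 2 are exact and only require matching block patterns, but the limit argument needs (i) a uniform rather than pointwise Riemann-sum estimate so that the inverse $\textbf{I}_0^{-1}$ stays controlled, which in turn requires a uniform lower bound on the smallest eigenvalue of $\frac{1}{MN}\textbf{I}_0$ away from the degenerate offset configurations where $\textbf{I}_0$ becomes singular, and (ii) uniqueness (or at least isolation) of the minimizer of $\Phi_\infty$, without which one can only extract convergent subsequences. I expect ruling out degeneracy and pinning down the limiting minimizer to carry most of the work.
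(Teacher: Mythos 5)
Your proposal is correct in substance and reaches all three properties, but it takes a genuinely different route to Property 1 than the paper does. For Property 2 you and the paper argue identically: the phase carrying $\textbf{x}$ cancels in every inner product $\textbf{w}_i^\text{H}\textbf{a}(\textbf{x})$, $\textbf{w}_i^\text{H}\partial\textbf{a}/\partial x_l$, so neither \eqref{eq_fisher} nor $\textbf{v}_{m,n}$ sees $\textbf{x}$ (the paper's \eqref{eq_gki}--\eqref{eq_gkti1}). For Property 1 the paper instead performs an explicit $2\times 2$ block inversion of \eqref{eq_FisherBlocks}, pulls $|\beta|^2$ out of the Schur complement ${\bf{D}}-{\bf{B}}^\text{T}{\bf{A}}^{-1}{\bf{B}}=|\beta|^2\textbf{I}_s(M,N)$ in \eqref{eq_Is}, splits $\textbf{T}(M,N,\beta)$ into a rank-one part plus a diagonal part, and verifies term by term that every $\beta$ and $\bar\beta$ cancels in \eqref{eq_CMMSETemp}--\eqref{eq_calsp}; your congruence factorization $\textbf{I}=\textbf{P}\,\textbf{I}_0\,\textbf{P}^\text{T}$ with $\textbf{P}=\mathrm{diag}(\textbf{Q},|\beta|\textbf{I}_2)$, together with the observation that the skew part of $\sum\textbf{v}_{m,n}^\text{H}\textbf{v}_{m,n}$ drops out of the trace and that $\mathrm{Re}(\textbf{V})$ admits the same congruence, collapses all of this into one cyclic-trace identity. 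I checked the block pattern: the off-diagonal block of \eqref{eq_fisher} is $|\beta|\textbf{Q}\textbf{B}_0$ and that of $\mathrm{Re}(\textbf{V})$ in \eqref{eq_sumrank1} is likewise $|\beta|\textbf{Q}\textbf{M}_0$ with $\textbf{B}_0,\textbf{M}_0$ free of $\beta$, so your factorization is exact and your argument is shorter and more structural; what the paper's longer computation buys is the explicit decomposition \eqref{eq_CMMSETemp} that it then reuses in Step 3. For Property 3 both proofs are the same Riemann-sum/limit argument ($\textbf{I}/MN\to\textbf{I}_\text{L}$ in \eqref{eq_IL}, $\textbf{T}/MN\to\textbf{T}_\text{L}$ in \eqref{eq_TL}, hence the normalized objective \eqref{eq_CMMSEL} converges); note that the paper then simply asserts that convergence of the objectives forces convergence of the minimizers, whereas you correctly flag that this step additionally needs uniform convergence on the admissible offset set, a uniform eigenvalue bound keeping $\textbf{I}_0^{-1}$ controlled, and isolation of the limiting minimizer --- so on this point your plan is, if anything, more careful than the published argument, though you would still have to carry those estimates out to close it.
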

\begin{proof}
See Appendix \ref{proof_UnifiedOptShift}.
\end{proof}

Lemma \ref{UnifiedOptShift} reveals that $\boldsymbol{\Delta}_{1}^{*},\boldsymbol{\Delta}_{2}^{*},\boldsymbol{\Delta}_{3}^{*}$ are only related to array size $M,\,N$. Hence, the numerical search complexity can be reduced to one for a particular array size $M,\,N$. Even if $\boldsymbol{\Delta}_{1}^{*},\boldsymbol{\Delta}_{2}^{*},\boldsymbol{\Delta}_{3}^{*}$ may change for different array sizes, we can adopt $\widetilde{\boldsymbol{\Delta}}_{1}^{*},\,\widetilde{\boldsymbol{\Delta}}_{2}^{*},\widetilde{\boldsymbol{\Delta}}_{3}^{*}$ to take the place of $\boldsymbol{\Delta}_{1}^{*},\boldsymbol{\Delta}_{2}^{*},\boldsymbol{\Delta}_{3}^{*}$ as long as $M$ and $N$ are sufficiently large. Therefore, the numerical search times are reduced to one.

By numerical search in the main lobe of the direction parameter vector:
\vspace{-1mm}
\begin{equation}\label{eq_MainLobe}
\begin{aligned} \mathcal{B}\left(\textbf{x}\right) \triangleq  \left(x_1-1,x_1+1\right) \times \left(x_2-1,x_2+1\right),
\end{aligned}
\vspace{-1mm}
\end{equation}we can obtain the asymptotically optimal direction parameter offsets $\widetilde{\boldsymbol{\Delta}}_{1}^{*},\,\widetilde{\boldsymbol{\Delta}}_{2}^{*},\widetilde{\boldsymbol{\Delta}}_{3}^{*}$ in TABLE \ref{Tab_asymptotically optimal_parameters} and Fig. \ref{fig_offsets}.
\renewcommand\arraystretch{1.4}
\begin{table}[!t!]
% \vspace{-5mm}
\centering
\caption{Asymptotically optimal offsets.}
\vspace{-3mm}
\begin{tabular} {c|c|c}
\hline
\hline
%$\tilde{\delta}_{k,11}^{*}$ & $\tilde{\delta}_{k,12}^{*}$ & $\tilde{\delta}_{k,21}^{*}$ & $\tilde{\delta}_{k,22}^{*}$ & $\tilde{\delta}_{k,31}^{*}$ & $\tilde{\delta}_{k,32}^{*}$ \\
$\widetilde{\boldsymbol{\Delta}}_{1}^{*}$ & $\widetilde{\boldsymbol{\Delta}}_{2}^{*}$ & $\widetilde{\boldsymbol{\Delta}}_{3}^{*}$\\
\hline
$\left[0.0963, 0.5098\right]^\text{T}$ & $\left[-0.5098,-0.0963\right]^\text{T}$ & $\left[0.2906,-0.2906\right]^\text{T}$\\
\hline
\hline
\end{tabular}\label{Tab_asymptotically optimal_parameters}
\vspace{-4mm}
\end{table}
\begin{figure}[!t!]
% \vspace{-5mm}
\centering
\includegraphics[width=7.5cm]{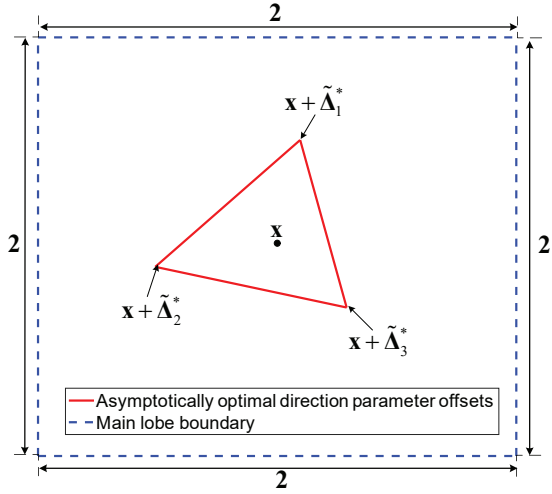}
\vspace{-4mm}
\caption{Asymptotically optimal offsets.}
\vspace{-2mm}
\label{fig_offsets}
\end{figure}With these offsets, a general way to generate the asymptotically optimal analog beamforming matrix $\widetilde{\textbf{W}}_k^{*}=[\tilde{\textbf{w}}_{k,1}^{*},\tilde{\textbf{w}}_{k,2}^{*},\tilde{\textbf{w}}_{k,3}^{*}]$ is obtained to achieve the minimum CRLB as below:
\vspace{-1mm}
\begin{equation}\label{eq_obf}
\tilde{\textbf{w}}_{k,i}^{*} = \frac{1}{\sqrt{MN}}\textbf{a}\left(\textbf{x}+\widetilde{\boldsymbol{\Delta}}_{i}^{*}\right),i=1,2,3.
\end{equation}
\vspace{-1mm}

By adopting $\widetilde{\boldsymbol{\Delta}}_{1}^{*},\,\widetilde{\boldsymbol{\Delta}}_{2}^{*},\widetilde{\boldsymbol{\Delta}}_{3}^{*}$ to smaller size antenna arrays, we compare the minimum CRLB and the CRLB corresponding to $\widetilde{\boldsymbol{\Delta}}_{1}^{*},\,\widetilde{\boldsymbol{\Delta}}_{2}^{*},\widetilde{\boldsymbol{\Delta}}_{3}^{*}$ in TABLE \ref{Tab_asymptotically optimal_parameters}.
\begin{figure}[!t]
\centering
\includegraphics[width=3.33in]{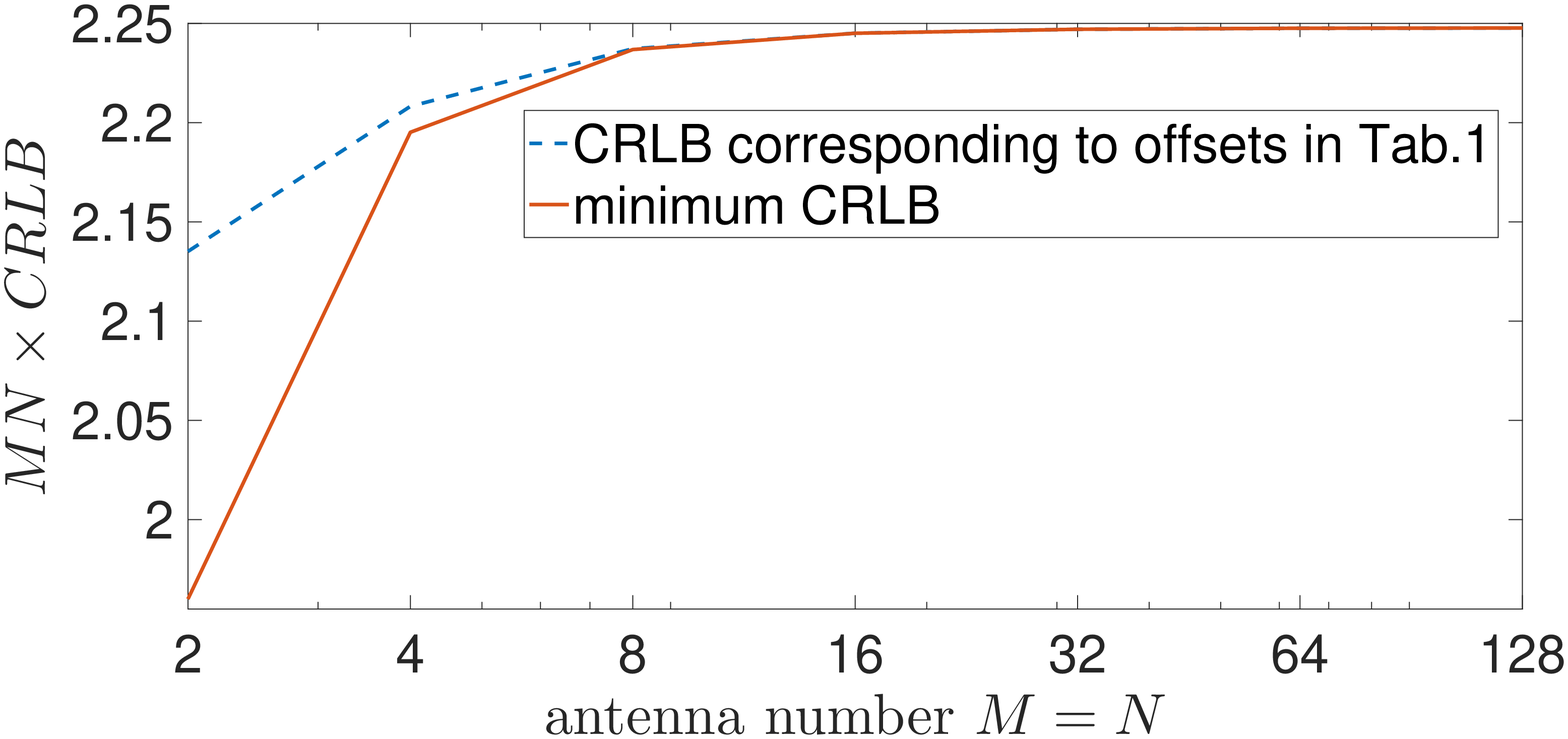}
\vspace{-3mm}
\caption{Performance of offsets in TABLE \ref{Tab_asymptotically optimal_parameters}}
\vspace{-5mm}
\label{fig_Extensibility.eps}
\end{figure}As illustrated in Fig. \ref{fig_Extensibility.eps}, when antenna number $M=N \geq 8$, we can approach the minimum CRLB with a relative error less than $0.1\%$ by using $\widetilde{\boldsymbol{\Delta}}_{1}^{*},\,\widetilde{\boldsymbol{\Delta}}_{2}^{*},\widetilde{\boldsymbol{\Delta}}_{3}^{*}$. Therefore, with $\widetilde{\boldsymbol{\Delta}}_{1}^{*},\,\widetilde{\boldsymbol{\Delta}}_{2}^{*},\widetilde{\boldsymbol{\Delta}}_{3}^{*}$, the minimum CRLB is obtained
for different beam directions, different channel coefficients and different antenna numbers when $M=N \geq 8$.
\section{Asymptotically Optimal Joint Beam and Channel Tracking}\label{sec_algorithm&AsymptoticallyOptimality}
\subsection{Joint Beam and Channel Tracking}\label{sbsec_Tracking}
\vspace{-0.5mm}
The proposed tracking algorithm is similar to that in \cite{JLiJoint2018}. The main difference is that we need $M \times N$ pilots to estimate the initial direction parameter offsets and three analog beamforming vectors to track the time-varying beam direction.
\\
\textbf{Joint Beam and Channel Tracking}:
% which consists of two stages:
%Recursive Analog Beam Tracking Algorithm:}
\begin{itemize}[leftmargin=*]
	\item[1)]\emph{\textbf{Coarse Beam Sweeping:}}
	%\end{itemize} %Use the exhaustive beam sweeping algorithm \cite{Hur2013Millimeter}.
	As shown in Fig. \ref{FrameStrcucture}, $M \times N$ pilots are received successively. The analog beamforming vector corresponding to the observation $\check{y}_{m,n}$ is  $\check{\textbf{w}}_{m,n}=\frac{1}{\sqrt{MN}} \textbf{a}\left(\left[\left(2m-1-M\right)\frac{d_1}{\lambda},\left(2n-1-N\right)\frac{d_2}{\lambda}\right]^\text{T}\right),m=1,\cdots ,$ $M,n=1,\cdots,N$. The initial estimate $\hat {\boldsymbol{\psi}}_0 = \big[\hat{\beta}_0^\text{re},\hat{\beta}_0^\text{im},\hat{x}_{0,1},$ $\hat{x}_{0,2}\big]^\text{T}$ is obtained by:
	\vspace{-3mm}
	\begin{equation}\label{eq_RBCE}
	\begin{aligned}  \hat{\textbf{x}}_0 =  \underset{\hat{\textbf{x}} \in \chi}{\text{argmax}} \lvert \textbf{a}\left(\hat{\textbf{x}}\right)^\text{H} \check {\textbf{W}} \check {\textbf{y}} \rvert , \quad  \hat {\beta}_0 = \left[\check {\textbf{W}}^\text{H} \textbf{a}\left(\hat{\textbf{x}}_0\right) \right]^{+} \check {\textbf{y}},
	\end{aligned}
	\vspace{-3mm}
	\end{equation}
	%\begin{itemize}
	where\begin{small}
		$\chi \!=\! \!\left\{\!\left[\frac{\left(2m-1-M_0\right)M d_1}{\lambda M_0},\frac{\left(2n-1-N_0\right)N d_2}{\lambda N_0}\!\right]^\text{T}\!\bigg|\!\begin{smallmatrix}\begin{aligned}& m\!=\!1,\ldots ,M_0\\& n\!=\!1,\ldots,N_0
		\end{aligned}
		\end{smallmatrix}\!\!\right\}$\end{small}, $M_0 \times N_0$ is the codebook size with $M_0 \geq M$ and $N_0 \geq N$, $\check {\textbf{y}}=\left[\check{y}_{11},\check{y}_{12}\cdots ,\check{y}_{MN}\right]^\text{T}$, $\check {\textbf{W}}=\left[\check{\textbf{w}}_{11},\check{\textbf{w}}_{12},\cdots,\check{\textbf{w}}_{MN}\right]$, and $\textbf{X}^\text{+}=\left(\textbf{X}^\text{H} \textbf{X}\right)^{-1} \textbf{X}^\text{H}$.
	%\end{itemize}
	%\begin{itemize}
	\begin{figure}[!t]
		\centering
		\vspace{-0mm}
		\includegraphics[width=3.33in]{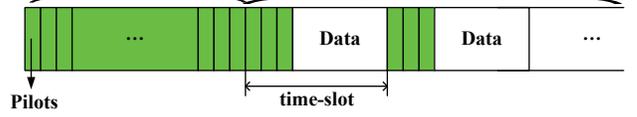}
		\vspace{-3mm}
		\caption{Frame structure.}
		\vspace{-5mm}
		\label{FrameStrcucture}
	\end{figure}
	\item[2)] \emph{\textbf{Beam and channel tracking:}}
	In time-slot $k$, three pilots are received by using analog beamforming vectors given below:
	\vspace{-2mm}
	\begin{equation}\label{eq_BF}
	\begin{aligned}\textbf{w}_{k,i} = \frac {1}{\sqrt{MN}} \textbf{a} \left(\hat{\textbf{x}}_{k-1} + \widetilde{\boldsymbol{\Delta}}_{i}^{*}\right), i =1,2,3,
	\end{aligned}
	\vspace{-2mm}
	\end{equation}where $\hat{\textbf{x}}_{k} \triangleq \left[\hat{x}_{k,1},\hat{x}_{k,2}\right]^\text{T}$ and $\widetilde{\boldsymbol{\Delta}}_{i}^{*}\,(i=1,2,3)$ are given by TABLE \ref{Tab_asymptotically optimal_parameters}. The estimate $\hat {\boldsymbol{\psi}}_k = \left[\hat{\beta}_k^\text{re},\hat{\beta}_k^\text{im},\hat x_{k,1},\hat {x}_{k,2}\right]$ is updated by
	\begin{small}
		\vspace{-2mm}
		\begin{equation}\label{eq_Tracking}
		\begin{aligned} \!\hat {\boldsymbol{\psi}}_k \!=\! \hat {\boldsymbol{\psi}}_{k-1}\!+\! \frac{2}{\sigma^2} b_k \textbf{I}\left(\!\hat{\boldsymbol{\psi}}_{k\!-\!1}, \textbf{W}_k\!\right)^\text{\!-1} \!\left[\!
		\begin{matrix}
		{\text{Re}\left\{s_p^\text{H} \textbf{e}_k^\text{H}\left(\textbf{y}_k\!-\!\hat{\textbf{y}}_k \right)\right\}}\\
		{\text{Im}\left\{s_p^\text{H} \textbf{e}_k^\text{H}\left(\textbf{y}_k\!-\!\hat{\textbf{y}}_k \right)\right\}}\\
		{\text{Re}\left\{s_p^\text{H} \tilde{\textbf{e}}_{k1}^\text{H}\left(\textbf{y}_k\!-\!\hat{\textbf{y}}_k \right)\right\}}\\
		{\text{Re}\left\{s_p^\text{H} \tilde{\textbf{e}}_{k2}^\text{H}\left(\textbf{y}_k\!-\!\hat{\textbf{y}}_k \right)\right\}}
		\end{matrix}
		\!\!\right]\!,\!\!
		\end{aligned}
		\vspace{-2mm}
		\end{equation}
	\end{small}where ${\textbf{e}}_k = \textbf{W}_k^\text{H} \textbf{a}\left(\hat{\textbf{x}}_{k-1}\right)$, $\hat{\textbf{y}}_k = s_p \hat{\beta}_{k-1}\textbf{W}_k^\text{H} \textbf{a}\left(\hat{\textbf{x}}_{k-1}\right)$, $\tilde{\textbf{e}}_{k1} = \hat{\beta}_{k-1} \textbf{W}_k^\text{H} \frac{\partial \textbf{a}\left(\hat{\textbf{x}}_{k-1}\right)}{\partial x_1}$ and $\tilde{\textbf{e}}_{k2} = \hat{\beta}_{k-1} \textbf{W}_k^\text{H} \frac{\partial \textbf{a}\left(\hat{\textbf{x}}_{k-1}\right)}{\partial x_2}$. Here, $b_k$ is the step size and will be specified later.
\end{itemize}

\subsection{Asymptotic Optimality Analysis}\label{sbsec_asy_opt}
In the tracking procedure \eqref{eq_Tracking}, there exist multiple stable points and these stable points correspond to the local optimal points for our proposed algorithm. To study these stable points, we rewrite \eqref{eq_Tracking} as \eqref{eq_rtracking}:
\vspace{-1mm}
\begin{equation}\label{eq_rtracking}
\hat {\boldsymbol{\psi}}_k = \hat {\boldsymbol{\psi}}_{k-1}+  b_k \left(\textbf{f} \left(\hat {\boldsymbol{\psi}}_{k-1},{\boldsymbol{\psi}}_{k}\right) + \hat{\textbf{z}}_k \right),
\vspace{-1mm}
\end{equation}
where $\textbf{f} \left(\hat {\boldsymbol{\psi}}_{k-1},{\boldsymbol{\psi}}_{k}\right)$  is defined as follows:
\begin{small}
	\begin{align}\label{eq_f}
	&\textbf{f} \left(\hat {\boldsymbol{\psi}}_{k-1},{\boldsymbol{\psi}}_{k}\right)  \triangleq \mathbb{E} \left[\textbf{I}\left(\hat{\boldsymbol{\psi}}_{k-1}, \textbf{W}_k\right)^\text{-1} \frac{\partial \text{log} \, p \left( \textbf{y}_k \mid \hat{\boldsymbol{\psi}}_{k-1},\textbf{W}_k \right)}{\partial \hat {\boldsymbol{\psi}}_{k-1}} \right]\nonumber\\
	=& \frac{2 {\lvert s_p \rvert}^2}{\sigma^2}\textbf{I}\!\left(\!\hat{\boldsymbol{\psi}}_{k\!-\!1}, \textbf{W}_k\!\right)^\text{\!-1}
	\!\left[
	\begin{matrix}
	{\text{Re}\!\left\{\textbf{e}_k^\text{H}\left(\beta_k \textbf{W}_k^\text{H} \textbf{a}\left(\textbf{x}_k\right) \!-\!\hat{\beta}_{k\!-\!1} \textbf{e}_k \right)\!\right\}}\\
	{\text{Im}\!\left\{\textbf{e}_k^\text{H}\!\left(\beta_k \textbf{W}_k^\text{H} \textbf{a}\left(\textbf{x}_k\right)\!-\!\hat{\beta}_{k\!-\!1} \textbf{e}_k \right)\!\right\}}\\
	{\text{Re}\!\left\{\tilde{\textbf{e}}_{k1}^\text{H}\!\left(\beta_k \textbf{W}_k^\text{H} \textbf{a}\left(\textbf{x}_k\right)\!-\!\hat{\beta}_{k\!-\!1} \textbf{e}_k \right)\!\right\}}\\
	{\text{Re}\!\left\{\tilde{\textbf{e}}_{k2}^\text{H}\!\left(\beta_k \textbf{W}_k^\text{H} \textbf{a}\left(\textbf{x}_k\right)\!-\!\hat{\beta}_{k\!-\!1} \textbf{e}_k \right)\!\right\}}
	\end{matrix}
	\!\right]\!,\!
	\end{align}
	\vspace{-1mm}
\end{small}and $\hat{\textbf{z}}_k$ is given by
\vspace{-1mm}
\begin{small}
	\begin{align}\label{eq_z}
	\hat{\textbf{z}}_k & \triangleq \textbf{I}\left(\hat{\boldsymbol{\psi}}_{k-1}, \textbf{W}_k\right)^\text{-1} \frac{\partial \text{log} \, p \left( \textbf{y}_k \mid \hat{\boldsymbol{\psi}}_{k-1},\textbf{W}_k \right)}{\partial \hat {\boldsymbol{\psi}}_{k-1}} - \textbf{f} \left(\hat {\boldsymbol{\psi}}_{k-1},{\boldsymbol{\psi}}_{k}\right)\nonumber\\
	&= \frac{2}{\sigma^2}\textbf{I}\left(\hat{\boldsymbol{\psi}}_{k-1}, \textbf{W}_k\right)^\text{-1}
	\left[
	\begin{matrix}
	{\text{Re}\left\{s_p^\text{H} \textbf{e}_k^\text{H} \textbf{z}_k\right\}}\\
	{\text{Im}\left\{s_p^\text{H} \textbf{e}_k^\text{H} \textbf{z}_k\right\}}\\
	{\text{Re}\left\{s_p^\text{H} \tilde{\textbf{e}}_{k1}^\text{H} \textbf{z}_k\right\}}\\
	{\text{Re}\left\{s_p^\text{H} \tilde{\textbf{e}}_{k2}^\text{H} \textbf{z}_k\right\}}
	\end{matrix}
	\right].
	\end{align}
	\vspace{-1mm}
\end{small}

A stable point $\hat {\boldsymbol{\psi}}_{k-1}$ of $\textbf{f} \left(\hat {\boldsymbol{\psi}}_{k-1},{\boldsymbol{\psi}}_{k}\right)$ satisfies two conditions: 1) $\textbf{f} \left(\hat {\boldsymbol{\psi}}_{k-1},{\boldsymbol{\psi}}_{k}\right) = 0$; 2) $\frac{\partial \textbf{f} \left(\hat {\boldsymbol{\psi}}_{k-1},{\boldsymbol{\psi}}_{k}\right)}{\partial \hat {\boldsymbol{\psi}}_{k-1}^\text{T} }$ is negative definite. Hence, we define the stable points set in time-slot $k$ as : $\mathcal{S}_{k} \triangleq \left\{\hat {\boldsymbol{\psi}}_{k-1}:\textbf{f} \left(\hat {\boldsymbol{\psi}}_{k-1},{\boldsymbol{\psi}}_{k}\right) = 0,\frac{\partial \textbf{f} \left(\hat {\boldsymbol{\psi}}_{k-1},{\boldsymbol{\psi}}_{k}\right)}{\partial \hat {\boldsymbol{\psi}}_{k-1}^\text{T}} \prec \textbf{0} \right\}$.

The channel parameter ${\boldsymbol{\psi}}_{k}$ is a stable point: when $\hat {\boldsymbol{\psi}}_{k-1}={\boldsymbol{\psi}}_{k}$,

1) $\beta_k \textbf{W}_k^\text{H} \textbf{a}\left(\textbf{x}_k\right) =\hat{\beta}_{k-1} \textbf{e}_k$ in \eqref{eq_f}. Hence, $\textbf{f} \left( {\boldsymbol{\psi}}_{k},{\boldsymbol{\psi}}_{k}\right) = 0$;
\vspace{2mm}

2)$\frac{\partial \textbf{f} \left( {\boldsymbol{\psi}}_{k},{\boldsymbol{\psi}}_{k}\right)}{\partial \hat {\boldsymbol{\psi}}_{k-1}^\text{T} }=-\textbf{J}_4$ by derivation, where $\textbf{J}_4$ is a $4 \times 4$ identity matrix. Thus,
$\frac{\partial \textbf{f} \left( {\boldsymbol{\psi}}_{k},{\boldsymbol{\psi}}_{k}\right)}{\partial \hat {\boldsymbol{\psi}}_{k-1}^\text{T} }$ is negative definite.\\
Therefore, ${\boldsymbol{\psi}}_k$ is a stable point.

Other stable points in $\mathcal{S}_k$ correspond to the local optimal points of the beam and channel tracking problem, which are without the main lobe $\mathcal{B}(\textbf{x})$. Except for the channel parameter vector ${\boldsymbol{\psi}}_{k}$, the antenna array gain of other stable points in $\mathcal{S}_k$ is quite low, resulting in low tracking accuracy. Therefore, one key challenge is to ensure that the tracking algorithm converges to ${\boldsymbol{\psi}}_k$ rather than other stable points.

In static scenarios, where $\mathcal{S}_{k} =\mathcal{S} \triangleq \big\{\hat {\boldsymbol{\psi}}_{k-1}:\textbf{f} \left(\hat {\boldsymbol{\psi}}_{k-1},{\boldsymbol{\psi}}\right) = 0,\frac{\partial \textbf{f} \left(\hat {\boldsymbol{\psi}}_{k-1},{\boldsymbol{\psi}}\right)}{\partial \hat {\boldsymbol{\psi}}_{k-1}^\text{T}} \prec \textbf{0} \big\}$, the corresponding theorems are developed to study the convergence of our algorithm. We adopt the diminishing step-size in \eqref{eq_stepsize}, given by \cite{nevel1973stochastic,borkar2008stochastic,kushner2003stochastic}
\vspace{-1mm}
\begin{equation}\label{eq_stepsize}
\begin{aligned}
b_k=\frac{\epsilon}{k+K_0},k=1,2,\cdots
\end{aligned}
\vspace{-1mm}
\end{equation}where $K_0 \geq 0$ and $\epsilon > 0$.
%To prove that the JRBCT algorithm converges to ${\boldsymbol{\psi}}$ instead of other stable points, we develop three theorems:
\begin{theorem}[\textbf{Convergence to a Unique Stable Point}]\label{Converge to unique stable point}
	\emph{If $b_k$ is given by \eqref{eq_stepsize} with $\epsilon > 0$ and $K_0 \geq 0$, then $\hat {\boldsymbol{\psi}}_k$ converges to a unique stable point with probability one.}
\end{theorem}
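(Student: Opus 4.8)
The plan is to recognize the update \eqref{eq_rtracking} as a Robbins--Monro stochastic approximation recursion and to invoke the ODE (ordinary differential equation) method of \cite{nevel1973stochastic,borkar2008stochastic,kushner2003stochastic}. Written in the form $\hat{\boldsymbol{\psi}}_k = \hat{\boldsymbol{\psi}}_{k-1} + b_k(\textbf{f}(\hat{\boldsymbol{\psi}}_{k-1},{\boldsymbol{\psi}}) + \hat{\textbf{z}}_k)$, the recursion consists of a deterministic drift $\textbf{f}$ plus a zero-mean perturbation $\hat{\textbf{z}}_k$. First I would verify the two standard step-size conditions for the choice \eqref{eq_stepsize}: since $b_k=\epsilon/(k+K_0)$, one has $\sum_k b_k=\infty$ and $\sum_k b_k^2<\infty$, which are exactly the conditions required for almost-sure convergence of such schemes.

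Next I would establish the noise properties. Let $\mathcal{F}_{k-1}=\sigma(\hat{\boldsymbol{\psi}}_0,\ldots,\hat{\boldsymbol{\psi}}_{k-1})$. By the very definition \eqref{eq_z}, $\hat{\textbf{z}}_k$ is the difference between the observation-driven increment and its conditional expectation $\textbf{f}$, so $\mathbb{E}[\hat{\textbf{z}}_k\mid\mathcal{F}_{k-1}]=0$; that is, $\{\hat{\textbf{z}}_k\}$ is a martingale-difference sequence. Because $\textbf{z}_k\sim\mathcal{CN}(0,\sigma^2)$ is i.i.d.\ and the quantities $\textbf{e}_k,\tilde{\textbf{e}}_{k1},\tilde{\textbf{e}}_{k2}$ and the inverse Fisher information $\textbf{I}(\hat{\boldsymbol{\psi}}_{k-1},\textbf{W}_k)^{-1}$ depend continuously on $\hat{\boldsymbol{\psi}}_{k-1}$, the conditional second moment $\mathbb{E}[\|\hat{\textbf{z}}_k\|_2^2\mid\mathcal{F}_{k-1}]$ is bounded on any bounded region of the state; here I would use that the offsets $\widetilde{\boldsymbol{\Delta}}_1^*,\widetilde{\boldsymbol{\Delta}}_2^*,\widetilde{\boldsymbol{\Delta}}_3^*$ keep $\textbf{I}(\boldsymbol{\psi},\textbf{W}_k)$ nonsingular, so that its inverse stays bounded.

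I expect the main obstacle to be the almost-sure boundedness of the iterates, $\sup_k\|\hat{\boldsymbol{\psi}}_k\|_2<\infty$, which is a prerequisite for the ODE method. Near the true parameter the Jacobian $\partial\textbf{f}/\partial\hat{\boldsymbol{\psi}}_{k-1}^\text{T}$ is negative definite (indeed equal to $-\textbf{J}_4$ at ${\boldsymbol{\psi}}$), which furnishes a local Lyapunov function; the delicate part is a global stability guarantee, since the inverse Fisher information may become ill-conditioned far from the main lobe $\mathcal{B}(\textbf{x})$. To handle this I would either confine $\hat{\boldsymbol{\psi}}_k$ to a compact set by a projection/truncation, or apply a Borkar--Meyn-type scaling (stability) criterion to the mean-field ODE to certify that the trajectories remain bounded.

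Finally I would close the argument with the ODE method. Associating the mean-field ODE $\dot{\boldsymbol{\psi}}(t)=\textbf{f}(\boldsymbol{\psi}(t),{\boldsymbol{\psi}})$, the standard theorem states that the linearly interpolated iterates converge almost surely to a compact connected internally chain-transitive invariant set of this ODE. The equilibria are exactly the zeros of $\textbf{f}$, and among them the stable points in $\mathcal{S}$ are locally asymptotically stable and hence isolated. An internally chain-transitive invariant set consisting of isolated asymptotically stable equilibria must be a single point, so $\hat{\boldsymbol{\psi}}_k$ converges to a unique stable point. To ensure the limit lies in $\mathcal{S}$ rather than at an unstable (saddle) equilibrium, I would invoke the non-convergence-to-unstable-points results for stochastic approximation, using that the Gaussian noise $\hat{\textbf{z}}_k$ excites every coordinate direction, which completes the argument.
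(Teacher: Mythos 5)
Your proposal follows essentially the same route as the paper: both cast \eqref{eq_rtracking} as a Robbins--Monro recursion with zero-mean (martingale-difference) noise $\hat{\textbf{z}}_k$ and invoke the ODE method for stochastic approximation (the paper uses Theorem 5.2.1 of \cite{kushner2003stochastic}), after checking the step-size conditions $\sum_k b_k=\infty$, $\sum_k b_k^2<\infty$, the continuity of $\textbf{f}$, the vanishing conditional bias, and a uniform second-moment bound. The only notable difference is that where you anticipate needing a projection/truncation or a Borkar--Meyn stability criterion to control the iterates, the paper verifies the required moment condition $\sup_k \mathbb{E}[\|\textbf{f}+\hat{\textbf{z}}_k\|_2^2]<\infty$ directly, by bounding $|\textbf{w}_{k,i}^\text{H}\textbf{a}(\textbf{x})|$ and its derivatives by constants depending only on $M,N$ and using the invertibility of the Fisher information matrix.
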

\begin{proof}
	See Appendix \ref{proof_Converge to unique stable point}.
\end{proof}

Therefore, for the general step-size in \eqref{eq_stepsize}, $\hat {\boldsymbol{\psi}}_k$ converges to a unique stable point.

\begin{theorem}[\textbf{Convergence to Direction parameter vector x}]\label{Converge to real beam direction}
	\emph{If (i) the initial estimate of $\textbf{x}$ \emph{is within the main lobe, i.e.,} $\hat {\textbf{x}}_0 \in \mathcal{B}\left(\textbf{x}\right)$, and (ii) $b_k$ is given by \eqref{eq_stepsize} with $\epsilon > 0$, then there exist some $K_0 \geq 0$ and $C > 0$ such that}
	\begin{equation}\label{eq_ConvergeProbablity}
	\begin{aligned}
	P\left(\hat {\textbf{x}}_k \to \textbf{x} \mid \hat {\textbf{x}}_0 \in \mathcal{B}\left(\textbf{x}\right)\right) \geq 1-8e^{-\frac{C \lvert s_p \rvert^2}{\epsilon^2 \sigma^2}}.
	\end{aligned}
	\vspace{-1mm}
	\end{equation}
\end{theorem}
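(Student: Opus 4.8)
The plan is to treat \eqref{eq_rtracking} as a stochastic approximation recursion and invoke the \emph{lock-in probability} technique from \cite{borkar2008stochastic,kushner2003stochastic}: once the iterate lies in the domain of attraction of a locally asymptotically stable equilibrium of the limiting ODE, it is trapped there (and converges to that equilibrium) with a probability that is exponentially close to one in the inverse step size. Theorem~\ref{Converge to unique stable point} already guarantees that $\hat{\boldsymbol{\psi}}_k$ converges to \emph{some} point of $\mathcal{S}$ almost surely; the content of Theorem~\ref{Converge to real beam direction} is that, conditioned on $\hat{\textbf{x}}_0 \in \mathcal{B}(\textbf{x})$, this limit is the true parameter ${\boldsymbol{\psi}}$ rather than one of the spurious stable points lying outside the main lobe. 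I would therefore reduce the claim to a single \emph{confinement} statement: the event $\{\hat{\textbf{x}}_k \to \textbf{x}\}$ contains the event that the whole trajectory $\{\hat{\boldsymbol{\psi}}_k\}_{k \ge 0}$ never leaves the basin of attraction of ${\boldsymbol{\psi}}$, whose projection onto the direction coordinates is $\mathcal{B}(\textbf{x})$, because ${\boldsymbol{\psi}}$ is the unique stable point there.

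First I would fix the limiting dynamics. The mean field $\textbf{f}(\hat{\boldsymbol{\psi}}_{k-1},{\boldsymbol{\psi}})$ in \eqref{eq_f} defines the ODE $\dot{\boldsymbol{\psi}}(t)=\textbf{f}({\boldsymbol{\psi}}(t),{\boldsymbol{\psi}})$, and the text has already shown $\textbf{f}({\boldsymbol{\psi}},{\boldsymbol{\psi}})=0$ with Jacobian $\partial \textbf{f}/\partial \hat{\boldsymbol{\psi}}^\text{T}=-\textbf{J}_4 \prec \textbf{0}$ at ${\boldsymbol{\psi}}$; hence ${\boldsymbol{\psi}}$ is locally exponentially stable and $V(\hat{\boldsymbol{\psi}})\triangleq \|\hat{\boldsymbol{\psi}}-{\boldsymbol{\psi}}\|_2^2$ is a local Lyapunov function with $\langle \nabla V,\textbf{f}\rangle \le -\alpha V$ on a neighborhood, which I would extend to a negative-drift bound throughout $\mathcal{B}(\textbf{x})$ using that ${\boldsymbol{\psi}}$ is the only stable point inside the main lobe. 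Next I would record the noise properties of $\hat{\textbf{z}}_k$ in \eqref{eq_z}: it is a martingale difference ($\mathbb{E}[\hat{\textbf{z}}_k \mid \mathcal{F}_{k-1}]=0$ by construction), and since it is a \emph{linear} function of the Gaussian noise vector $\textbf{z}_k$ (entrywise $\mathcal{CN}(0,\sigma^2)$) premultiplied by $\textbf{I}(\hat{\boldsymbol{\psi}}_{k-1},\textbf{W}_k)^{-1}$, whose scale is of order $\sigma^2/|s_p|^2$, each coordinate of $\hat{\textbf{z}}_k$ is conditionally sub-Gaussian with variance proxy of order $\sigma^2/|s_p|^2$.

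With these in hand I would carry out the confinement estimate. Writing the accumulated noise as the martingale $\textbf{m}_k\triangleq \sum_{l=1}^{k} b_l \hat{\textbf{z}}_l$, the drift keeps $V$ from growing, so leaving the basin forces the maximal excursion $\sup_k \|\textbf{m}_k\|$ to exceed the margin $\rho$ separating $\hat{\textbf{x}}_0$ from the boundary of the basin. Because $\textbf{m}_k$ is a sub-Gaussian martingale with total variance proxy $\sum_{l\ge 1} b_l^2 \cdot O(\sigma^2/|s_p|^2) \le \epsilon^2 \big(\sum_{l\ge 1}(l+K_0)^{-2}\big) O(\sigma^2/|s_p|^2)=O(\epsilon^2\sigma^2/|s_p|^2)$, a coordinate-wise Azuma-type maximal inequality applied to the four components and both tails gives
\begin{equation}\label{eq_confine}
P\!\left(\sup_k \|\textbf{m}_k\|_2 \ge \rho\right) \le 8\,e^{-\frac{C|s_p|^2}{\epsilon^2\sigma^2}},
\end{equation}
for a constant $C>0$ absorbing $\rho$ and the summed step sizes, where the factor $8=2\times 4$ counts the two tails of the four coordinates of ${\boldsymbol{\psi}}$. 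On the complementary event the trajectory stays in $\mathcal{B}(\textbf{x})$, so by Theorem~\ref{Converge to unique stable point} it converges to the unique stable point therein, namely ${\boldsymbol{\psi}}$, and in particular $\hat{\textbf{x}}_k\to \textbf{x}$; this yields \eqref{eq_ConvergeProbablity}.

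I expect the main obstacle to be the \emph{uniform-in-time} nature of the confinement: the clean negative-drift bound $\langle\nabla V,\textbf{f}\rangle\le-\alpha V$ holds only locally, yet \eqref{eq_confine} must control the trajectory over the infinite horizon across the whole main lobe, where $\textbf{f}$ is nonlinear. Taming the early, large-variance noise increments is exactly where the freedom in $K_0$ enters: taking $K_0$ large shrinks the initial step sizes so that the iterate cannot be ejected before the exponential contraction near ${\boldsymbol{\psi}}$ takes over, and it is the balance between this contraction and the tail sum $\sum_{l} b_l^2$ that fixes the admissible $C$ and $K_0$ in the statement.
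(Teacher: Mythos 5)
Your overall strategy --- a lock-in-probability argument that reduces the theorem to confining the trajectory to the main lobe, followed by a coordinate-wise martingale maximal inequality producing the factor $8=2\times 4$ and the exponent $C|s_p|^2/(\epsilon^2\sigma^2)$ via $\sum_l b_l^2 = O(\epsilon^2)$ --- is the same strategy the paper uses, and you correctly locate where the freedom in $K_0$ enters. However, there is a genuine gap at the central step. You assert that ``the drift keeps $V$ from growing, so leaving the basin forces the maximal excursion $\sup_k\|\textbf{m}_k\|$ to exceed the margin $\rho$,'' which requires the Lyapunov decrease $\langle\nabla V,\textbf{f}\rangle\le 0$ to hold \emph{throughout} $\mathcal{B}(\textbf{x})$, and you propose to ``extend'' the local bound using only the fact that $\boldsymbol{\psi}$ is the unique stable point in the main lobe. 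That deduction is invalid: uniqueness of the stable point does not make $\mathcal{B}(\textbf{x})$ a sublevel set of $V=\|\hat{\boldsymbol{\psi}}-\boldsymbol{\psi}\|_2^2$, and the mean field $\textbf{f}$ in \eqref{eq_f} can transiently increase $V$ before the trajectory turns toward $\boldsymbol{\psi}$. Without a global drift inequality, the event ``the iterate exits the lobe'' is \emph{not} contained in the event ``$\sup_k\|\textbf{m}_k\|_2\ge\rho$,'' because the nonlinear drift amplifies noise perturbations; controlling this amplification is exactly the uniform-in-time obstacle you flag at the end but do not resolve.

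The paper's proof resolves it differently and this is the content you are missing. It does not use a Lyapunov function at all: it compares the piecewise-linear interpolation $\bar{\boldsymbol{\psi}}(t)$ of the iterates to ODE solutions $\tilde{\boldsymbol{\psi}}^{\tilde{k}(l)}(t)$ that are \emph{restarted} at the beginning of each window $I_l=[T_l,T_{l+1}]$ of bounded length $T$, so that the discrete Gronwall constant $C_e=e^{L(T+b_1)}$ stays finite (a single infinite-horizon comparison, which your one-shot bound on $\sup_k\|\textbf{m}_k\|_2$ implicitly performs, would have an unbounded Gronwall factor). It also builds a shrunken invariant set $\mathcal{I}\subset\mathcal{B}(\textbf{x})$ around the worst excursion point $\hat{\textbf{x}}_{\text{b}}$ of the noiseless ODE, precisely to absorb the non-monotone transient that your global-drift assumption wishes away. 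The maximal inequality is then applied to the \emph{windowed} noise increments $\boldsymbol{\xi}_p-\boldsymbol{\xi}_{\tilde{k}(l)}$ with variance $c(\tilde{k}(l))-c(\tilde{k}(l+1))$, and the resulting infinite sum of per-window tail probabilities is collapsed into the single exponential $8e^{-C|s_p|^2/(\epsilon^2\sigma^2)}$ by a monotonicity lemma (Lemma~\ref{le_sum}), with $K_0$ chosen large enough that the step-size constraints \eqref{eq_lock_constr} and \eqref{eq_lock_constr2} hold. These three ingredients --- windowed ODE comparison with Gronwall, the invariant set $\mathcal{I}$, and the summation lemma --- are what your proposal needs to supply to close the gap.
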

\begin{proof}
	See Appendix \ref{proof_Converge to real beam direction}.
\end{proof}

At the coarse beam sweeping stage of our proposed algorithm, the initial estimation $\hat {\textbf{x}}_0$ within main lobe $\mathcal{B}\left(\textbf{x}\right)$ in \eqref{eq_MainLobe} can be obtained with high probability. Under the condition $\hat {\textbf{x}}_0 \in \mathcal{B}\left(\textbf{x}\right)$, Theorem \ref{Converge to real beam direction} tells us the probability of $\hat {\textbf{x}}_k \to \textbf{x}$ is related to $\frac{ \lvert s_p \rvert^2}{\epsilon^2 \sigma^2}$. Hence, we can reduce the step-size and increase the transmit SNR $\frac{\lvert s_p \rvert^2}{ \sigma^2}$ to make sure that $\hat {\textbf{x}}_k \to \textbf{x}$ with probability one.
\begin{theorem}[\textbf{Convergence to x with minimum CRLB}]\label{Converge to with minimum CRLB}
	\emph{If (i) $\hat {\boldsymbol{\psi}}_k \to {\boldsymbol{\psi}}$ and (ii) $b_k$ is given by \eqref{eq_stepsize} with $\epsilon = 1$ and any  $K_0 \geq 0$, then $\hat{\textbf{h}}_k - \textbf{h}_k$ is asymptotically Gaussian and}
	\begin{equation}\label{eq_ConvergeWithMinCRLB}
	\begin{aligned}
	\mathop {\lim }\limits_{k \to \infty } \frac{k}{MN} \mathbb{E} \left[{\left\| \hat{\textbf{h}}_k - \textbf{h} \right\|}_2^2 \big| \hat{\boldsymbol{\psi}}_k \to \boldsymbol{\psi} \right] = {I}_{\min}(\boldsymbol{\psi}).
	\end{aligned}
	\vspace{-1mm}
	\end{equation}
\end{theorem}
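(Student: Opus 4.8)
The plan is to read the tracking recursion \eqref{eq_rtracking} as a Robbins--Monro stochastic approximation driven by the mean field $\textbf{f}$ and the martingale-difference noise $\hat{\textbf{z}}_k$, and then to invoke the classical central limit theorem (asymptotic efficiency) for such recursions under a $1/k$ step size. Since hypothesis (i) guarantees $\hat{\boldsymbol{\psi}}_k\to\boldsymbol{\psi}$, in the limit every quantity may be frozen at the true parameter, so the whole analysis reduces to a linearization of \eqref{eq_rtracking} around the stable point $\boldsymbol{\psi}$.

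Three ingredients feed the stochastic-approximation CLT. First, the increment $\hat{\textbf{z}}_k$ in \eqref{eq_z} is a martingale difference: by the very definition of $\textbf{f}$ in \eqref{eq_f} as the conditional mean (given $\hat{\boldsymbol{\psi}}_{k-1}$ and $\textbf{W}_k$) of $\textbf{I}(\hat{\boldsymbol{\psi}}_{k-1},\textbf{W}_k)^{-1}$ times the score, one has $\mathbb{E}[\hat{\textbf{z}}_k\mid\mathcal{F}_{k-1}]=\textbf{0}$, the randomness being carried by the i.i.d.\ observation noise $\textbf{z}_k$. Second, at the true parameter the score $\partial\log p/\partial\boldsymbol{\psi}$ has zero mean and covariance equal to the Fisher information $\textbf{I}(\boldsymbol{\psi},\textbf{W})$ in \eqref{eq_fisher}, so the limiting conditional covariance of $\hat{\textbf{z}}_k$ is $\textbf{I}^{-1}\textbf{I}\,\textbf{I}^{-1}=\textbf{I}(\boldsymbol{\psi},\textbf{W})^{-1}$. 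Third, the mean-field Jacobian at the stable point was already obtained in the discussion preceding Theorem \ref{Converge to unique stable point}, namely $\partial\textbf{f}(\boldsymbol{\psi},\boldsymbol{\psi})/\partial\hat{\boldsymbol{\psi}}_{k-1}^\text{T}=-\textbf{J}_4$. Writing $\boldsymbol{\delta}_k\triangleq\hat{\boldsymbol{\psi}}_k-\boldsymbol{\psi}$ and linearizing, \eqref{eq_rtracking} becomes $\boldsymbol{\delta}_k=\boldsymbol{\delta}_{k-1}+b_k(-\textbf{J}_4\boldsymbol{\delta}_{k-1}+\hat{\textbf{z}}_k)+o(\boldsymbol{\delta}_{k-1})$, whence the Fabian-type theorem yields that $\sqrt{k}\,\boldsymbol{\delta}_k$ is asymptotically Gaussian $\mathcal{N}(\textbf{0},\Sigma)$ with $\Sigma$ solving the Lyapunov equation
\[
\left(\epsilon(-\textbf{J}_4)+\tfrac12\textbf{J}_4\right)\Sigma+\Sigma\left(\epsilon(-\textbf{J}_4)+\tfrac12\textbf{J}_4\right)^\text{T}+\epsilon^2\,\textbf{I}(\boldsymbol{\psi},\textbf{W})^{-1}=\textbf{0}.
\]
Setting $\epsilon=1$ makes the stabilizing matrix equal to $-\tfrac12\textbf{J}_4$ (Hurwitz, as required) and collapses the equation to $\Sigma=\textbf{I}(\boldsymbol{\psi},\textbf{W})^{-1}$, the single-observation inverse Fisher information; the special role of $\epsilon=1$ is precisely to cancel the step-size constant into the CRLB.

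It remains to transport this to the channel vector by the delta method. The Jacobian of $\textbf{h}=\beta\,\textbf{a}(\textbf{x})$ with respect to $\boldsymbol{\psi}$ has $(m,n)$-th row $a_{mn}(\textbf{x})\,\textbf{v}_{m,n}$, where $\textbf{v}_{m,n}$ is exactly the vector appearing in Lemma \ref{MSEOpt}; hence $\hat{\textbf{h}}_k-\textbf{h}$ is asymptotically Gaussian, and since $|a_{mn}(\textbf{x})|=1$,
\[
\mathbb{E}\!\left[\|\hat{\textbf{h}}_k-\textbf{h}\|_2^2\right]\approx\sum_{m=1}^{M}\sum_{n=1}^{N}\textbf{v}_{m,n}\,\mathbb{E}\!\left[\boldsymbol{\delta}_k\boldsymbol{\delta}_k^\text{T}\right]\textbf{v}_{m,n}^\text{H}.
\]
Multiplying by $k$ and using $k\,\mathbb{E}[\boldsymbol{\delta}_k\boldsymbol{\delta}_k^\text{T}]\to\Sigma=\textbf{I}(\boldsymbol{\psi},\textbf{W})^{-1}$ gives $\lim_{k}\frac{k}{MN}\mathbb{E}[\|\hat{\textbf{h}}_k-\textbf{h}\|_2^2]=\frac{1}{MN}\Tr(\textbf{I}(\boldsymbol{\psi},\textbf{W})^{-1}\sum_{m,n}\textbf{v}_{m,n}^\text{H}\textbf{v}_{m,n})$. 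Finally, because $\hat{\textbf{x}}_{k-1}\to\textbf{x}$ and the offsets $\widetilde{\boldsymbol{\Delta}}_i^{*}$ are (asymptotically) optimal, the employed matrix $\textbf{W}_k$ tends to the optimizer of \eqref{eq_CMMSE}, so the right-hand side equals $I_{\min}(\boldsymbol{\psi})$.

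I expect the main obstacle to be the rigorous verification of the stochastic-approximation CLT hypotheses: the Lindeberg/uniform-integrability condition on $\hat{\textbf{z}}_k$, the moment bounds on $\boldsymbol{\delta}_k$, and in particular upgrading convergence in distribution of $\sqrt{k}\,\boldsymbol{\delta}_k$ to convergence of the $k$-scaled second moment $k\,\mathbb{E}[\boldsymbol{\delta}_k\boldsymbol{\delta}_k^\text{T}]$ (which requires uniform integrability of $k\|\boldsymbol{\delta}_k\|_2^2$), together with controlling the delta-method remainder so that the linearization error is negligible after the $k$-scaling.
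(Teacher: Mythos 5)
Your proposal follows essentially the same route as the paper's proof: both cast \eqref{eq_rtracking} as a Robbins--Monro recursion, verify the martingale-difference structure of $\hat{\textbf{z}}_k$ with limiting covariance $\textbf{I}(\boldsymbol{\psi},\textbf{W}^*)^{-1}$ and mean-field Jacobian $-\textbf{J}_4$, invoke the classical asymptotic-normality theorem (the paper uses Theorem 6.6.1 of Nevel'son--Has'minskii, whose covariance integral is exactly your Lyapunov equation) to get $\Sigma=\frac{\epsilon^2}{2\epsilon-1}\textbf{I}(\boldsymbol{\psi},\textbf{W}^*)^{-1}$ with the stability requirement $\epsilon>\tfrac12$, and then transfer to the channel vector via the linearization with rows $a_{mn}(\textbf{x})\textbf{v}_{m,n}$. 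The technical caveats you flag (uniform integrability to pass from convergence in distribution to convergence of the $k$-scaled second moment) are also present, and similarly unaddressed, in the paper's argument.
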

\begin{proof}
	See Appendix \ref{proof_Converge to with minimum CRLB}.
\end{proof}

Theorem \ref{Converge to with minimum CRLB} tells us $\epsilon$ should not be too large or too small. By Theorem \ref{Converge to with minimum CRLB}, if $\epsilon = 1$, then we achieve the minimum CRLB asymptotically with high probability.

\section{Numerical Results}\label{sec_simulation}
\vspace{-0.5mm}
We compare the proposed algorithm with four other algorithms: the compressed sensing algorithm in \cite{Alkhateeb2015Compressed}, the IEEE 802.11ad algorithm in \cite{IEEE80211ad}, the extended Kalman filter (EKF) method in
\cite{Vutha2016Tracking} and the joint beam and channel tracking algorithm in \cite{JLiJoint2018} (using two pilots to track each dimension of the 2D beam direction). In each time-slot, three pilots are transmitted for all the algorithms to ensure fairness. When adopting the joint beam and channel tracking algorithm by using four pilots, we use a buffer to store the received pilots and update the estimate when receiving four new pilots. Based on the model in Section II, the parameters are set as: $M\!=\!N\!=\!8$, the antenna spacing $d_1\!=\!d_2\!=\!\frac{\lambda}{2}$, the codebook size $M_0 = 2M, N_0=2N$, the pilot symbol $s_p$ = 1, and the transmit $\text{SNR} = \frac{ \lvert s_p \rvert^2}{\sigma^2}=0 \text{dB}$.
%\begin{figure}[!t]
%\centering
%\vspace{-3mm}
%\includegraphics[width=6.3cm]{20171025_static_NMSEvsOverhead_SNR=10dB.eps}
%%{20170509_static_NMSEvsOverhead_SNR=10dB_report.eps}
%\vspace{-2mm}
%\caption{$\text{MSE}_{\textbf{h},n}$ vs. time-slot number $n$ in static beam tracking scenarios.}
%\vspace{-5mm}
%\label{fig_static_mse}
%\end{figure}

In static scenarios, the AoA ($\theta$,$\phi$) as defined in Section II is chosen evenly and randomly in $\theta \in \left[0,\frac{\pi}{2}\right]$, $\phi \in \left[-\pi,\pi\right)$. The channel coefficient is set as a constant $\beta_k={(1+1j)}/{\sqrt{2}}$. The step-size $b_k$ is set as $b_k=1/k$. Simulation results are averaged over 1000 random system realizations. Fig. \ref{fig_static_mse}
\begin{figure}[!t]
\centering
\vspace{0mm}
\includegraphics[width=3.33in]{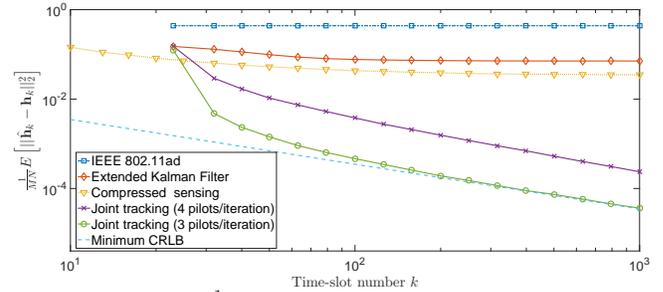}
\vspace{-4mm}
\caption{$\frac{1}{MN} \text{MSE}_{\textbf{h}_k}$ in static tracking scenarios.}
\vspace{-3mm}
\label{fig_static_mse}
\end{figure}indicates that the channel vector MSE of our proposed algorithm approaches the minimum CRLB quickly and achieves much lower tracking error than other algorithms.
%\begin{figure}[!t]
%\centering
%\vspace{-3mm}
%\includegraphics[width=6.5cm]{20171025_static_NMSEvsOverhead_SNR=10dB.eps}
%%{20170509_static_NMSEvsOverhead_SNR=10dB_report.eps}
%\vspace{-2mm}
%\caption{$\text{MSE}_{\textbf{h},n}$ vs. time-slot number $n$ in static beam tracking scenarios.}
%\vspace{-5mm}
%\label{fig_static_mse}
%\end{figure}

In dynamic scenarios, the AoA ($\theta$,$\phi$) as defined in Section II is modeled as a random walk process, i.e., $\theta_{k+1} = \theta_ k+ \Delta \theta$, $\phi_{k+1} = \phi_k+ \Delta \phi$; $\Delta \theta,\Delta \phi \sim \mathcal{CN}(0,\delta^2)$. The initial AoA values are chosen evenly and randomly in $\theta_0 \in \left[0,\frac{\pi}{2}\right]$, $\phi_0 \in \left[-\pi,\pi\right)$. The channel coefficient is modeled as Rician fading with a K-factor $\kappa$=15dB, according to the channel model in \cite{Samimi2016FadingModel}. As for the step-size $b_k$, we adopt the constant step-size. Numerical results show that when $b_k = 0.7$, the joint beam and channel tracking algorithm can track beams with higher velocity. Therefore, the step-size is set as a constant $b_k = 0.7$. Fig. \ref{fig_dynamic_mse}
\begin{figure}[!t]
\centering
\vspace{-0mm}
\includegraphics[width=3.33in]{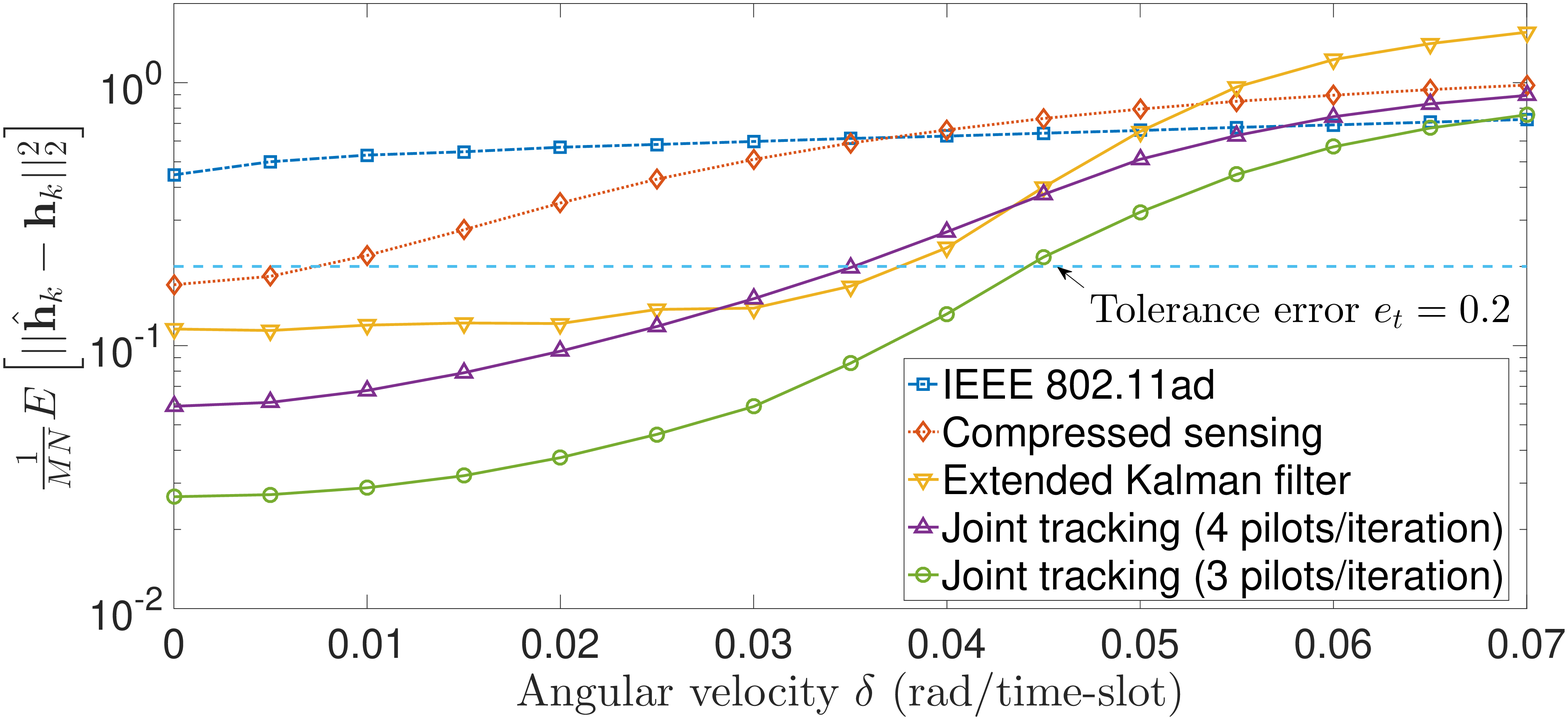}
%{20170509_static_NMSEvsOverhead_SNR=10dB_report.eps}
\vspace{-2mm}
\caption{$\frac{1}{MN} \text{MSE}_{\textbf{h}_k}$ in dynamic tracking scenarios.}
\vspace{-5mm}
\label{fig_dynamic_mse}
\end{figure}indicates the proposed algorithm can achieve higher tracking accuracy than the other four algorithms. In addition, if we set a tolerance error $e_t$, e.g., $e_t=0.2$, then our algorithm can support higher angular velocities.

\vspace{-2mm}
\section{Future Work Remarks}\label{future work remarks}
\vspace{-0.5mm}
%We have developed an analog beam tracking algorithm, and established its convergence and asymptomatic optimality. Our theoretical and simulation results show that the proposed algorithm can achieve much faster tracking speed, lower beam tracking error, and higher data rate than several state-of-the-art algorithms, with the same pilot overhead. In our future work, we will consider hybrid beamforming systems with multiple RF chains and two-dimensional antenna arrays, based on the methodology developed in the current paper.
In this paper, we have developed a joint beam and channel tracking algorithm for 2D phased antenna arrays. A general sequence of optimal analog beamforming parameters is obtained to achieve the minimum CRLB. The work is a first step to beam and channel tracking with 2D phased antenna arrays. In our future work, we will focus on the following aspects: i) establishing the corresponding theorems in dynamic scenarios; ii) jointly tracking multiple paths; iii) tracking at both the transmitter and receiver.
\vspace{-2mm}
%
%
%
%We have developed a joint beam and channel tracking algorithm for 2D phased antenna arrays and give a general sequence of optimal beamforming parameters. The theorem of convergence and asymptomatic optimality is established for this algorithm. Simulation results show that our algorithm can achieve lower tracking error than several existing algorithms.
%achieve faster tracking speed and lower tracking error than several existing algorithms.

%\input{reference}

%\ifreport

\bibliographystyle{IEEEtran}
\bibliography{IEEEabrv,reference}

\appendices
\section{Proof of Lemma~\ref{IndObservations}}\label{proof_IndObservations}
Since the effect of noise can be reduced to zero by multiple observations, we ignore the observation noise in the proof for the sake of simplicity.

If the analog beamforming vectors are steering vectors, i.e., $\textbf{w}_{k,i} = \frac{1}{\sqrt{MN}}\textbf{a}\left(\textbf{x}+\boldsymbol{\Delta}_{k,i}\right)$, where $\boldsymbol{\Delta}_{k,i}=\left[\delta_{k,i1},\delta_{k,i2}\right]^\text{T}$ denotes the $i$-th direction parameter offset, then we get the complex observation equation for the $i$-th observation:
\begin{align}\label{eq_aObservationI}
{y_{k,i}}= \frac{{{s_p}\beta }}{{\sqrt {MN} }}\sum\limits_{m = 1}^M {\sum\limits_{n = 1}^N {{e^{-j2\pi \left(\frac{{(m - 1)\delta_{k,i1}}}{M} + \frac{{(n - 1)\delta_{k,i2}}}{N}\right)}}} },
\end{align}which contains two real equations, i.e., an amplitude equation and a phase angle equation. From \eqref{eq_aObservationI}, we can obtain the phase angle equation:
\begin{align}\label{eq_aObservationIP}
\angle({y_{k,i}})=\angle ({s_p}\beta){\rm{ - }}\pi \left[ {\frac{M - 1}{M}\delta_{k,i1} + \frac{N - 1}{N}{\delta_{k,i2}}} \right].
\end{align}Then the relationship between the phase angles of two different observations $y_{k,i}$ and $y_{k,j}\,(i \neq j)$ is given by
\begin{small}
	\begin{align}\label{eq_aObservationDep}
	\angle ({y_{k,i}}) \!-\!\angle ({y_{k,j}})
	\!=\!\pi \left[\! {\frac{{M \!-\! 1}}{M}({\delta_{k,j1}} \!-\! {\delta_{k,i1}}) \!+\! \frac{{N \!-\! 1}}{N}(\delta_{k,{j2}} \!-\! {\delta_{k,i2}})} \!\right],\nonumber
	\end{align}
\end{small}where $\delta_{k,{i1}} - {\delta_{k,j1}}$ and $\delta_{k,{i2}} - {\delta_{k,j2}}$ are determined by the direction parameter offsets and unrelated to the channel parameter vector $\boldsymbol{\psi}_k$.

Hence, the phase angles of any two observations $y_{k,i}$ and $y_{k,j}$ are correlated. After $q$ observations, we can obtain $q$ independent amplitude equations and only 1 independent phase angle equation, which are $q+1$ independent real equations in total.

When estimating $q+1$ real variables, at least $q+1$ independent real equations are required. Therefore, at least $q$ observations are needed to obtain $q+1$ independent real equations and estimate $q+1$ real variables, which completes the proof.

\section{Proof of Lemma~\ref{MSEOpt}}\label{proof_MSEOpt}
In problem \eqref{eq_problem}, the constraint \eqref{eq_constrant1} ensures that $\hat{\textbf{h}}_k$ is an unbiased estimation of ${\textbf{h}}_k$. In static scenarios, where $\textbf{h}_k = \textbf{h} \triangleq \beta \textbf{a}(\textbf{x})$, we consider each element of the channel vector ${\textbf{h}}$. Given ${h_{mn}}({{\boldsymbol{\psi }}}) = \beta {e^{j2\pi \left( {\frac{{m - 1}}{M}{x_1} + \frac{{n - 1}}{N}{x_2}} \right)}}$, we have $\mathbb{E} \left[{h_{mn}}(\hat{{\boldsymbol{\psi }}})\right]= {h_{mn}}({{\boldsymbol{\psi }}})$ since $\mathbb{E}\left[\hat{\textbf{h}}_k\right] = \textbf{h}$. According to section 
3.8 of \cite{Sengijpta1993Fundamental}, if a function $f(\hat {\textbf{x}})$ is an unbiased estimation of $f({{\textbf{x}}})$, i.e., $\mathbb{E} \left[f(\hat{{\bf{x}}})\right]= f({{\textbf{x}}})$, then we can obtain that 
\vspace{-1mm}
\begin{equation}\label{eq_fLB}
\begin{aligned}
\operatorname{Var}[f(\hat {\textbf{x}})] \ge \frac{{\partial f( {\textbf{x}})}}{{\partial { {\textbf{x}}}}}{\textbf{I}( {\textbf{x}})^{ - 1}}{\left(\frac{{\partial f({ {\textbf{x}}})}}{{\partial { {\textbf{x}}}}}\right)^\text{H}},
\end{aligned}
\end{equation}where $\textbf{I}(\textbf{x})$ is the corresponding Fisher information matrix.

The partial derivative of ${h_{mn}}({{\boldsymbol{\psi }}})$ is given as follows:
\vspace{-1mm}
\begin{equation}\label{eq_fmnd}
\begin{aligned}
{\left\{ {\begin{array}{*{20}{l}}
		{\frac{{\partial {h_{mn}}({{\boldsymbol{\psi }}})}}{{\partial {\beta ^{re}}}} = {e^{j2\pi \left( {\frac{{m - 1}}{M}{x_1} + \frac{{n - 1}}{N}{x_2}} \right)}}}\\
		{\frac{{\partial {h_{mn}}({{\boldsymbol{\psi }}})}}{{\partial {\beta ^{im}}}} = j{e^{j2\pi \left( {\frac{{m - 1}}{M}{x_1} + \frac{{n - 1}}{N}{x_2}} \right)}}}\\
		{\frac{{\partial {h_{mn}}({{\boldsymbol{\psi }}})}}{{\partial {x_1}}} = j2\pi \frac{{m - 1}}{M}\beta {e^{j2\pi \left( {\frac{{m - 1}}{M}{x_1} + \frac{{n - 1}}{N}{x_2}} \right)}}}\\
		{\frac{{\partial {h_{mn}}({{\boldsymbol{\psi }}})}}{{\partial {x_2}}} = j2\pi \frac{{n - 1}}{N}\beta {e^{j2\pi \left( {\frac{{m - 1}}{M}{x_1} + \frac{{n - 1}}{N}{x_2}} \right)}}}
		\end{array}} \right.}.
\end{aligned}
\end{equation}

Combining \eqref{eq_problem}, \eqref{eq_fLB} and \eqref{eq_fmnd}, we have
\vspace{-1mm}
\begin{align}\label{eq_CMMSEP}
&\frac{1}{{MN}}\mathbb{E}\left[\left\|{\hat{\textbf{h}}_k} - {{ \textbf{h}}_k}\right\|_2^2\right]\nonumber\\
= &\frac{1}{{MN}}\sum\limits_{m = 1}^M {\sum\limits_{n = 1}^N {\mathbb{E}\left[{\big| h_{mn}(\hat{\boldsymbol{\psi}})-h_{mn}({\boldsymbol{\psi}})\big|}^2  \right]} } \\
\overset{(a)}{\ge} & \frac{1}{{MN}}\sum\limits_{m = 1}^M {\sum\limits_{n = 1}^N \left({\textbf{v}_{mn}{{\left(\sum\limits_{l = 1}^k {{\bf{I}}(\psi ,{{\bf{W}}_l})} \right)}^{ - 1}}{\textbf{v}_{mn}^\text{H}}}\right) } \nonumber\\
= &\frac{1}{{MN}}\Tr \left\{ {{{\left(\sum\limits_{l = 1}^k {{\bf{I}}(\psi ,{{\bf{W}}_l})} \right)}^{ - 1}}\sum\limits_{m = 1}^M {\sum\limits_{n = 1}^N {\left({\textbf{v}_{mn}^\text{H}}{\textbf{v}_{mn}}\right)} } } \right\},\nonumber
\end{align}where Step (a) is obtained by substituting \eqref{eq_fmnd} into \eqref{eq_fLB}.

Hence, Lemma \ref{MSEOpt} is proved.

\section{Proof of Lemma~\ref{UnifiedOptShift}}\label{proof_UnifiedOptShift}
Lemma \ref{UnifiedOptShift} is proved in three steps:

\emph{\textbf{Step 1}: We prove that $\boldsymbol{\Delta}_{1}^*,\boldsymbol{\Delta}_{2}^*,\boldsymbol{\Delta}_{3}^*$ are unrelated to channel coefficient $\beta$.}

The basic method is block matrix inversion: the Fisher information matrix in \eqref{eq_fisher} is divided into four $2 \times 2$ matrices as follows:
\vspace{-1mm}
\begin{equation}\label{eq_FisherBlocks}
\begin{aligned}
\textbf{I}(\boldsymbol{\psi}, \textbf{W}_k) = \frac{{2{{\left| {{s_p}} \right|}^2}}}{{{\sigma ^2}}}\left[\!{\begin{array}{*{20}{c}}
	{{\bf{A}}(M,N)}&{{\bf{B}}(M,N,\beta )}\\
	{{{\bf{B}}^\text{T}}(M,N,\beta )}&{{\bf{D}}(M,N,\beta )}
	\end{array}} \!\right]\!,
\end{aligned}
\vspace{-1mm}
\end{equation}where $\textbf{A}(M,N)$, $\textbf{B}(M,N,\beta)$, $\textbf{D}(M,N,\beta)$ are defined as:
\vspace{-1mm}
\begin{align}\label{eq_Block}
\left\{\begin{array}{*{20}{l}}
{\bf{A}}(M,N) \triangleq  \left[ {\begin{array}{*{20}{c}}
	{\left\| {{{\bf{g}}_{{k}}}} \right\|_2^2}&0\\
	0&{\left\| {{{\bf{g}}_{{k}}}} \right\|_2^2}
	\end{array}} \right]  \\
{\bf{B}}(M,N,\beta ) \triangleq \left[ {\begin{array}{*{20}{c}}
	{{\mathop{\rm Re}\nolimits} \{ {\bf{g}}_{{k}}^{\text{H}}{{{\bf{\tilde g}}}_{{{k,1}}}}\} }&{{\mathop{\rm Re}\nolimits} \{ {\bf{g}}_{{k}}^{\text{H}}{{{\bf{\tilde g}}}_{{{k,2}}}}\} }\\
	{{\rm{Im}}\{ {\bf{g}}_{{k}}^{\text{H}}{{{\bf{\tilde g}}}_{{{k,1}}}}\} }&{{\rm{Im}}\{ {\bf{g}}_{{k}}^{\text{H}}{{{\bf{\tilde g}}}_{{{k,2}}}}\} }
	\end{array}} \right]. \\
{\bf{D}}(M,N,\beta) \triangleq \left[\! {\begin{array}{*{20}{c}}
	{\left\| {{{{\bf{\tilde g}}}_{{{k,1}}}}} \right\|_2^2}&{{\mathop{\rm Re}\nolimits} \{ {\bf{\tilde g}}_{{{k,1}}}^{\text{H}}{{{\bf{\tilde g}}}_{{{k,2}}}}\} }\\
	{{\mathop{\rm Re}\nolimits} \{ {\bf{\tilde g}}_{{{k,1}}}^{\text{H}}{{{\bf{\tilde g}}}_{{{k,2}}}}\} }&{\left\| {{{{\bf{\tilde g}}}_{{{k,2}}}}} \right\|_2^2}
	\end{array}} \!\right]
\end{array}\right.\!
\end{align}
\vspace{-1mm}\\
Then the inverse matrix of \eqref{eq_FisherBlocks} is given in
\vspace{-1mm}
\begin{small}
	\begin{equation}\label{eq_InverseFisherBlocks}
	\begin{aligned}
	{\bf{I}}{(\bf{\psi} ,{{\bf{W}}_k})^{-1}}= \frac{{{\sigma ^2}}}{{2{{\left| {{s_p}} \right|}^2}}}\left\{ {{{\bf{I}}_{i{p_1}}}(M,N) + {{\bf{I}}_{i{p_2}}}\left( {M,N,\beta } \right)} \right\},
	\end{aligned}
	\end{equation}
\end{small}where ${{\bf{I}}_{i{p_1}}}\left(M,N\right)$ and ${{\bf{I}}_{i{p_2}}}\left(M,N,\beta\right)$ are defined as
\begin{small}
	\begin{equation}\label{eq_Ip}
	\begin{aligned}
	\!\!\!\left\{\begin{array}{*{20}{l}}
	\!\!\!{{\bf{{ I}}}_{i{p_1}}}(M,N) \triangleq \left[ {\begin{array}{*{20}{c}}
		{{\bf{A}}^{-1}}&{\bf{0}}\\
		{\bf{0}}&{\bf{0}}
		\end{array}} \right]\\
	\!\!\!{{\bf{I}}_{i{p_2}}}\left(M,N,\beta\right) \!\triangleq \!\left[\! {\begin{matrix}
		{{\bf{A}}{^{\!-1}}{\bf{B}}}\\
		{{\bf{ - }}{{\bf{J}}_2}}
		\end{matrix}} \!\right]\!\!{{\left( {{\bf{D}}\!-\!{{\bf{B}}^{\rm{T}}}{\bf{A}}{^{\!-1}}{\bf{B}}} \right)}^{\!-1}}\!\left[ {\begin{matrix}
		{{{\bf{B}}^{\rm{T}}}{\bf{A}}{^{\!-1}}}& \!\!\!\!{{\bf{\!\!-}}{{\bf{J}}_2}}
		\end{matrix}} \!\right].
	\end{array}\!\!\right.\!\!\!\!
	\end{aligned}
	\end{equation}
\end{small}$\textbf{J}_2$ is $2 \times 2$ identity matrix. By combining $\textbf{A}(M,N)$, $\textbf{B}(M,N,\beta)$, and $\textbf{D}(M,N,\beta)$ in \eqref{eq_Block}, ${{\left( {{\bf{D}} - {{\bf{B}}^{\rm{T}}}{\bf{A}}{^{-1}}{\bf{B}}} \right)}}/{\lvert\beta \rvert}^2$ can be converted to a matrix $\textbf{I}_s(M,N)$ as shown in \eqref{eq_Is},
\begin{figure*}[!t]
	\normalsize
	\vspace{-5mm}
	\begin{align}\label{eq_Is}
	\frac{{\bf{D}}\!-\!{{\bf{B}}^{\rm{T}}}{\bf{A}}{^{\!-1}}{\bf{B}}}{{\lvert \beta \rvert}^2} &=\frac{1}{{\lvert \beta \rvert}^2}\left\{\left[ {\begin{matrix}
		{\left\| {{{{\bf{\tilde g}}}_{{{k,1}}}}} \right\|_2^2}&{{\mathop{\rm Re}\nolimits} \{ {\bf{\tilde g}}_{{{k,1}}}^{\text{H}}{{{\bf{\tilde g}}}_{{{k,2}}}}\} }\\
		{{\mathop{\rm Re}\nolimits} \{ {\bf{\tilde g}}_{{{k,1}}}^{\text{H}}{{{\bf{\tilde g}}}_{{{k,2}}}}\} }&{\left\| {{{{\bf{\tilde g}}}_{{{k,2}}}}} \right\|_2^2}
		\end{matrix}} \right]-\frac{1}{\left\|\bf{g}_{k}\right\|_2^2}\left[ {\begin{matrix}
		{  \left\|{\bf{g}}_{{k}}^{\text{H}}{\bf{\tilde g}}_{k,1}\right\|_2^2 }&{{\mathop{\rm Re}\nolimits} \{\bf{\tilde{g}}_{k,1}^\text{H}{\bf{g}_{k}} \bf{g}_{k}^\text{H}{\bf{\tilde{g}}_{k,2}}\} }\\
		{{\mathop{\rm Re}\nolimits} \{\bf{\tilde{g}}_{k,1}^\text{H}{\bf{g}_{k}} \bf{g}_{k}^\text{H}{\bf{\tilde{g}}_{k,2}}\} }&{  \left\|{\bf{g}}_{{k}}^{\text{H}}{\bf{\tilde g}}_{k,2}\right\|_2^2 }
		\end{matrix}} \right]\right\}\\
	&\overset{(a)}{=}\frac{1}{{\lvert \beta \rvert}^2}{\lvert \beta \rvert}^2\left\{\left[ {\begin{matrix}
		{\left\| {{{{\bf{\breve g}}}_{{{k,1}}}}} \right\|_2^2}&{{\mathop{\rm Re}\nolimits} \{ {\bf{\breve g}}_{{{k,1}}}^{\text{H}}{{{\bf{\breve g}}}_{{{k,2}}}}\} }\\
		{{\mathop{\rm Re}\nolimits} \{ {\bf{\breve g}}_{{{k,1}}}^{\text{H}}{{{\bf{\breve g}}}_{{{k,2}}}}\} }&{\left\| {{{{\bf{\breve g}}}_{{{k,2}}}}} \right\|_2^2}
		\end{matrix}} \right]-\frac{1}{\left\|\bf{g}_{k}\right\|_2^2}\left[ {\begin{matrix}
		{  \left\|{\bf{g}}_{{k}}^{\text{H}}{\bf{\breve g}}_{k,1}\right\|_2^2 }&{{\mathop{\rm Re}\nolimits} \{\bf{\breve{g}}_{k,1}^\text{H}{\bf{g}_{k}} \bf{g}_{k}^\text{H}{\bf{\breve{g}}_{k,2}}\} }\\
		{{\mathop{\rm Re}\nolimits} \{\bf{\breve{g}}_{k,1}^\text{H}{\bf{g}_{k}} \bf{g}_{k}^\text{H}{\bf{\breve{g}}_{k,2}}\} }&{  \left\|{\bf{g}}_{{k}}^{\text{H}}{\bf{\breve g}}_{k,2}\right\|_2^2 }
		\end{matrix}} \right]\right\}\nonumber\\
	&=\left\{\left[ {\begin{matrix}
		{\left\| {{{{\bf{\breve g}}}_{{{k,1}}}}} \right\|_2^2}&{{\mathop{\rm Re}\nolimits} \{ {\bf{\breve g}}_{{{k,1}}}^{\text{H}}{{{\bf{\breve g}}}_{{{k,2}}}}\} }\\
		{{\mathop{\rm Re}\nolimits} \{ {\bf{\breve g}}_{{{k,1}}}^{\text{H}}{{{\bf{\breve g}}}_{{{k,2}}}}\} }&{\left\| {{{{\bf{\breve g}}}_{{{k,2}}}}} \right\|_2^2}
		\end{matrix}} \right]-\frac{1}{\left\|\bf{g}_{k}\right\|_2^2}\left[ {\begin{matrix}
		{  \left\|{\bf{g}}_{{k}}^{\text{H}}{\bf{\breve g}}_{k,1}\right\|_2^2 }&{{\mathop{\rm Re}\nolimits} \{\bf{\breve{g}}_{k,1}^\text{H}{\bf{g}_{k}} \bf{g}_{k}^\text{H}{\bf{\breve{g}}_{k,2}}\} }\\
		{{\mathop{\rm Re}\nolimits} \{\bf{\breve{g}}_{k,1}^\text{H}{\bf{g}_{k}} \bf{g}_{k}^\text{H}{\bf{\breve{g}}_{k,2}}\} }&{  \left\|{\bf{g}}_{{k}}^{\text{H}}{\bf{\breve g}}_{k,2}\right\|_2^2 }
		\end{matrix}} \right]\right\} \triangleq \textbf{I}_s(M,N).\nonumber
	\end{align}
	\vspace*{-0pt}
\end{figure*}where Step (a) is due to the definition of $\breve{\textbf{g}}_{k,1}$ and $\breve{\textbf{g}}_{k,2}$:
\vspace{-1mm}
\begin{equation}\label{eq_gkb}
\left\{{\begin{array}{*{20}{l}}
	\breve{\textbf{g}}_{k,1} \triangleq \frac{1}{\beta} \tilde{\textbf{g}}_{k,1} =\textbf{W}_k^\text{H}\frac {\partial \textbf{a}\left(\textbf{x}\right)}{\partial x_1},\\
	\breve{\textbf{g}}_{k,2} \triangleq \frac{1}{\beta} \tilde{\textbf{g}}_{k,2} =\textbf{W}_k^\text{H}\frac {\partial \textbf{a}\left(\textbf{x}\right)}{\partial x_2}.
	\end{array}}\right.
\end{equation}
\vspace{-1mm}
In \eqref{eq_Is}, $\textbf{I}_s(M,N)$ is unrelated to channel coefficient $\beta$ because none of $\textbf{g}_k$, $\breve{\textbf{g}}_{k,1}$, and $\breve{\textbf{g}}_{k,2}$ in \eqref{eq_Is} is related to $\beta$. By combining \eqref{eq_Ip} and \eqref{eq_Is}, we can rewrite \eqref{eq_Ip} as follows:
\begin{small}
	\begin{equation}\label{eq_rIp}
	\begin{aligned}
	\!\!\!\left\{\!\!\!\begin{array}{*{20}{l}}
	{{\bf{{ I}}}_{i{p_1}}}(M,N) \triangleq \left[ {\begin{array}{*{20}{c}}
		{{\bf{A}}^{-1}}&{\bf{0}}\\
		{\bf{0}}&{\bf{0}}
		\end{array}} \right]\\
	{{\bf{I}}_{i{p_2}}}\left(M,N,\beta\right) \!\triangleq\! \left[\! {\begin{matrix}
		{{\bf{A}}{^{\!-1}}{\bf{B}}}\\
		{{\bf{ - }}{{\bf{J}}_2}}
		\end{matrix}} \!\right]{{\left( \lvert \beta \rvert^2 \textbf{I}_s(M,N)\right)}^{\!-1}}\left[ {\begin{matrix}
		{{{\bf{B}}^{\rm{T}}}{\bf{A}}{^{\!-1}}}& \!\!\!\!{{\bf{\!\!-}}{{\bf{J}}_2}}
		\end{matrix}} \!\right]\!.
	\end{array}\!\!\right.\!\!\!\!
	\end{aligned}
	\end{equation}
\end{small}

Except for the inverse of the Fisher information matrix, the other parts in \eqref{eq_CMMSE} can be converted to \eqref{eq_sumrank1},
\begin{figure*}[!t]
	\vspace{-6mm}
	\begin{equation}\label{eq_sumrank1}
	\begin{aligned}
	T(M,N,\beta) \triangleq \sum\limits_{m = 1}^M {\sum\limits_{n = 1}^N {\left( {{\textbf{v}_{m,n}^{\rm{H}}}{\textbf{v}_{m,n}}} \right)} } 
	\!=\!MN\!\left[\!\!\! {\begin{array}{*{20}{c}}
		{1}&{j}&{j\pi \beta \frac{M - 1}{M}}&{j\pi \beta \frac{N-1}{N}}\\
		{-j}&{1}&{\pi \beta \frac{M - 1}{M}}&{\pi \beta \frac{N-1}{N}}\\
		{ - j\pi \bar \beta \frac{M - 1}{M}}&{\pi \bar \beta \frac{M - 1}{M}}&{\frac{2}{3}{\pi ^2}{{\left| \beta  \right|}^2}\frac{{(M - 1)(2M - 1)}}{M^2}}&{{\pi ^2}{{\left| \beta  \right|}^2}\frac{(M - 1)(N - 1)}{MN}}\\
		{ - j\pi \bar \beta \frac{N-1}{N}}&{\pi \bar \beta \frac{N - 1}{N}}&{{\pi ^2}{{\left| \beta  \right|}^2}\frac{(M - 1)(N - 1)}{MN}}&{\frac{2}{3}{\pi ^2}{{\left| \beta  \right|}^2}M\frac{{(N - 1)(2N - 1)}}{N^2}}
		\end{array}} \!\!\!\right]\!.\!\!
	\end{aligned}
	\end{equation}
	\hrulefill
	\vspace*{-5pt}
\end{figure*}where $\bar \beta$ denotes the conjugate of $\beta$. Therefore, we rewrite \eqref{eq_CMMSE} as:
\vspace{-1mm}
\begin{small}
	\begin{align}\label{eq_CMMSETemp}
	{I}_{\min}(\boldsymbol{\psi}) &=\frac{1}{MN} \Tr\left\{ {{{\left( {k{\bf{I}}(\psi ,{{\bf{W}}^*})} \right)}^{ - 1}}\sum\limits_{m = 1}^M {\sum\limits_{n = 1}^N {({\textbf{v}_{mn}^{\rm{H}}}\textbf{v}_{mn})} } } \right\}\nonumber\\
	&=\frac{1}{kMN}\frac{{{\sigma ^2}}}{{2{{\left| {{s_p}} \right|}^2}}}\Tr\left\{ {{\left( {{\bf{I}}(\psi ,{{\bf{W}}^*})} \right)}^{ - 1}}{\bf{T}}(M,N,\beta) \right\}\nonumber\\
	&=\frac{1}{kMN}\frac{{{\sigma ^2}}}{{2{{\left| {{s_p}} \right|}^2}}} \Tr\left\{ {{{\bf{I}}_{i{p_1}}}(M,N){\bf{T}}(M,N,\beta )}\right\}\\
	&+ \frac{1}{kMN}\frac{{{\sigma ^2}}}{{2{{\left| {{s_p}} \right|}^2}}} \Tr\left\{ {{{\bf{I}}_{i{p_2}}}(M,N){\bf{T}}(M,N,\beta )}\right\}\nonumber\\
	&\overset{(a)}{=}\frac{1}{kMN}\frac{{{\sigma ^2}}}{{2{{\left| {{s_p}} \right|}^2}}}\!\left\{\! {\frac{{2MN}}{{\left\| {{{\bf{g}}_{\bf{k}}}} \right\|_2^2}} \!+\! \Tr\left\{ {{{\bf{I}}_{i{p_2}}}(M,N){\bf{T}}(M,N,\beta )} \right\}} \!\right\}\nonumber\\
	&=\frac{1}{k}\frac{{{\sigma ^2}}}{{2{{\left| {{s_p}} \right|}^2}}}\!\left\{\! {\frac{{2}}{{\left\| {{{\bf{g}}_{\bf{k}}}} \right\|_2^2}} \!+\! \frac{1}{MN}\Tr\left\{ {{{\bf{I}}_{i{p_2}}}(M,N){\bf{T}}(M,N,\beta )} \right\}} \!\right\},\nonumber
	\end{align}
\end{small}where step (a) is by combining \eqref{eq_Ip} and \eqref{eq_sumrank1}.

To calculate $\Tr\left\{ {{{\bf{I}}_{i{p_2}}}(M,N){\bf{T}}(M,N,\beta )} \right\}$ in \eqref{eq_CMMSETemp}, we split ${\bf{T}}(M,N,\beta)$ in \eqref{eq_sumrank1} into two parts \eqref{eq_Tparts}:
\vspace{-1mm}
\begin{small}
	\begin{equation}\label{eq_Tparts}
	\begin{aligned}
	\!\!{\bf{T}}(M,N,\beta)\!=\! MN\!\!\left\{\!{\left[\!\! {\begin{array}{*{20}{c}}
			{{{\bf{b}}_{\text{T}}}}\\
			{{{\bf{c}}_{\text{T}}}}
			\end{array}} \!\right] {{{\left[\! {\begin{array}{*{20}{c}}
						{{{\bf{b}}_{\text{T}}}}\\
						{{{\bf{c}}_{\text{T}}}}
						\end{array}} \!\!\right]}^{\text{H}}}}\!\!+\! \left[\!\! {\begin{array}{*{20}{c}}
			{\bf{0}}&{\!\!\bf{0}}\\
			{\bf{0}}&{\!\!{{\bf{T}}_\text{D}}(M,N,\beta )}
			\end{array}} \!\!\right]} \!\right\},\!\!
	\end{aligned}
	\end{equation}
\end{small}
\vspace{-1mm}where ${{\bf{b}}_{\rm{T}}}$, ${{\bf{c}}_{\rm{T}}}$ and ${{\bf{T}}_\text{D}}(M,N,\beta)$ are defined as:
\begin{small}
	\begin{equation}\label{eq_rTD}
	\begin{aligned}
	\!\left\{\!{\begin{array}{*{20}{l}}
		{{\bf{b}}_{\rm{T}}} \triangleq  {\left[ {1, - j} \right]^{\rm{T}}}\\
		{{\bf{c}}_{\bf{T}}} \triangleq {\left[ {j\pi \bar \beta \frac{{M - 1}}{M},j\pi \bar \beta \frac{{N - 1}}{N}} \right]^{\rm{T}}}\\
		{{\bf{T}}_{\rm{D}}}(M,N,\beta) \!\triangleq\! \frac{1}{3}{\pi ^2}{\left| \beta  \right|^2}\!\left[\!\!{\begin{array}{*{20}{c}}
			{\frac{{(M - 1)(M - 3)}}{{{M^2}}}}&0\\
			0&{\frac{{(N - 1)(N - 3)}}{{{N^2}}}}
			\end{array}}\!\!\right]\\
		\end{array}}\right.\!\!
	\end{aligned}
	\end{equation}
\end{small}Hence, $\Tr\left\{ {{{\bf{I}}_{i{p_2}}}(M,N){\bf{T}}(M,N,\beta )} \right\}$ can be converted to
\begin{align}\label{eq_rtrcal}
&\quad \Tr\left\{ {{{\bf{I}}_{i{p_2}}}(M,N){\bf{T}}(M,N,\beta )} \right\}\nonumber\\
=&\quad \, MN\left( {\Tr\left\{ {{{\bf{I}}_{i{p_2}}}(M,N)\left[ {\begin{array}{*{20}{c}}
			{{{\bf{b}}_{\rm{T}}}}\\
			{{{\bf{b}}_{\rm{T}}}}
			\end{array}} \right] {{{\left[ {\begin{array}{*{20}{c}}
						{{{\bf{b}}_{\rm{T}}}}\\
						{{{\bf{c}}_{\rm{T}}}}
						\end{array}} \right]}^{\text{H}}}} } \right\}}\right) \\
&+MN\left( { \Tr\left\{ {{{\bf{I}}_{i{p_2}}}(M,N)\left[ {\begin{array}{*{20}{c}}
			{\bf{0}}&{\bf{0}}\\
			{\bf{0}}&{{{\bf{T}}_{\rm{D}}}(M,N,\beta )}
			\end{array}} \right]} \right\}} \right).\nonumber
\end{align}

Calculate the first part and second part separately in \eqref{eq_rtrcal}, we obtain that
\vspace{-1mm}
\begin{small}
	\begin{align}\label{eq_rtrcalsp1t}
	%\begin{array}{l}
	&\quad\Tr \left\{ {{{\bf{I}}_{i{p_2}}}(M,N)\left[ {\begin{matrix}
			{{{\bf{b}}_{\text{T}}}}\\
			{{{\bf{c}}_{\text{T}}}}
			\end{matrix}} \right] {{{\left[ {\begin{matrix}
						{{{\bf{b}}_{\text{T}}}}\\
						{{{\bf{c}}_{\text{T}}}}
						\end{matrix}} \right]}^{\text{H}}}} } \right\}\nonumber\\
	&=\Tr\left\{ {\left[ {\begin{matrix}
			{{{\bf{A}}^{\!-1}}{\bf{B}}}\\
			{{\bf{-}}{{\bf{J}}_2}}
			\end{matrix}} \right]{\left({{\lvert\beta \rvert}^2 \textbf{I}_s(M,N)}\right)^{\!-1}}\left[ {\begin{matrix}
			{{{\bf{B}}^{\rm{T}}}{{\bf{A}}^{\!-1}}}&\!\!\!\!\!{{\bf{-}}{{\bf{J}}_2}}
			\end{matrix}} \right]\left[ {\begin{matrix}
			{{{\bf{b}}_{\rm{T}}}}\\
			{{{\bf{c}}_{\rm{T}}}}
			\end{matrix}} \right]{{\left[ {\begin{matrix}
					{{{\bf{b}}_{\rm{T}}}}\\
					{{{\bf{c}}_{\rm{T}}}}
					\end{matrix}} \right]}^{\rm{H}}}} \right\}\nonumber\\
	&=\Tr\left\{ {{{\left[ {\begin{matrix}
					{{{\bf{b}}_{\rm{T}}}}\\
					{{{\bf{c}}_{\rm{T}}}}
					\end{matrix}} \right]}^{\text{H}}}\left[ {\begin{matrix}
			{{{\bf{A}}^{\!-1}}{\bf{B}}}\\
			{{\bf{-}}{{\bf{J}}_2}}
			\end{matrix}} \right]{\left({{\lvert\beta \rvert}^2 \textbf{I}_s(M,N)}\right)^{-1}}\left[ {\begin{matrix}
			{{{\bf{B}}^{\rm{T}}}{{\bf{A}}^{\!-1}}}&\!\!\!\!\!{{\bf{-}}{{\bf{J}}_2}}
			\end{matrix}} \right]\left[ {\begin{matrix}
			{{{\bf{b}}_{\text{T}}}}\\
			{{{\bf{c}}_{\text{T}}}}
			\end{matrix}} \right]} \right\}\nonumber\\
	&\overset{(a)}{=}\Tr\left\{ \left({\beta{\bf{a}}_{s}(M,N)}\right)^{\rm{H}}{\left({\lvert \beta \rvert}^2 {{ {\bf{I}}_s}(M,N)} \right)}^{-1}{{\beta \bf{a}}_s}(M,N) \right\}\nonumber\\
	&=\Tr\left\{\left({{\bf{a}}_{s}^{\rm{H}}(M,N)}\right){\left( {{ {\bf{I}}_s}(M,N)} \right)}^{-1}{{\bf{a}}_s}(M,N) \right\},
	\end{align}
\end{small}
\begin{align}\label{eq_rtrcalsp2t}
& { \Tr\left\{ {{{\bf{I}}_{i{p_2}}}(M,N)\left[ {\begin{array}{*{20}{c}}
			{\bf{0}}&{\bf{0}}\\
			{\bf{0}}&{{{\bf{T}}_{\rm{D}}}(M,N,\beta )}
			\end{array}} \right]} \right\}}\nonumber \\
\overset{(b)}{=}& { \Tr\left\{ {\left({{\lvert\beta \rvert}^2 \textbf{I}_s(M,N)}\right)^{-1}} {{\bf{T}}_{\rm{D}}}(M,N,\beta) \right\}}\\
=&\frac{1}{3}{\pi ^2}  \Tr\!\left\{ {{{\bf{I}}_s}{{(M,N)}^{ - 1}}\left[\! {\begin{array}{*{20}{c}}
		{\frac{{(M - 1)(M - 3)}}{{{M^2}}}}&0\\
		0&{\frac{{(N - 1)(N - 3)}}{{{N^2}}}}
		\end{array}} \!\!\!\right]} \right\}.\nonumber
\end{align}
In \eqref{eq_rtrcalsp1t}, Step (a) is due to the definition of ${\bf{a}}_s (M,N)$:
\begin{equation}\label{eq_as}
\begin{aligned}
{\bf{a}}_s (M,N) &\triangleq \frac{1}{\bar \beta} \left[ {\begin{matrix}
	{{{\bf{B}}^{\rm{T}}}{{\bf{A}}^{\!-1}}}&\!\!\!\!\!{{\bf{-}}{{\bf{J}}_2}}
	\end{matrix}} \right]\left[ {\begin{matrix}
	{{{\bf{b}}_{\text{T}}}}\\
	{{{\bf{c}}_{\text{T}}}}
	\end{matrix}} \right]\\
&= \frac{1}{\bar \beta}\left( {\frac{1}{{\left\| {{{\bf{g}}_k}} \right\|_2^2}}{{[{\bf{g}}_k^{\rm{H}}{{{\bf{\tilde g}}}_{k1}},{\bf{g}}_k^{\rm{H}}{{{\bf{\tilde g}}}_{k2}}]}^\text{H}} - {{\bf{c}}_{\rm{T}}}}\right)\\
&\overset{(c)}{=} \frac{1}{\bar \beta}\left( \frac{\bar \beta}{{\left\| {{{\bf{g}}_k}} \right\|_2^2}} \left[\begin{matrix}{\bf{\breve g}}_{k1}^\text{H}{\bf{g}}_k\\{\bf{\breve g}}_{k2}^\text{H}{\bf{g}}_k\end{matrix}\right]
-j \pi \bar \beta \left[\begin{matrix}\frac{M-1}{M}\\ \frac{N-1}{N}\end{matrix}\right]^\text{T}\right)\\
&= \left( \frac{1}{{\left\| {{{\bf{g}}_k}} \right\|_2^2}} \left[\begin{matrix}{\bf{\breve g}}_{k1}^\text{H}{\bf{g}}_k\\{\bf{\breve g}}_{k2}^\text{H}{\bf{g}}_k\end{matrix}\right]
-j \pi \left[\begin{matrix}\frac{M-1}{M}\\ \frac{N-1}{N}\end{matrix}\right]\right),
\end{aligned}
\end{equation}where Step (c) is due to the combination of \eqref{eq_gkb} and \eqref{eq_rTD}. In \eqref{eq_as}, $\textbf{a}_s(M,N)$ is unrelated to $\beta$ because none of $\textbf{g}_k$, $\breve{\textbf{g}}_{k,1}$, and $\breve{\textbf{g}}_{k,2}$ in \eqref{eq_as} is related to $\beta$. In \eqref{eq_rtrcalsp2t}, Step (b) is obtained by substituting \eqref{eq_Ip} into \eqref{eq_rtrcalsp2t}.

Substituting \eqref{eq_rtrcalsp1t} and \eqref{eq_rtrcalsp2t} into \eqref{eq_rtrcal}, we can obtain:
\begin{align}\label{eq_calsp}
\quad &\Tr\left\{ {{{\bf{I}}_{i{p_2}}}(M,N){\bf{T}}(M,N,\beta )} \right\}\nonumber\\
=&MN\Tr\left\{ {{\bf{a}}_{\rm{s}}^\text{H}(M,N){{\bf{I}}_s}{{(M,N)}^{ - 1}}{{\bf{a}}_{\rm{s}}}(M,N)} \right\}\\
+&\frac{\pi ^2MN}{3}\Tr\!\left\{\! {{{\bf{I}}_s}{{(M,N)}^{ \!- 1}}\!\left[\!\! {\begin{array}{*{20}{c}}
		{\frac{{(M - 1)(M - 3)}}{{{M^2}}}}&0\\
		0&{\frac{{(N - 1)(N - 3)}}{{{N^2}}}}
		\end{array}} \!\!\right]} \!\right\}\nonumber,
\end{align}which reveal that $\Tr\left\{ {{{\bf{I}}_{i{p_2}}}(M,N){\bf{T}}(M,N,\beta)}\right\}$ is irrelevant to channel coefficient $\beta$.

Since other parts except for $\Tr\left\{ {{{\bf{I}}_{i{p_2}}}(M,N){\bf{T}}(M,N,\beta)}\right\}$ in \eqref{eq_CMMSETemp} are also irrelevant to channel coefficient $\beta$, the minimum channel vector MSE $I_{\min}(\boldsymbol{\psi})$ is unrelated to $\beta$ and the optimal direction parameter offsets $\boldsymbol{\Delta}_{1}^{*},\boldsymbol{\Delta}_{2}^{*},\boldsymbol{\Delta}_{3}^{*}$ are invariant to channel coefficient $\beta$.

\emph{\textbf{Step 2}: We prove that $\boldsymbol{\Delta}_{1}^*,\boldsymbol{\Delta}_{2}^*,\boldsymbol{\Delta}_{3}^*$ are unrelated to direction parameter vector $\textbf{x}$.}

Since the analog beamforming vectors are steering vectors, i.e., $\textbf{w}_{k,i} = \frac{1}{\sqrt{MN}}\textbf{a}\left(\textbf{x}+\boldsymbol{\Delta}_{k,i}\right)$, where $\boldsymbol{\Delta}_{k,i}=\left[\delta_{k,i1},\delta_{k,i2}\right]^\text{T}$ denotes the $i$-th direction parameter offset, the $i$-th ($i=1,2,3$) element of ${{{\bf{g}}}_{k}}$ and ${{{\bf{\tilde g}}}_{k1}}$ defined in the Fisher information matrix \eqref{eq_fisher} can be rewritten as \eqref{eq_gki} and \eqref{eq_gkti1}:
\vspace{-1mm}
\begin{align}\label{eq_gki}
&{[{{\bf{g}}_k}]_i} = \frac{1}{{\sqrt {MN} }}\sum\limits_{m = 1}^M {\sum\limits_{n = 1}^N {{e^{-j2\pi \left[ {\frac{{(m - 1)\delta_{k,i1}}}{M} + \frac{{(n - 1)\delta_{k,i2}}}{N}} \right]}}} } \\
&= \frac{1}{{\sqrt {MN} }}\frac{{\sin (\pi \delta_{k,i1})}}{{\sin \left( { \frac{\pi \delta_{k,i1}}{M}} \right)}}\frac{\sin (\pi \delta_{k,i2})}{{\sin \left( {\frac{\pi \delta_{k,i2}}{N}} \right)}}{e^{-j\pi \left[ {\frac{{M - 1}}{M}\delta_{k,i1} + \frac{{N - 1}}{N}\delta_{k,i2}} \right]}}. \nonumber
\end{align}
\begin{figure*}[!t]
	\normalsize
	\vspace{-5mm}
	\begin{equation}\label{eq_gkti1}
	\begin{aligned}
	{[{{{\bf{\tilde g}}}_{k1}}]_i} &= \beta {\bf{w}}_{k,i}^{\text{H}}\frac{{\partial {\bf{a}}({\bf{x}})}}{{\partial {x_1}}} = \frac{\beta }{{\sqrt {MN} }}\left( {\sum\limits_{m = 1}^M {\sum\limits_{n = 1}^N {j2\pi \frac{{m - 1}}{M}{e^{-j2\pi \left[ {\frac{{(m - 1)\delta_{k,i1}}}{M} + \frac{{(n - 1)\delta_{k,i2}}}{N}} \right]}}} } } \right)\\
	&= \frac{{j2\pi \beta }}{{M\sqrt {MN} }}\left( {\frac{{\sin (\pi \delta_{k,i2})}}{{\sin \left( {\frac{\pi \delta_{k,i2} }{N}} \right)}}{e^{-j\pi \frac{{N - 1}}{N}\delta_{k,i2}}}\frac{{(M - 1){e^{-j2\pi \delta_{k,i1}}} - M{e^{-j2\pi \frac{{M - 1}}{M}\delta_{k,i1}}} + 1}}{{{{\left[ {1 - {e^{-j2\pi \frac{{\delta_{k,i1}}}{M}}}} \right]}^2}}}{e^{-j2\pi \frac{{\delta_{k,i1}}}{M}}}} \right).
	\end{aligned}
	\end{equation}
	\hrulefill
	\vspace*{0pt}
\end{figure*}

As shown in \eqref{eq_gki} and \eqref{eq_gkti1}, both $\textbf{g}_k$ and $\tilde{\textbf{g}}_{k1}$ have nothing to do with the direction parameter vector $\textbf{x}=\left[x_1,x_2\right]^\text{T}$, which is also feasible to  $\tilde{\textbf{g}}_{k,2}$. Therefore, the whole Fisher information matrix $\textbf{I}({\boldsymbol{\psi}}, \textbf{W})$ \eqref{eq_fisher} has nothing to do with $\textbf{x}$. In addition, $\textbf{T}(M,N,\beta)$ in \eqref{eq_sumrank1} is unrelated to $\textbf{x}$. Hence, the minimum CRLB in \eqref{eq_CMMSE} has nothing to do with $\textbf{x}$ and the optimal direction parameter offsets $\boldsymbol{\Delta}_{1}^{*},\boldsymbol{\Delta}_{2}^{*},\boldsymbol{\Delta}_{3}^{*}$ are invariant to the direction parameter vector $\textbf{x}=\left[x_1,x_2\right]^\text{T}$.

\emph{\textbf{Step 3}: We prove that $\boldsymbol{\Delta}_{1}^*,\boldsymbol{\Delta}_{2}^*,\boldsymbol{\Delta}_{3}^*$ converge to constant values as $M,N \to +\infty$.}

Let us go into the asymptotic features of \eqref{eq_CMMSE}. By \eqref{eq_gki} and \eqref{eq_gkti1}, when antenna number $\emph{M}$,\,$\emph{N} \to +\infty$, the limit of $i$-th ($i=1,2,3$) element of ${{{\bf{g}}}_{k}}$ and ${{{\bf{\tilde g}}}_{k,1}}$ are given as follows:
\vspace{-1mm}
%\begin{small}
\begin{align}\label{eq_gkil}
&\mathop {\lim }\limits_{M.N \to  + \infty } \frac{{{{[{{\bf{g}}_{\bf{k}}}]}_i}}}{{\sqrt {MN} }}
= \operatorname{Sa}\left(\pi \delta_{k,i1}\right)\operatorname{Sa}[\pi \delta_{k,i2}]{e^{{\rm{ - }}j\pi \left(\delta_{k,i1}+\delta_{k,i2}\right)}}.
\end{align}
\begin{align}\label{eq_gktil1}
&\mathop {\lim }\limits_{M,N \to  + \infty } \frac{{{{[{{{\bf{\tilde g}}}_{k1}}]}_i}}}{{\sqrt {MN} }}\\
&= j2\pi \beta Sa[\pi \delta_{k,i2}]{e^{{\rm{ - }}j\pi \delta_{k,i2}}}\frac{{{e^{{\rm{ - }}j2\pi \delta_{k,i1}}}\left(1{\rm{ + }}j2\pi \delta_{k,i1}\right) - 1}}{{{{\left(2\pi \delta_{k,i1}\right)}^2}}}.\nonumber
\end{align}By \eqref{eq_gkil}, we can obtain that
\begin{align}\label{eq_gkv}
\mathop {\lim}\limits_{M.N \to  + \infty } \frac{\left\|{\bf{g}_k}\right\|_2^2}{MN} = \sum\limits_{i = 1}^3 {\operatorname{Sa}^2\left(\pi \delta_{k,i1}\right)\operatorname{Sa}^2\left(\pi \delta_{k,i2}\right)}.
\end{align}

Hence, the first element of ${\textbf{I}}(\boldsymbol{\psi} ,{{\bf{W}}_k})/{MN}$ in \eqref{eq_fisher} converges when $M,\,N \to +\infty$. Similar to that, other elements of ${\bf{I}}(\psi ,{{\bf{W}}_k})/{MN}$ in \eqref{eq_fisher} also converge. Thus, the whole matrix ${\bf{I}}(\psi,{{\bf{W}}_k})/{MN}$ converge as $M,\,N \to +\infty$, the limit defined as follows:
\vspace{-1mm}
\begin{equation}\label{eq_IL}
\begin{aligned}
{\textbf{I}}_\text{L}(\boldsymbol{\psi} ,{{\bf{W}}_k}) \triangleq \lim\limits_{M,N \to  + \infty }  \frac{1}{MN}{\textbf{I}}(\boldsymbol{\psi} ,{{\bf{W}}_k}).
\end{aligned}
\end{equation}
\vspace{-1mm}

The limit of ${{\bf{T}}(M,N,\beta )}$ in \eqref{eq_sumrank1} is given as:
\vspace{-1mm}
\begin{equation}\label{eq_TL}
\begin{aligned}
\begin{array}{l}
\mathop {\lim }\limits_{M,N \to  + \infty } \frac{1}{MN}{{\bf{T}}(M,N,\beta)}\\
= \left[ {\begin{array}{*{20}{c}}
	1&j&{j\pi \beta }&{j\pi \beta }\\
	{ - j}&1&{\pi \beta }&{\pi \beta }\\
	{ - j\pi \bar \beta }&{\pi \bar \beta }&{\frac{4}{3}{\pi ^2}{{\left| \beta  \right|}^2}}&{{\pi ^2}{{\left| \beta  \right|}^2}}\\
	{ - j\pi \bar \beta }&{\pi \bar \beta }&{{\pi ^2}{{\left| \beta  \right|}^2}}&{\frac{4}{3}{\pi ^2}{{\left| \beta  \right|}^2}}
	\end{array}} \right]\\
\triangleq {{\bf{T}}_{\rm{L}}}(\beta )
\end{array}
\end{aligned}
\end{equation}
\vspace{-1mm}

Combine \eqref{eq_CMMSE}, \eqref{eq_IL}, and \eqref{eq_TL} , we obtain the limit of $I_{\min}(\boldsymbol{\psi})$ in \eqref{eq_CMMSE} as $M,\,N \to + \infty$:
\vspace{-1mm}
\begin{equation}\label{eq_CMMSEL}
\begin{aligned}
&\mathop {\lim }\limits_{M,N \to  + \infty } \left( {MN \times I_{\min}(\boldsymbol{\psi})} \right)\\
{\rm{ = }}&\mathop {\lim }\limits_{M,N \to  + \infty } \Tr\left\{ {{{(k{\bf{I}}(\boldsymbol{\psi} ,{{\bf{W}}^*}))}^{ - 1}}\sum\limits_{m = 1}^M {\sum\limits_{n = 1}^N {{\textbf{v}_{mn}^{\rm{H}}}\textbf{v}_{mn}} } } \right\}\\
{\rm{  = }}&\mathop {\lim }\limits_{M,N \to  + \infty } \Tr\left\{ {{{\left( {k M N{{\textbf{I}}_{\text{L}}(\boldsymbol{\psi} ,{{\bf{W}}^*})}} \right)}^{ - 1}}MN{{\bf{T}}_\text{L}}(\beta )} \right\}{\rm{ }}\\
{\rm{ = }}&\Tr\left\{ {{{\left( {k{\textbf{I}}_{\text{L}}(\boldsymbol{\psi} ,{{\bf{W}}^*})} \right)}^{ - 1}}{{\bf{T}}_\text{L}}(\beta )} \right\},
\end{aligned}
\end{equation}
\vspace{-1mm}which reveals that the optimal analog beamforming matrix converges, i.e, the optimal direction parameter offsets $\boldsymbol{\Delta}_{1}^*,\boldsymbol{\Delta}_{2}^*,\boldsymbol{\Delta}_{3}^*$ converge to constant values determined by \eqref{eq_CMMSEL}. 

Therefore, Lemma \ref{UnifiedOptShift} gets proved.

\section{Proof of Theorem~\ref{Converge to unique stable point}}\label{proof_Converge to unique stable point}
Recall the beam and channel tracking procedure in \eqref{eq_rtracking}. 
Since $\textbf{z}_k \triangleq \left[z_{k,1},z_{k,2},z_{k,3} \right]$ in \eqref{eq_z} is composed of three \emph{i.i.d.} circularly symmetric complex Gaussian random variables, the expectation of $\hat{\textbf{z}}_k$ is $\mathbb{E}\left[\hat{\textbf{z}}_k\right] = \textbf{0}$ and the covariance matrix is given by \eqref{eq_rzc}, where Step $(a)$ is obtained as follows:

\begin{figure*}[!t]
	\normalsize
	\vspace{-6mm}
	\begin{equation}\label{eq_rzc}
	\begin{aligned}
	\mathbb{E}\left[ \left(\hat{\mathbf{z}}_k - \mathbb{E}\left[ \hat{\mathbf{z}}_k \right] \right) \left(\hat{\mathbf{z}}_k - \mathbb{E}\left[ \hat{\mathbf{z}}_k \right] \right)^\text{T}\right] &~=\frac{4}{\sigma^4}\textbf{I}\left(\hat{\boldsymbol{\psi}}_{k-1}, \textbf{W}_k\right)^\text{-1} \mathbb{E}\!\left\{\!\!\left[\begin{matrix} \operatorname{Re}\{{s_p^\text{H}}{\mathbf{e}}_{k}^\text{H}\mathbf{z}_k\} \\
	\operatorname{Im}\{{s_p^\text{H}}{\mathbf{e}}_{k}^\text{H}\mathbf{z}_k\} \\
	\operatorname{Re}\{{s_p^\text{H}}\tilde{\mathbf{e}}_{k1}^\text{H}\mathbf{z}_k\}\\
	\operatorname{Re}\{{s_p^\text{H}}\tilde{\mathbf{e}}_{k2}^\text{H}\mathbf{z}_k\}
	\end{matrix}\right]\!\!\cdot\!\!\left[\begin{matrix} \operatorname{Re}\{{s_p^\text{H}}{\mathbf{e}}_{k}^\text{H}\mathbf{z}_k\} \\
	\operatorname{Im}\{{s_p^\text{H}}{\mathbf{e}}_{k}^\text{H}\mathbf{z}_k\} \\
	\operatorname{Re}\{{s_p^\text{H}}\tilde{\mathbf{e}}_{k1}^\text{H}\mathbf{z}_k\}\\
	\operatorname{Re}\{{s_p^\text{H}}\tilde{\mathbf{e}}_{k2}^\text{H}\mathbf{z}_k\} \end{matrix}\right]^\text{\!\!T} \!\right\} \textbf{I}\left(\hat{\boldsymbol{\psi}}_{k-1}, \textbf{W}_k\right)^\text{-1}\\
	&~ \buildrel  {(a)} \over =  \textbf{I}\left(\hat{\boldsymbol{\psi}}_{k-1}, \textbf{W}_k\right)^\text{-1}
	\end{aligned}
	\end{equation}
	\vspace{-2mm}
	\hrulefill
	\vspace*{-2mm}
\end{figure*}

\begin{itemize}
	\item Since $\mathbf{z}_k\!=\!\left[ z_{k,1}, z_{k,2}, z_{k,3} \right]^\text{T}$ consists of three  \emph{i.i.d.} circularly symmetric complex Gaussian random variables, we  get
	\begin{equation}\label{eq:variance}
	\left\{\begin{array}{*{20}{l}}
	s_p^\text{H}{\mathbf{e}}_{k}^\text{H}\mathbf{z}_k \sim \mathcal{CN}\left( 0, \left\|{s_p}{\mathbf{e}}_{k}\right\|_2^2 \sigma^2 \right)\\
	s_p^\text{H}\tilde{\mathbf{e}}_{k1}^\text{H}\mathbf{z}_k \sim \mathcal{CN}\left( 0, \left\|{s_p}\tilde{\mathbf{e}}_{k1}\right\|_2^2 \sigma^2 \right).\\
	s_p^\text{H}\tilde{\mathbf{e}}_{k2}^\text{H}\mathbf{z}_k \sim \mathcal{CN}\left( 0, \left\|{s_p}\tilde{\mathbf{e}}_{k2}\right\|_2^2 \sigma^2 \right)\end{array}\right.
	\end{equation}
	\item splitting the real part and imaginary part, we obtain
	\begin{equation}\label{eq:real_imaginary}
	\!\!\!\!\!\!\!\!\!\left\{
	\begin{aligned}
	&\operatorname{Re}\{{s_p^\text{H}}{\mathbf{e}}_{k}^\text{H}\mathbf{z}_k\} \!=\! \operatorname{Re}\{{s_p^\text{H}}{\mathbf{e}}_{k}^\text{H}\}\operatorname{Re}\{\mathbf{z}_k\} \!-\! \operatorname{Im}\{{s_p^\text{H}}{\mathbf{e}}_{k}^\text{H}\}\operatorname{Im}\{\mathbf{z}_k\}, \\
	&\operatorname{Im}\{{s_p^\text{H}}{\mathbf{e}}_{k}^\text{H}\mathbf{z}_k\} \!=\! \operatorname{Re}\{{s_p^\text{H}}{\mathbf{e}}_{k}^\text{H}\}\operatorname{Im}\{\mathbf{z}_k\} \!+\! \operatorname{Im}\{{s_p^\text{H}}{\mathbf{e}}_{k}^\text{H}\}\operatorname{Re}\{\mathbf{z}_k\}, \\
	&\operatorname{Re}\{{s_p^\text{H}}\tilde{\mathbf{e}}_{k1}^\text{H}\mathbf{z}_k\} \!=\! \operatorname{Re}\{{s_p^\text{H}}\tilde{\mathbf{e}}_{k1}^\text{H}\}\operatorname{Re}\{\mathbf{z}_k\} \!-\! \operatorname{Im}\{{s_p^\text{H}}\tilde{\mathbf{e}}_{k1}^\text{H}\}\operatorname{Im}\{\mathbf{z}_k\}, \\
	&\operatorname{Re}\{{s_p^\text{H}}\tilde{\mathbf{e}}_{k2}^\text{H}\mathbf{z}_k\} \!=\! \operatorname{Re}\{{s_p^\text{H}}\tilde{\mathbf{e}}_{k2}^\text{H}\}\operatorname{Re}\{\mathbf{z}_k\} \!-\! \operatorname{Im}\{{s_p^\text{H}}\tilde{\mathbf{e}}_{k2}^\text{H}\}\operatorname{Im}\{\mathbf{z}_k\}, \\
	&\operatorname{Re}\{{s_p^\text{H}}{\mathbf{e}}_{k}^\text{H}{s_p}\tilde{\mathbf{e}}_{k1}\} \!=\! |{s_p}|^2\operatorname{Re}\{{\mathbf{e}}_{k}^\text{H}\tilde{\mathbf{e}}_{k1}\} \\
	&~~~~~~~~~~\!= \operatorname{Re}\{{s_p^\text{H}}{\mathbf{e}}_{k}^\text{H}\}\operatorname{Re}\{{s_p}\tilde{\mathbf{e}}_{k1}\} \!+\! \operatorname{Im}\{{s_p^\text{H}}{\mathbf{e}}_{k}^\text{H}\}\operatorname{Im}\{{s_p}\tilde{\mathbf{e}}_{k1}\}, \\
	&\operatorname{Re}\{{s_p^\text{H}}{\mathbf{e}}_{k}^\text{H}{s_p}\tilde{\mathbf{e}}_{k2}\} \!=\! |{s_p}|^2\operatorname{Re}\{{\mathbf{e}}_{k}^\text{H}\tilde{\mathbf{e}}_{k2}\} \\
	&~~~~~~~~~~\!= \operatorname{Re}\{{s_p^\text{H}}{\mathbf{e}}_{k}^\text{H}\}\operatorname{Re}\{{s_p}\tilde{\mathbf{e}}_{k2}\} \!+\! \operatorname{Im}\{{s_p^\text{H}}{\mathbf{e}}_{k}^\text{H}\}\operatorname{Im}\{{s_p}\tilde{\mathbf{e}}_{k2}\}, \\
	&\operatorname{Im}\{{s_p^\text{H}}{\mathbf{e}}_{k}^\text{H}{s_p}\tilde{\mathbf{e}}_{k1}\} \!=\! |{s_p}|^2\operatorname{Im}\{{\mathbf{e}}_{k}^\text{H}\tilde{\mathbf{e}}_{k1}\} \\
	&~~~~~~~~~~\!= \operatorname{Re}\{{s_p^\text{H}}{\mathbf{e}}_{k}^\text{H}\}\operatorname{Im}\{{s_p}\tilde{\mathbf{e}}_{k1}\} \!+\! \operatorname{Im}\{{s_p^\text{H}}{\mathbf{e}}_{k}^\text{H}\}\operatorname{Re}\{{s}\tilde{\mathbf{e}}_{k1}\},\\
	&\operatorname{Im}\{{s_p^\text{H}}{\mathbf{e}}_{k}^\text{H}{s_p}\tilde{\mathbf{e}}_{k2}\} \!=\! |{s_p}|^2\operatorname{Im}\{{\mathbf{e}}_{k}^\text{H}\tilde{\mathbf{e}}_{k2}\} \\
	&~~~~~~~~~~\!= \operatorname{Re}\{{s_p^\text{H}}\hat{\mathbf{e}}_{k}^\text{H}\}\operatorname{Im}\{{s_p}\tilde{\mathbf{e}}_{k2}\} \!+\! \operatorname{Im}\{{s_p^\text{H}}{\mathbf{e}}_{k}^\text{H}\}\operatorname{Re}\{{s}\tilde{\mathbf{e}}_{k2}\},\\
	&\operatorname{Re}\{{s_p^\text{H}}\tilde{\mathbf{e}}_{k1}^\text{H}{s_p}\tilde{\mathbf{e}}_{k2}\} \!=\! |{s_p}|^2\operatorname{Re}\{\tilde{\mathbf{e}}_{k}^\text{H}\tilde{\mathbf{e}}_{k2}\} \\
	&~~~~~~~~~~\!= \operatorname{Re}\{{s_p^\text{H}}\tilde{\mathbf{e}}_{k1}^\text{H}\}\operatorname{Re}\{{s_p}\tilde{\mathbf{e}}_{k2}\} \!+\! \operatorname{Im}\{{s_p^\text{H}}\tilde{\mathbf{e}}_{k1}^\text{H}\}\operatorname{Im}\{{s_p}\tilde{\mathbf{e}}_{k2}\}
	\end{aligned}\right.\!\!\!\!\!.
	\end{equation}
	
	\item Combining \eqref{eq:variance} and \eqref{eq:real_imaginary}, we can obtain
	\begin{equation}
	\!\!\!\!\!\!\left\{
	\begin{aligned}
	&\mathbb{E}\left[\operatorname{Re}\{{s_p^\text{H}}{\mathbf{e}}_{k}^\text{H}\mathbf{z}_k\}^2\right] \!=\! \mathbb{E}\left[\operatorname{Im}\{{s_p^\text{H}}{\mathbf{e}}_{k}^\text{H}\mathbf{z}_k\}^2\right] \!=\! \frac{|{s_p}|^2 \sigma^2}{2}\left\|{\mathbf{e}}_{k}\right\|_2^2, \\
	&\mathbb{E}\left[\operatorname{Re}\{{s_p^\text{H}}{\mathbf{e}}_{k}^\text{H}\mathbf{z}_k\}\!\cdot\!\operatorname{Im}\{{s_p^\text{H}}{\mathbf{e}}_{k}^\text{H}\mathbf{z}_k\}\right] = 0, \\
	&\mathbb{E}\left[\operatorname{Re}\{{s_p^\text{H}}{\mathbf{e}}_{k}^\text{H}\mathbf{z}_k\}\!\cdot\!\operatorname{Re}\{{s_p^\text{H}}\tilde{\mathbf{e}}_{k1}^\text{H}\mathbf{z}_k\} \right] = \frac{|{s_p}|^2 \sigma^2}{2}\operatorname{Re}\{{\mathbf{e}}_{k}^\text{H}\tilde{\mathbf{e}}_{k1}\} , \\
	&\mathbb{E}\left[\operatorname{Re}\{{s_p^\text{H}}{\mathbf{e}}_{k}^\text{H}\mathbf{z}_k\}\!\cdot\!\operatorname{Re}\{{s_p^\text{H}}\tilde{\mathbf{e}}_{k2}^\text{H}\mathbf{z}_k\} \right] = \frac{|{s_p}|^2 \sigma^2}{2}\operatorname{Re}\{{\mathbf{e}}_{k}^\text{H}\tilde{\mathbf{e}}_{k2}\} , \\
	&\mathbb{E}\left[\operatorname{Im}\{{s_p^\text{H}}{\mathbf{e}}_{k}^\text{H}\mathbf{z}_k\}\!\cdot\!\operatorname{Re}\{{s_p^\text{H}}\tilde{\mathbf{e}}_{k1}^\text{H}\mathbf{z}_k\} \right] = \frac{|{s_p}|^2 \sigma^2}{2}\operatorname{Im}\{{\mathbf{e}}_{k}^\text{H}\tilde{\mathbf{e}}_{k1}\},  \\
	&\mathbb{E}\left[\operatorname{Im}\{{s_p^\text{H}}{\mathbf{e}}_{k}^\text{H}\mathbf{z}_k\}\!\cdot\!\operatorname{Re}\{{s_p^\text{H}}\tilde{\mathbf{e}}_{k2}^\text{H}\mathbf{z}_k\} \right] = \frac{|{s_p}|^2 \sigma^2}{2}\operatorname{Im}\{{\mathbf{e}}_{k}^\text{H}\tilde{\mathbf{e}}_{k2}\},  \\
	&\mathbb{E}\left[\operatorname{Re}\{{s_p^\text{H}}\tilde{\mathbf{e}}_{k1}^\text{H}\mathbf{z}_k\}^2\right] \!=\! \frac{|{s_p}|^2 \sigma^2}{2}\left\|\tilde{\mathbf{e}}_{k1}\right\|_2^2, \\
	&\mathbb{E}\left[\operatorname{Re}\{{s_p^\text{H}}\tilde{\mathbf{e}}_{k2}^\text{H}\mathbf{z}_k\}^2\right] \!=\! \frac{|{s_p}|^2 \sigma^2}{2}\left\|\tilde{\mathbf{e}}_{k2}\right\|_2^2, \\
	&\mathbb{E}\left[\operatorname{Re}\{{s_p^\text{H}}\tilde{\mathbf{e}}_{k1}^\text{H}\mathbf{z}_k\}\!\cdot\!\operatorname{Re}\{{s_p^\text{H}}\tilde{\mathbf{e}}_{k2}^\text{H}\mathbf{z}_k\} \right] = \frac{|{s_p}|^2 \sigma^2}{2}\operatorname{Re}\{\tilde{\mathbf{e}}_{k1}^\text{H}\tilde{\mathbf{e}}_{k2}\}.
	\end{aligned}\right.\!\!\!\!\!
	\end{equation}
	Hence, we have
	\begin{align}\label{eq_expectation_noise}
	\mathbb{E}\!\left\{\!\!\left[\begin{matrix} \operatorname{Re}\{{s_p^\text{H}}{\mathbf{e}}_{k}^\text{H}\mathbf{z}_k\} \\
	\operatorname{Im}\{{s_p^\text{H}}{\mathbf{e}}_{k}^\text{H}\mathbf{z}_k\} \\
	\operatorname{Re}\{{s_p^\text{H}}\tilde{\mathbf{e}}_{k1}^\text{H}\mathbf{z}_k\}\\
	\operatorname{Re}\{{s_p^\text{H}}\tilde{\mathbf{e}}_{k2}^\text{H}\mathbf{z}_k\}
	\end{matrix}\right]\!\!\cdot\!\!\left[\begin{matrix} \operatorname{Re}\{{s_p^\text{H}}{\mathbf{e}}_{k}^\text{H}\mathbf{z}_k\} \\
	\operatorname{Im}\{{s_p^\text{H}}{\mathbf{e}}_{k}^\text{H}\mathbf{z}_k\} \\
	\operatorname{Re}\{{s_p^\text{H}}\tilde{\mathbf{e}}_{k1}^\text{H}\mathbf{z}_k\}\\
	\operatorname{Re}\{{s_p^\text{H}}\tilde{\mathbf{e}}_{k2}^\text{H}\mathbf{z}_k\} \end{matrix}\right]^\text{\!\!T} \!\right\} \!=\!\frac{\sigma^4}{4} \mathbf{I}(\hat{\boldsymbol{\psi}}_{k\!-\!1},\!\mathbf{W}_k).
	\end{align}
	
	\item Substituting \eqref{eq_expectation_noise} into \eqref{eq_rzc} yields the result of Step $(a)$.
\end{itemize}

Assume $\{ \mathcal{G}_k: k \ge 0 \}$ is an increasing sequence of $\sigma$-fields of $\{ \hat{\boldsymbol{\psi}}_{0}, \hat{\boldsymbol{\psi}}_{1}, \hat{\boldsymbol{\psi}}_{2}, \ldots \}$, i.e., $\mathcal{G}_{k-1}\!\subset\!\mathcal{G}_k$, where $\mathcal{G}_{0} \!\overset{\Delta}{=}\! \sigma(\hat{\boldsymbol{\psi}}_{0})$ and $\mathcal{G}_k \!\overset{\Delta}{=}\! \sigma(\hat{\boldsymbol{\psi}}_{0}, \hat{\mathbf{z}}_{1}, \ldots, \hat{\mathbf{z}}_{k}) $ for $k \ge 1$. Because the $\hat{\mathbf{z}}_k$'s are composed of \emph{i.i.d.} circularly symmetric complex Gaussian random variables with zero mean, $\hat{\mathbf{z}}_k$ is independent of $\mathcal{G}_{k-1}$, and $\hat{\boldsymbol{\psi}}_{k-1} \!\in\! \mathcal{G}_{k-1}$. Hence, we have
\begin{align}\label{eq_fil2}
&~\mathbb{E} \left[ \left. \mathbf{f}\left(\hat{\boldsymbol{\psi}}_{k-1}, \boldsymbol{\psi}\right) + \hat{\mathbf{z}}_k \right| \mathcal{G}_{k-1} \right] \\
= &~\mathbb{E} \left[ \left. \mathbf{f}\left(\hat{\boldsymbol{\psi}}_{k-1}, \boldsymbol{\psi}\right) \right| \mathcal{G}_{k-1} \right] + \mathbb{E} \left[ \left. \hat{\mathbf{z}}_k \right| \mathcal{G}_{k-1} \right] = \mathbf{f}\left(\hat{\boldsymbol{\psi}}_{k-1}, \boldsymbol{\psi}\right), \nonumber
\end{align}
for $k \ge 1$.

Theorem 5.2.1 in \cite[Section 5.2.1]{kushner2003stochastic} gives the conditions that ensure $\hat{\textbf{x}}_k$ converges to a unique point when there are several stable points with probability one. Next, we will prove that if the step-size $b_k$ is given by \eqref{eq_stepsize} with any $\varepsilon > 0$ and $K_0 \ge 0$, the joint beam and channel tracking algorithm in \eqref{eq_Tracking} satisfies the corresponding conditions below:
\begin{itemize}
	\item[1)] Step-size requirements:
	\begin{equation}\left\{\begin{aligned}&b_k = \frac{\varepsilon}{k + K_0} \rightarrow 0, \\
	& \sum\limits_{k=1}^{+\infty} b_k = \sum\limits_{k=1}^{+\infty} \frac{\varepsilon}{k+K_0} = +\infty, \\
	& \sum\limits_{k=1}^{+\infty} b_k^2 = \sum\limits_{k=1}^{+\infty} \frac{\varepsilon^2}{(k+K_0)^2} \le \sum\limits_{l=1}^{+\infty} \frac{\varepsilon^2}{l^2} < {+\infty}.\end{aligned}\right.\end{equation}
	
	\item[2)] It is necessary to prove that $\sup\nolimits_k \mathbb{E} \left[ \left\|\mathbf{f}\left(\hat{\boldsymbol{\psi}}_{k-1}, \boldsymbol{\psi}\right) + \hat{\mathbf{z}}_k\right\|_2^2 \right] < +\infty$. \\
	From \eqref{eq_rtracking} and \eqref{eq_rzc}, we have
	\begin{align}\label{eq_expectation_yk} & \mathbb{E} \left[ \left\|\mathbf{f}\left(\hat{\boldsymbol{\psi}}_{k-1}, \boldsymbol{\psi}\right) + \hat{\mathbf{z}}_k\right\|_2^2 \right] \\
	= & \mathbb{E} \left[ \left\|\mathbf{f}\left(\hat{\boldsymbol{\psi}}_{k-1}, \boldsymbol{\psi}\right)\right\|_2^2 + 2 \mathbf{f}\left(\hat{\boldsymbol{\psi}}_{k-1}, \boldsymbol{\psi}\right)^\text{T} \hat{\mathbf{z}}_k + \left\|\hat{\mathbf{z}}_k\right\|_2^2 \right] \nonumber \\ \overset{(a)}{=} & \mathbb{E} \left[ \left\|\mathbf{f}\left(\hat{\boldsymbol{\psi}}_{k-1}, \boldsymbol{\psi}\right)\right\|_2^2 \right] + \operatorname{tr}\left\{\mathbf{I}(\hat{\boldsymbol{\psi}}_{k\!-\!1},\!\mathbf{W}_k)^{-1}\right\},\nonumber \end{align}
	where Step $(a)$ is due to \eqref{eq_rzc} and that $\hat{\mathbf{z}}_k$ is independent of $\mathbf{f}\left(\hat{\boldsymbol{\psi}}_{n-1}, \boldsymbol{\psi}\right)$. \\
	From \eqref{eq_f}, we have
	\begin{align}\label{eq_ub_fx}\left\|\mathbf{f}\left(\hat{\boldsymbol{\psi}}_{k-1}, \boldsymbol{\psi}\right)\right\|_2^2 \le &~\left\|\mathbf{I}(\hat{\boldsymbol{\psi}}_{k\!-\!1},\!\mathbf{W}_k)^{-1}\right\|_\text{F}^2 \\
	&~\!\!\!\!\!\!\!\!\!\!\!\!\!\!\!\!\!\!\!\!\!\!\!\!\!\!\!\!\!\!\!\!\!\cdot\!\left\|\frac{2|{s_p}|^2}{\sigma^2}\!\!\left[\begin{matrix} {\operatorname{Re}\left\{\textbf{e}_k^\text{H}\left(\beta_k \textbf{W}_k^\text{H} \textbf{a}\left(\textbf{x}_k\right) -\hat{\beta}_{k-1} \textbf{e}_k \right)\right\}}\\
	{\operatorname{Im}\left\{\textbf{e}_k^\text{H}\left(\beta_k \textbf{W}_k^\text{H} \textbf{a}\left(\textbf{x}_k\right)-\hat{\beta}_{k-1} \textbf{e}_k \right)\right\}}\\
	{\operatorname{Re}\left\{\tilde{\textbf{e}}_{k1}^\text{H}\left(\beta_k \textbf{W}_k^\text{H} \textbf{a}\left(\textbf{x}_k\right)-\hat{\beta}_{k-1} \textbf{e}_k \right)\right\}}\\
	{\operatorname{Re}\left\{\tilde{\textbf{e}}_{k2}^\text{H}\left(\beta_k \textbf{W}_k^\text{H} \textbf{a}\left(\textbf{x}_k\right)-\hat{\beta}_{k-1} \textbf{e}_k \right)\right\}} \end{matrix}\right]\right\|_2^2. \nonumber \end{align}
	As the Fisher information matrix is invertible, we get
	\begin{align}\label{eq:proof1_1}\left\|\mathbf{I}(\hat{\boldsymbol{\psi}}_{k\!-\!1},\!\mathbf{W}_k)^{-1}\right\|_\text{F}^2 < +\infty.\end{align}
	Besides, $\mathbf{W}_{k}\!=\!\left[\mathbf{w}_{k,1},\mathbf{w}_{k,2}, \mathbf{w}_{k,3}\right]$, ${\mathbf{e}}_{k}\!=\!\mathbf{W}_k^\text{H}\mathbf{a}(\hat{\textbf{x}}_{k\!-\!1})$, $\tilde{\textbf{e}}_{k1} = \hat{\beta}_{k-1}\textbf{W}_k^\text{H} \frac{\partial \textbf{a}\left(\hat{\textbf{x}}_{k-1}\right)}{\partial x_1}$, $\tilde{\textbf{e}}_{k2} =\hat{\beta}_{k-1} \textbf{W}_k^\text{H} \frac{\partial \textbf{a}\left(\hat{\textbf{x}}_{k-1}\right)}{\partial x_2}$, hence we have
	\begin{align}
	\begin{array}{l}
	\left| {{\bf{w}}_{k,i}^{\rm{H}}{\bf{a}}({\bf{x}})} \right|\\
	= \left| {\frac{1}{{\sqrt {MN} }}\sum\limits_{m = 1}^M {\sum\limits_{n = 1}^N {{e^{-j2\pi \left( {\frac{{(m - 1){\delta_{k,i1}}}}{M} + \frac{{(n - 1){\delta_{k,i2}}}}{N}} \right)}}} } } \right|\\
	\le \frac{1}{{\sqrt {MN} }}\sum\limits_{m = 1}^M {\sum\limits_{n = 1}^N {\left| {{e^{-j2\pi \left( {\frac{{(m - 1){\delta_{k,i1}}}}{M} + \frac{{(n - 1){\delta_{k,i2}}}}{N}} \right)}}} \right|} } \\
	= \sqrt {MN} < +\infty
	\end{array}
	\end{align}
	\begin{small}
		\begin{align}
		&\left| {{\bf{w}}_{k,i}^{\rm{H}}\frac{{\partial {\bf{a}}({\bf{x}})}}{{\partial {x_1}}}} \right|\nonumber\\
		=& \left| {\frac{1}{{\sqrt {MN} }}\sum\limits_{m = 1}^M {\sum\limits_{n = 1}^N {j2\pi \frac{{m - 1}}{M}{{e^{-j2\pi \left( {\frac{{(m - 1){\delta_{k,i1}}}}{M} + \frac{{(n - 1){\delta_{k,i2}}}}{N}} \right)}}}} } } \right|\nonumber\\
		\le & \frac{{2\pi }}{{M\sqrt {MN} }}\sum\limits_{m = 1}^M {\sum\limits_{n = 1}^N {(m - 1)\left| {{{e^{-j2\pi \left( {\frac{{(m - 1){\delta_{k,i1}}}}{M} + \frac{{(n - 1){\delta_{k,i2}}}}{N}} \right)}}}} \right|} } \nonumber\\
		=& \sqrt {MN} \left( {M - 1} \right) < +\infty,
		\end{align}
	\end{small}
	and
	\begin{small}
		\begin{align}
		&\left| {{\bf{w}}_{k,i}^{\rm{H}}\frac{{\partial {\bf{a}}({\bf{x}})}}{{\partial {x_2}}}} \right|\nonumber\\
		=& \left| {\frac{1}{{\sqrt {MN} }}\sum\limits_{m = 1}^M {\sum\limits_{n = 1}^N {j2\pi \frac{{n - 1}}{N}{{e^{-j2\pi \left( {\frac{{(m - 1){\delta_{k,i1}}}}{M} + \frac{{(n - 1){\delta_{k,i2}}}}{N}} \right)}}}} } } \right|\nonumber\\
		\le& \frac{{2\pi }}{{N\sqrt {MN} }}\sum\limits_{m = 1}^M {\sum\limits_{n = 1}^N {(n - 1)\left| {{{e^{-j2\pi \left( {\frac{{(m - 1){\delta_{k,i1}}}}{M} + \frac{{(n - 1){\delta_{k,i2}}}}{N}} \right)}}}} \right|} } \nonumber\\
		=& \sqrt {MN} \left( {N - 1} \right) < +\infty,
		\end{align}
	\end{small}for $i = 1, 2,3$ and all possible $\textbf{w}_{k,i}$ and $\textbf{x}$, thus we can get
	\begin{align}\label{eq:proof1_2}
	\left\|\frac{2|{s_p}|^2}{\sigma^2}\!\!\left[\begin{matrix} {\operatorname{Re}\left\{\textbf{e}_k^\text{H}\left(\beta_k \textbf{W}_k^\text{H} \textbf{a}\left(\textbf{x}_k\right) -\hat{\beta}_{k-1} \textbf{e}_k \right)\right\}}\\
	{\operatorname{Im}\left\{\textbf{e}_k^\text{H}\left(\beta_k \textbf{W}_k^\text{H} \textbf{a}\left(\textbf{x}_k\right)-\hat{\beta}_{k-1} \textbf{e}_k \right)\right\}}\\
	{\operatorname{Re}\left\{\tilde{\textbf{e}}_{k1}^\text{H}\left(\beta_k \textbf{W}_k^\text{H} \textbf{a}\left(\textbf{x}_k\right)-\hat{\beta}_{k-1} \textbf{e}_k \right)\right\}}\\
	{\operatorname{Re}\left\{\tilde{\textbf{e}}_{k2}^\text{H}\left(\beta_k \textbf{W}_k^\text{H} \textbf{a}\left(\textbf{x}_k\right)-\hat{\beta}_{k-1} \textbf{e}_k \right)\right\}} \end{matrix}\right]\right\|_2^2 < +\infty.
	\end{align}
	Combining \eqref{eq:proof1_1} and \eqref{eq:proof1_2}, we have
	\begin{align}\label{eq:expectation_fx}\mathbb{E} \left[ \left\|\mathbf{f}\left(\hat{\boldsymbol{\psi}}_{n-1}, \boldsymbol{\psi}\right)\right\|_2^2 \right] < +\infty. \end{align}
	According to  \eqref{eq:proof1_1}, it is clear that $ \operatorname{tr}\left\{\mathbf{I}(\hat{\boldsymbol{\psi}}_{k\!-\!1},\!\mathbf{W}_k)^{-1}\right\}  < +\infty$. Then, we can get that
	\begin{align}\sup\nolimits_k \mathbb{E} \left[ \left\|\mathbf{f}\left(\hat{\boldsymbol{\psi}}_{k-1}, \boldsymbol{\psi}\right) + \hat{\mathbf{z}}_k\right\|_2^2 \right] < +\infty.\end{align}
	
	\item[3)] The function $\mathbf{f}\left(\hat{\boldsymbol{\psi}}_{k-1}, \boldsymbol{\psi}\right)$ should be continuous with respect to $\hat{\boldsymbol{\psi}}_{k-1}$.\\
	By using \eqref{eq_f}, we know that each element of $\mathbf{f}\left(\hat{\boldsymbol{\psi}}_{k-1}, \boldsymbol{\psi}\right)$ is continuous with respect to $\hat{\boldsymbol{\psi}}_{k-1} = \left [\hat{\beta}_{re},\hat{\beta}_{im},\hat{x}_{1},\hat{x}_{2},\right]^\text{T}$. Therefore, $\mathbf{f}\left(\hat{\boldsymbol{\psi}}_{k-1}, \boldsymbol{\psi}\right)$ is continuous with respect to $\hat{\boldsymbol{\psi}}_{k-1}$.
	
	\item[4)] Let $\boldsymbol\gamma_k = \mathbb{E} \left[ \left.\mathbf{f}\left(\hat{\boldsymbol{\psi}}_{k-1}, \boldsymbol{\psi}\right) + \hat{\mathbf{z}}_k\right| \mathcal{G}_{k-1} \right] - \mathbf{f}\left(\hat{\boldsymbol{\psi}}_{k-1}, \boldsymbol{\psi}\right)$. We need to prove that $\sum_{k=1}^{+\infty} \left\| b_k \boldsymbol\gamma_k \right\|_2 < +\infty$ with probability one. \\
	From (\ref{eq_fil2}), we get $\boldsymbol\gamma_k = \mathbf{0}$ for all $k \ge 1$. So we have $\sum_{k=1}^{+\infty} \left\| b_k \boldsymbol\gamma_k \right\|_2 = 0 < +\infty$ with probability one.
	
	% \item[5)] The set of stable points for the ODE \eqref{eq_ODE} should be obtained. \\
	% According to \eqref{eq:stable_points}, $\mathcal{S}(x)$ contains the local optimal stable points for the ODE \eqref{eq_ODE}. What's more, the boundary point 1 (or $-1$) is a stable point when $f(1, x) \ge 0$ (or $f(-1, x) \le 0$). Hence, the set of stable points is a subset of $\mathcal{S}(x) \cup \{ -1\} \cup \{1 \}$.
\end{itemize}

By Theorem 5.2.1 in \cite{kushner2003stochastic}, $\hat{\textbf{x}}_k$ converges to a unique stable point within the stable points set with probability one. % This completes the proof of Theorem~\ref{th_convergence}.

%\section{Proof of Lemma~\ref{UnifiedOptShift}}\label{proof_UnifiedOptShift}

\section{Proof of Theorem~\ref{Converge to real beam direction}}\label{proof_Converge to real beam direction}

Theorem \ref{proof_Converge to real beam direction} is proven in three steps:

\emph{\textbf{Step 1:} Two continuous processes based on the discrete process $ \hat{\boldsymbol{\psi}}_k = [\hat{\beta}^\text{re}_{k}, \hat{\beta}^\text{im}_{k}, \hat{x}_{k,1},\hat{x}_{k,2}]^\text{\emph{T}}$ are established here, i.e., $\bar{\boldsymbol{\psi}}(t) \!\overset{\Delta}{=}\! [\bar{\beta}^\text{re}(t), \bar{\beta}^\text{im}(t), \bar{x}_{1}(t),\bar{x}_{2}(t)]^\text{\emph{T}}$ and $\tilde{\boldsymbol{\psi}}^k(t) \!\overset{\Delta}{=}\! [\tilde{\beta}^{\text{re},k}(t), \tilde{\beta}^{\text{im},k}(t), \tilde{x}_{1}^k(t), \tilde{x}_{2}^k(t)]^\text{\emph{T}}$.}

The discrete time parameters are defined as: $t_{0} \overset{\Delta}{=} 0$, $t_k \overset{\Delta}{=} \sum_{i=1}^k b_{i}$, $k \ge 1$. The first continuous process $\bar{\boldsymbol{\psi}}(t), t \ge 0$ is constructed as the linear interpolation of the sequence $\hat{\boldsymbol{\psi}}_k, k\ge 0$, where $\bar{\boldsymbol{\psi}}(t_k) = \hat{\boldsymbol{\psi}}_k, k \ge 0$. Therefore, $\bar{\boldsymbol{\psi}}(t)$ is given by
\begin{equation}\label{eq_continuous}
\begin{aligned}
\bar{\boldsymbol{\psi}}(t)\!=\!\bar{\boldsymbol{\psi}}(t_k)\!+\!\frac{(t\!-\!t_k)}{b_{k+1}}\left[\bar{\boldsymbol{\psi}}(t_{k+1})\!-\!\bar{\boldsymbol{\psi}}(t_k)\right], t\!\in\![t_k, t_{k+1}].
\end{aligned}
\end{equation}

The second continuous process $\tilde{\boldsymbol{\psi}}^k(t)$ is the solution of the following ordinary differential equation (ODE):
\begin{align}\label{eq_ODE}
\frac{d \tilde{\boldsymbol{\psi}}^k(t)}{dt} = \mathbf{f}\left(\tilde{\boldsymbol{\psi}}^k(t), \boldsymbol{\psi}\right),
\end{align}
for $t \in [t_k, \infty)$, where $\tilde{\boldsymbol{\psi}}^k(t_k) = \bar{\boldsymbol{\psi}}(t_k) = \hat{\boldsymbol{\psi}}_k, k \ge 0$. Thus, $\tilde{\boldsymbol{\psi}}^k(t)$ can be given as
\begin{equation}\label{eq_ODE_new}
\begin{aligned}
\tilde{\boldsymbol{\psi}}^k(t) & = \bar{\boldsymbol{\psi}}(t_k) + \int_{t_k}^t \mathbf{f}\left(\tilde{\boldsymbol{\psi}}^k(v), \boldsymbol{\psi}\right) dv, t \ge t_k.
\end{aligned}
\end{equation}

\emph{\textbf{Step 2:} By using the two continuous processes $\bar{\boldsymbol{\psi}}(t)$ and $\tilde{\boldsymbol{\psi}}^k(t)$ constructed in Step 1, a sufficient condition for the convergence of the discrete process $\hat{\textbf{x}}_k$ is provided here.}

We first construct a time-invariant set $\mathcal{I}$ that includes the direction parameter vector $\textbf{x}$ within the mainlobe, i.e., $\textbf{x} \in \mathcal{I} \subset \mathcal{B}(\textbf{x})$\footnote{The boundary of the set $\mathcal{B}(\textbf{x})$ is denoted by $\partial \mathcal{B}(\textbf{x})$.}. Define $\tilde{\textbf{x}}^0(t) \triangleq \left[\tilde{x}_1^0(t),\tilde{x}_2^0(t)\right]^\text{T}$ and denote $\hat{\textbf{x}}_{\text{b}} = \tilde{\textbf{x}}^0(t_{\text{b}})$ as the beam direction of the process $\tilde{\boldsymbol{\psi}}^0(t)$ that is closest to the boundary of the mainlobe, which is given by
\begin{align}\label{eq:x_b}
\inf_{\textbf{v} \in \partial \mathcal{B}(x), t \ge 0} \left\| \textbf{v} - \tilde{\textbf{x}}^0(t) \right\|_2 = \inf_{\textbf{v} \in \partial \mathcal{B}(x)} \left\| \textbf{v} - \hat{\textbf{x}}_{\text{b}} \right\|_2  > 0.
\end{align}Then we pick $\delta$ such that
\begin{align}\label{eq:delta}
\min \left\{\inf_{\textbf{v} \in \partial \mathcal{B}(x)} \left\| \textbf{v} - \hat{\textbf{x}}_{\text{b}} \right\|_{-\infty}, \left\|\hat{\textbf{x}}_b - \textbf{x}\right\|_{-\infty} \right\} > \delta > 0,
\end{align}where $\left\|\textbf{u}\right\|_{-\infty} = \underset{l=1,2}{\min}{\left[\textbf{u}\right]_l}$ denotes the minimum element of $\textbf{u}$. Note that when $t \ge t_b$, the solution $\tilde{\boldsymbol{\psi}}^0(t)$ of the ODE (\ref{eq_ODE}) will approach the real channel coefficient $\beta$ and direction parameter vector $\textbf{x}$ monotonically as time $t$ increases.
Hence, we construct the invariant set $\mathcal{I}$ as \eqref{main_lobe_b}.
\begin{figure*}[!t]
	\begin{equation}\label{main_lobe_b}
	\mathcal{I} = \Big( x_1 - |x_1 - \hat{x}_{1,\text{b}}| - \delta,~x_1 + |x_1 - \hat{x}_{1,\text{b}}| + \delta \Big)
	\times \Big( x_2 - |x_2 - \hat{x}_{2,\text{b}}| - \delta,~x_2 + |x_2 - \hat{x}_{2,\text{b}}| + \delta \Big) \subset \mathcal{B}(\textbf{x}).
	\end{equation}
	\hrulefill
	\vspace*{4pt}
\end{figure*}An example of the invariant set $\mathcal{I}$ is shown in Fig. \ref{fig_invariant_set}.

\begin{figure}[!t]
	%\vspace{-5mm}
	\centering
	\includegraphics[width=4.5cm]{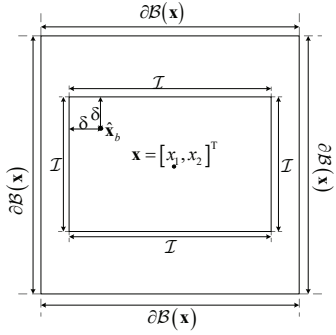}
	%\vspace{-3mm}
	\caption{An illustration of the invariant set $\mathcal{I}$.}
	\label{fig_invariant_set}
	%\vspace{-5mm}
\end{figure}

Then, a sufficient condition will be established in Lemma \ref{le_sufficient} that ensures $\hat{\textbf{x}}_k\!\in\!\mathcal{I}~\text{for}~k\!\ge\!0$, and hence from Corollary 2.5 in \cite{borkar2008stochastic}, we can obtain that $\hat{\textbf{x}}_k$ converges to $\textbf{x}$.
Before giving Lemma \ref{le_sufficient}, let us provide some definitions first:
\begin{itemize}
	\item Pick $T > 0$ such that the solution $\tilde{\boldsymbol{\psi}}^0(t), t\ge 0$ of the ODE (\ref{eq_ODE}) with $\tilde{\boldsymbol{\psi}}^0(0)\!=\![\hat{\beta}^\text{re}_{0}, \hat{\beta}^\text{im}_{0}, \hat{x}_{{0},1},\hat{x}_{{0},2}]^\text{T}$ satisfies $\inf_{\textbf{v} \in \partial \mathcal{B}}\left| \textbf{v}\!-\!\tilde{\textbf{x}}^0(t) \right| \ge 2\delta$ for $t \ge T$. Since when $t \ge t_b$, $\tilde{\textbf{x}}^0(t)$ will approach the direction parameter vector $\textbf{x}$ monotonically as time $t$ increases, one possible $T$ is given by
	\begin{align}\label{eq_T}
	T= \arg\min\limits_{t\in[t_\text{b}, \infty]}\left|~\!\!\left|\!\left[\int_{t_\text{b}}^t \mathbf{f}\left(\tilde{\boldsymbol{\psi}}^0(v), \boldsymbol{\psi}\right) dv\right]_3\right| - \delta\right|,
	\end{align}
	where $[\cdot]_{i}$ obtains the $i$-th element of the vector.
	
	\item Let $T_0 \overset{\Delta}{=} 0$ and $T_{l+1} \overset{\Delta}{=} \min \left\{ t_i: t_i \ge T_l + T, i \ge 0 \right\}$ for $l \ge 0$. Then $T_{l+1} - T_l \in [T, T+b_1]$ and $T_l = t_{\tilde{k}(l)}$ for some $\tilde{k}(l) \uparrow \infty$, where $\tilde{k}(0) = 0$. Let $\tilde{\boldsymbol{\psi}}^{\tilde{k}(l)}(t)$ denote the solution of ODE (\ref{eq_ODE}) for $t \in I_l \overset{\Delta}{=} \left[ T_l, T_{l+1} \right]$ with $\tilde{\boldsymbol{\psi}}^{\tilde{k}(l)}(T_l) = \bar{\boldsymbol{\psi}}(T_l)$, $l \ge 0$.
\end{itemize}
Hence, we can obtain the following lemma:
\begin{lemma}\label{le_sufficient}
	If $ \underset{t\in I_l}{\sup} \left\| \bar{\textbf{x}}(t) - \tilde{\textbf{x}}^{\tilde{k}(l)}(t)\right\|_2 \le \delta$ for all $l \ge 0$, then $\hat{\textbf{x}}_k \in \mathcal{I}~\text{for all}~k \ge 0$.
\end{lemma}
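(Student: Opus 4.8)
The plan is to prove the stronger statement that the interpolated trajectory never leaves $\mathcal{I}$, i.e.\ that $\bar{\textbf{x}}(t)\in\mathcal{I}$ for every $t\ge 0$. Since $\hat{\textbf{x}}_k=\bar{\textbf{x}}(t_k)$ and the discrete instants $t_k$ exhaust $[0,\infty)=\bigcup_{l\ge 0} I_l$, this at once gives $\hat{\textbf{x}}_k\in\mathcal{I}$ for all $k\ge 0$, which is the assertion of Lemma~\ref{le_sufficient}. I would run an induction over the blocks $I_l=[T_l,T_{l+1}]$, propagating not the statement $\bar{\textbf{x}}(t)\in\mathcal{I}$ itself but the tighter endpoint condition $\bar{\textbf{x}}(T_l)\in\mathcal{I}_\delta$, where $\mathcal{I}_\delta$ denotes the box obtained from $\mathcal{I}$ in \eqref{main_lobe_b} by shrinking each half-width by $\delta$ (so $\mathcal{I}_\delta$ is centered at $\textbf{x}$ with half-widths $|x_i-\hat{x}_{i,\text{b}}|$, and $\mathcal{I}$ is its $\delta$-enlargement).

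Before the induction I would record two facts that follow from the construction in \eqref{eq:x_b}--\eqref{eq_T}. First, because $\textbf{x}$ is the unique stable point of the ODE \eqref{eq_ODE} inside the mainlobe and its solution approaches $\textbf{x}$ monotonically once $t\ge t_{\text{b}}$, every coordinate gap $|x_i-\tilde{x}_i(t)|$ is eventually nonincreasing; consequently any box centered at $\textbf{x}$ and contained in $\mathcal{B}(\textbf{x})$ is forward invariant under the flow, and in particular a solution started in $\mathcal{I}_\delta$ stays in $\mathcal{I}_\delta$. Second, the choice of $T$ in \eqref{eq_T}, together with $\|\hat{\textbf{x}}_{\text{b}}-\textbf{x}\|_{-\infty}>\delta$ from \eqref{eq:delta} (which keeps $\mathcal{I}_{2\delta}$ nonempty), ensures that over any block, whose length satisfies $T_{l+1}-T_l\ge T$, the solution contracts toward $\textbf{x}$ by at least $\delta$ in each coordinate unless it already lies in $\mathcal{I}_{2\delta}$; either way it maps $\mathcal{I}_\delta$ into $\mathcal{I}_{2\delta}$ by time $T_{l+1}$. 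I would justify this second fact from the displacement integral $\int \mathbf{f}(\tilde{\boldsymbol{\psi}},\boldsymbol{\psi})\,dv$ that defines $T$ and extend it from the reference trajectory started at $\hat{\textbf{x}}_0$ to all starting points in $\mathcal{I}_\delta$ using the monotonicity of the approach.

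The induction then proceeds as follows. For the base case, $\bar{\textbf{x}}(T_0)=\hat{\textbf{x}}_0\in\overline{\mathcal{I}_\delta}$, because by \eqref{eq:x_b} the point $\hat{\textbf{x}}_{\text{b}}$ is the closest approach of the reference trajectory to $\partial\mathcal{B}(\textbf{x})$, so that $\hat{\textbf{x}}_0$, like the whole reference trajectory, stays within the coordinate gaps defining $\mathcal{I}_\delta$. For the inductive step assume $\bar{\textbf{x}}(T_l)\in\mathcal{I}_\delta$. The block solution $\tilde{\textbf{x}}^{\tilde{k}(l)}$ starts at $\tilde{\textbf{x}}^{\tilde{k}(l)}(T_l)=\bar{\textbf{x}}(T_l)\in\mathcal{I}_\delta$, so by the first fact it remains in $\mathcal{I}_\delta$ throughout $I_l$; combining this with the hypothesis $\sup_{t\in I_l}\|\bar{\textbf{x}}(t)-\tilde{\textbf{x}}^{\tilde{k}(l)}(t)\|_2\le\delta$, and using $\|\cdot\|_\infty\le\|\cdot\|_2$ so that the $\ell_2$ gap bounds the coordinatewise gap, yields $\bar{\textbf{x}}(t)\in\mathcal{I}_\delta\oplus[-\delta,\delta]^2=\mathcal{I}$ for all $t\in I_l$. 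To close the induction, the second fact gives $\tilde{\textbf{x}}^{\tilde{k}(l)}(T_{l+1})\in\mathcal{I}_{2\delta}$, whence the endpoint $\bar{\textbf{x}}(T_{l+1})$, lying within coordinatewise distance $\delta$ of it, returns to $\mathcal{I}_\delta$ and restores the hypothesis at level $l+1$.

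The step I expect to be the main obstacle is upgrading the contraction-by-$\delta$ property, which \eqref{eq_T} establishes only for the single reference trajectory started at $\hat{\textbf{x}}_0$ (and only through its $x_1$-component $[\cdot]_3$), to a uniform statement valid for every trajectory started in $\mathcal{I}_\delta$ and in both coordinates. This is where the monotone-approach property of the ODE inside the mainlobe does the essential work, and making the uniformity precise — taking $T$ large enough to dominate both coordinates, and handling the degenerate case in which a block solution already sits within $\delta$ of $\textbf{x}$ and so cannot contract by a further $\delta$ yet is trivially already in $\mathcal{I}_{2\delta}$ — is the delicate part. By comparison, the reconciliation of the $\ell_2$ tracking hypothesis with the box geometry of $\mathcal{I}$, $\mathcal{I}_\delta$, and $\mathcal{I}_{2\delta}$ is only bookkeeping, since $\|\cdot\|_\infty\le\|\cdot\|_2$ means the tracking error is always consumed in the favorable direction.
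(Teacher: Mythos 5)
Your overall architecture --- induct over the blocks $I_l$, use forward invariance of a shrunken box $\mathcal{I}_\delta$ under the ODE flow, and use a per-block contraction to re-enter $\mathcal{I}_\delta$ at each $T_{l+1}$ --- is the standard lock-in argument and has the right shape, but as written the induction does not close because the two ``facts'' it rests on are not established and, with the objects the paper actually defines, are not available. First, the per-block contraction: $T$ in \eqref{eq_T} is defined only through the \emph{third component} of the \emph{single} reference trajectory $\tilde{\boldsymbol{\psi}}^0(t)$ started at $\hat{\boldsymbol{\psi}}_0$, whereas your inductive step needs a $T$ that forces every solution of \eqref{eq_ODE} started anywhere in $\mathcal{I}_\delta$ (with an arbitrary accompanying $\hat\beta$-component, since the flow is four-dimensional) to contract by $\delta$ in \emph{both} angular coordinates within one block. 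You flag this yourself as ``the main obstacle'' and then do not supply it; that uniformity is the substance of the lemma, not bookkeeping. Second, the forward invariance of $\mathcal{I}_\delta$: the monotone-approach property is asserted in the paper only for the reference solution and only for $t \ge t_{\text{b}}$, so it does not give you that an arbitrary box centered at $\textbf{x}$ is invariant for solutions restarted at arbitrary points of $\mathcal{I}_\delta$ at time $T_l$. Third, your base case $\hat{\textbf{x}}_0 \in \overline{\mathcal{I}_\delta}$ does not follow from \eqref{eq:x_b}: $\hat{\textbf{x}}_{\text{b}}$ minimizes the distance to $\partial\mathcal{B}(\textbf{x})$, i.e.\ it maximizes $\max_i |x_i - \tilde{x}^0_i(t)|$ over $t$, which does not imply the coordinatewise domination $|x_i - \hat{x}_{0,i}| \le |x_i - \hat{x}_{i,\text{b}}|$ for each $i$ separately.

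The paper itself does not attempt this induction in two dimensions at all: it observes that $\sup_{t\in I_l}\bigl\| \bar{\textbf{x}}(t)-\tilde{\textbf{x}}^{\tilde k(l)}(t)\bigr\|_2\le\delta$ implies the same bound for each coordinate separately, and then invokes Lemma~1 of \cite{JLiJoint2018}, which contains the one-dimensional version of exactly the block induction you are trying to build (including the uniform contraction and invariance statements). So the ingredients you leave open are real: they are precisely what is outsourced to the cited lemma. To make your argument self-contained you would have to prove the uniform-in-initial-condition contraction and invariance properties of the flow of \eqref{eq_ODE} on $\mathcal{I}_\delta$; otherwise the cleaner route is the paper's, reducing to the 1D result coordinate by coordinate.
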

\begin{proof}
	If $\underset{t\in I_l}{\sup} \left\| \bar{\textbf{x}}(t) - \tilde{\textbf{x}}^{\tilde{k}(l)}(t)\right\|_2 \le \delta$ for all $l \ge 0$, then $\underset{t\in I_l}{\sup} \left| \bar{x}_1(t) - \tilde{{x}}_{1}^{\tilde{k}(l)}(t)\right|\le \delta$ and $\underset{t\in I_l}{\sup} \left| \bar{x}_2(t) - \tilde{{x}}_{2}^{\tilde{k}(l)}(t)\right| \le \delta$.
	
	According to Lemma 1 in \cite{JLiJoint2018}, $\hat{{x}}_{k,1} \in \mathcal{I}~\text{for all}~k \ge 0$ and $\hat{{x}}_{k,2} \in \mathcal{I}~\text{for all}~k \ge 0$. Hence, $\hat{\textbf{x}}_k \in \mathcal{I}~\text{for all}~k \ge 0$.
	% The proof is omitted due to space limitation.
\end{proof}

\emph{\textbf{Step 3:} We will derive the probability lower bound for the condition in Lemma \ref{le_sufficient}, which is also a lower bound for $P\left( \left. \hat{\textbf{x}}_k\!\rightarrow\!\textbf{x} \right| \hat{\textbf{x}}_0\!\in\!\mathcal{B}\left(\textbf{x}\right) \right)$. }

We will derive the probability lower bound for the condition in Lemma \ref{le_sufficient}, which results in the following lemma:
\begin{lemma}\label{le_lower_bound}
	If (i) the initial point satisfies $\hat{\textbf{x}}_0 \in \mathcal{B}(\textbf{x})$, (ii) $b_k$ is given by (\ref{eq_stepsize}) with any $\epsilon > 0$,
	then there exist  $K_0 \ge 0$ and $C>0$ such that
	\begin{equation}\label{eq_lock}
	\begin{aligned}
	P\left( \hat{\textbf{x}}_k \in \mathcal{I}, \forall k \ge 0 \right) \ge 1 - 8e^{-\frac{C|s_p|^2}{\epsilon^2\sigma^2}}.
	\end{aligned}
	\end{equation}
\end{lemma}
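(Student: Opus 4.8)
The plan is to bound the probability of the \emph{complement} of the event in \eqref{eq_lock} and then invoke the sufficient condition of Lemma~\ref{le_sufficient}. By Lemma~\ref{le_sufficient}, the event $\{\hat{\textbf{x}}_k \in \mathcal{I},~\forall k \ge 0\}$ contains the event $\{\sup_{t\in I_l}\|\bar{\textbf{x}}(t) - \tilde{\textbf{x}}^{\tilde{k}(l)}(t)\|_2 \le \delta~\text{for all}~l \ge 0\}$. Hence it suffices to show that the probability that the interpolated trajectory $\bar{\boldsymbol{\psi}}(t)$ departs from the ODE solution $\tilde{\boldsymbol{\psi}}^{\tilde{k}(l)}(t)$ by more than $\delta$ on \emph{some} interval $I_l$ is at most $8e^{-C|s_p|^2/(\epsilon^2\sigma^2)}$.

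First I would quantify the per-interval deviation. On each $I_l=[T_l,T_{l+1}]$, subtracting the integral form \eqref{eq_ODE_new} of the ODE solution from the piecewise-linear interpolation \eqref{eq_continuous} and substituting the update rule \eqref{eq_rtracking} gives a decomposition $\bar{\boldsymbol{\psi}}(t)-\tilde{\boldsymbol{\psi}}^{\tilde{k}(l)}(t)=\int_{T_l}^{t}[\mathbf{f}(\bar{\boldsymbol{\psi}}(v),\boldsymbol{\psi})-\mathbf{f}(\tilde{\boldsymbol{\psi}}^{\tilde{k}(l)}(v),\boldsymbol{\psi})]\,dv+\boldsymbol{\rho}_l(t)$, where $\boldsymbol{\rho}_l(t)$ gathers the accumulated noise $\sum_i b_i\hat{\mathbf{z}}_i$ over the interval together with the interpolation/discretization error. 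Since $\mathbf{f}(\cdot,\boldsymbol{\psi})$ in \eqref{eq_f} is continuously differentiable and we stay in the compact closure of $\mathcal{B}(\textbf{x})$, it is Lipschitz there with some constant $L$, and the Gronwall inequality yields $\sup_{t\in I_l}\|\bar{\boldsymbol{\psi}}(t)-\tilde{\boldsymbol{\psi}}^{\tilde{k}(l)}(t)\|_2 \le (\sup_{t\in I_l}\|\boldsymbol{\rho}_l(t)\|_2)\,e^{L(T+b_1)}$. The escape event on $I_l$ is therefore contained in $\{\sup_{t\in I_l}\|\boldsymbol{\rho}_l(t)\|_2 > \delta e^{-L(T+b_1)}\}$, reducing everything to a tail bound on the accumulated noise.

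Next I would apply a martingale concentration bound. As shown in \eqref{eq_rzc}, conditioned on $\mathcal{G}_{k-1}$ each $\hat{\mathbf{z}}_k$ is a zero-mean Gaussian vector with covariance $\mathbf{I}(\hat{\boldsymbol{\psi}}_{k-1},\mathbf{W}_k)^{-1}$, which scales like $\sigma^2/|s_p|^2$; hence the partial sums $\sum_i b_i\hat{\mathbf{z}}_i$ form a vector martingale whose coordinates are conditionally Gaussian with cumulative variance at most $C_1\,\sigma^2\sum_{i>\tilde{k}(l)}b_i^2/|s_p|^2$. By \eqref{eq_stepsize}, $\sum_{i>\tilde{k}(l)}b_i^2\le \epsilon^2/(\tilde{k}(l)+K_0)$, so Doob's maximal inequality applied to each of the four coordinates, together with the Gaussian tail and a two-sided union bound (the factor $4\times2=8$), gives $P(\sup_{t\in I_l}\|\boldsymbol{\rho}_l(t)\|_2>\delta e^{-L(T+b_1)})\le 8\exp(-C_2(\tilde{k}(l)+K_0)|s_p|^2/(\epsilon^2\sigma^2))$. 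Because the interval endpoints satisfy $t_k\approx\epsilon\ln k$, the indices $\tilde{k}(l)$ grow geometrically in $l$, so summing these per-interval bounds over $l\ge0$ gives a rapidly convergent series dominated by its first term; choosing $K_0$ and absorbing the series into $C$ yields the claimed $8e^{-C|s_p|^2/(\epsilon^2\sigma^2)}$. Combining with Lemma~\ref{le_sufficient} and Corollary~2.5 of \cite{borkar2008stochastic} then also gives $\hat{\textbf{x}}_k\to\textbf{x}$.

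The hard part will be the concentration step and, above all, its \emph{uniformity over the infinitely many intervals} $I_l$: one must verify that the per-interval escape probabilities shrink fast enough---driven by the diminishing $b_i$---for the union bound over all $l$ to collapse to a single exponential of the advertised form rather than diverging. Secondary technical points are controlling the interpolation/discretization remainder inside $\boldsymbol{\rho}_l(t)$ uniformly in $l$, and checking that $L$, $\delta$, and the Gronwall factor $e^{L(T+b_1)}$ can be chosen independently of $l$ so the threshold does not degrade; these steps parallel the lock-in-probability machinery of \cite{borkar2008stochastic} and the one-dimensional argument of \cite{JLiJoint2018}.
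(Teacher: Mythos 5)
Your proposal is correct and follows the same skeleton as the paper's proof in Appendix~\ref{sec_proof_le_lower_bound}: reduce to the event in Lemma~\ref{le_sufficient}, compare $\bar{\boldsymbol{\psi}}(t)$ with $\tilde{\boldsymbol{\psi}}^{\tilde{k}(l)}(t)$ on each $I_l$ via a discrete Gronwall inequality (Lemma~\ref{le_gronwall}) so that the escape event forces the noise martingale increment $\sup_{\tilde{k}(l)\le p\le\tilde{k}(l+1)}\|\boldsymbol{\xi}_p-\boldsymbol{\xi}_{\tilde{k}(l)}\|_2$ past $\delta/(2C_e)$, and then apply a Gaussian-martingale maximal inequality coordinatewise to get the prefactor $4\times 2=8$. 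The one place you genuinely diverge is the step you correctly identify as the crux, namely summing the per-interval tails over infinitely many $l$. You bound the interval variance by the full tail $c(\tilde{k}(l))\le\epsilon^2/(\tilde{k}(l)+K_0)$ and rely on the geometric growth $\tilde{k}(l)\sim e^{lT/\epsilon}$ to dominate the series by its first term; this works, but the resulting geometric ratio depends on $|s_p|^2/(\epsilon^2\sigma^2)$ and degenerates toward $1$ in the low-SNR regime, so recovering the exact form $8e^{-C|s_p|^2/(\epsilon^2\sigma^2)}$ requires an extra case split (the bound is vacuous precisely when the series is not first-term dominated) that you fold into ``absorbing the series into $C$.'' The paper instead keeps the increment $c(\tilde{k}(l))-c(\tilde{k}(l+1))$ in the exponent and uses the monotonicity of $v\mapsto\frac{1}{v}e^{-A/v}$ on $(0,A)$ (Lemma~\ref{le_sum}, enabled by the step-size constraint \eqref{eq_lock_constr2}) so that the sum over $l$ telescopes exactly to $\exp\{-A/c(0)\}$ with no geometric-growth argument and no prefactor inflation; this is cleaner and is where the requirement ``$K_0$ sufficiently large'' actually enters. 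One further technicality your sketch glosses over: the per-interval bound must be taken conditionally on no escape in earlier intervals (so that the Lipschitz constant $L$ and the bound $C_{\mathbf{f}}$ remain valid), which the paper handles via the conditional chain bound together with the independence of $\boldsymbol{\xi}_p-\boldsymbol{\xi}_{\tilde{k}(l)}$ from the past; this is standard lock-in machinery and does not change your argument materially.
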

\begin{proof} See Appendix \ref{sec_proof_le_lower_bound}.
\end{proof}

Finally, applying Lemma \ref{le_lower_bound} and Corollary 2.5 in \cite{borkar2008stochastic}, we can obtain
\begin{align}\label{eq_lock10}
P\left( \left. \hat{\textbf{x}}_k \rightarrow \textbf{x} \right| \hat{\textbf{x}}_0 \in \mathcal{B} \right) \ge &~ P\left( \hat{\textbf{x}}_k \in \mathcal{I}, \forall k \ge 0 \right) \\
\ge &~ 1 - 8e^{-\frac{C|{s_p}|^2}{\epsilon^2\sigma^2}},\nonumber
\end{align}
which completes the proof of Theorem \ref{Converge to real beam direction}.

\section{Proof of Theorem~\ref{Converge to with minimum CRLB}}\label{proof_Converge to with minimum CRLB}
If the step-size $b_k$ is given by \eqref{eq_stepsize} with any $\varepsilon > 0$ and $K_0 \ge 0$, the sufficient conditions are provided by Theorem 6.6.1 \cite[Section 6.6]{nevel1973stochastic} to prove the asymptotic normality of $\sqrt{k} \left( \hat{\textbf{x}}_k - \textbf{x} \right)$, i.e., $\sqrt{k} \left( \hat{\textbf{x}}_k - \textbf{x} \right) \overset{d}{\rightarrow}\mathcal{N}\left( 0, \Sigma_\textbf{x} \right)$. With the the condition that $\hat{\boldsymbol{\psi}}_k \rightarrow \boldsymbol{\psi}$, we can prove that the beam and channel tracking algorithm satisfies the condition above and obtain the variance $\boldsymbol{\Sigma}$ as follows:
\begin{itemize}
	% \item[1)] The estimate $\hat{x}_n$ should be within $[-1, 1]$.\\
	% The projection operator in \eqref{eq_est} ensures that $\hat{x}_n \in [-1, 1]$.
	
	\item[1)] Equation \eqref{eq_rtracking} is supposed to satisfy: (i) there exists an increasing sequence of $\sigma$-fields $\{\mathcal{F}_{k}: k \ge 0\}$ such that $\mathcal{F}_{l} \!\subset\!\mathcal{F}_{k}$ for $l\!<\!k$, and (ii) the random noise $\hat{\mathbf{z}}_k$ is $\mathcal{F}_{k}$-measurable and independent of $\mathcal{F}_{k-1}$.\\
	As is shown in Appendix \ref{proof_Converge to unique stable point}, there exists an increasing sequence of $\sigma$-fields $\{ \mathcal{G}_k : k \ge 0 \}$, where $\hat{\mathbf{z}}_k$ is measurable with respect to $\mathcal{G}_{k}$, i.e., $\mathbb{E} \left[ \left. \hat{\mathbf{z}}_k \right| \mathcal{G}_{k} \right] = \hat{\mathbf{z}}_k$, and is independent of $\mathcal{G}_{k-1}$, i.e., $\mathbb{E} \left[ \left. \hat{\mathbf{z}}_k \right| \mathcal{G}_{k-1}\right]  = \mathbb{E} \left[ \hat{\mathbf{z}}_k \right] = \mathbf{0}$.
	
	\item[2)] $\hat{\textbf{x}}_k$ should converge to $\textbf{x}$ almost surely as $k \rightarrow +\infty$. \\
	We assume that $\hat{\boldsymbol{\psi}}_k \rightarrow \boldsymbol{\psi}$, hence $\hat{\textbf{x}}_k$ converges to $\textbf{x}$ almost surely when $k \rightarrow +\infty$.
	
	\item[3)] The stable condition:\\
	In \eqref{eq_f}, we rewrite $\mathbf{f}\left(\hat{\boldsymbol{\psi}}_{k-1}, \boldsymbol{\psi}\right)$ as follows:
	\begin{equation}
	\begin{aligned}
	\!\!\!\!\!\!\mathbf{f}\left(\hat{\boldsymbol{\psi}}_{k\!-\!1}, \boldsymbol{\psi}\right)\!=\!\mathbf{C}_1 \left( \hat{\boldsymbol{\psi}}_{k\!-\!1} \!-\! \boldsymbol{\psi} \right) \!\!+\!\! \left[\!\begin{matrix} o(\| \hat{\boldsymbol{\psi}}_{k-1} \!-\! \boldsymbol{\psi} \|_2) \\ o(\| \hat{\boldsymbol{\psi}}_{k-1} \!-\! \boldsymbol{\psi} \|_2) \\ o(\| \hat{\boldsymbol{\psi}}_{k-1} \!-\! \boldsymbol{\psi} \|_2) \\ o(\| \hat{\boldsymbol{\psi}}_{k-1} \!-\! \boldsymbol{\psi} \|_2) \end{matrix}\!\right]\!\!,\!\!\!\!
	%& = -\frac{M(M-1)^2|\beta x|^2}{2\sigma^2} \left( \hat{x}_n - u^* \right) + o\left(\hat{x}_n - u^*\right).
	\end{aligned}
	\end{equation}where $\mathbf{C}_1$ is given by
	\begin{equation}
	\begin{aligned}
	\!\!\!\mathbf{C}_1 \!=\!  \left.\frac{\partial \mathbf{f}\left(\hat{\boldsymbol{\psi}}_{k-1}, \boldsymbol{\psi}\right)}{\partial \hat{\boldsymbol{\psi}}_{k-1}^\text{T}} \right|_{\hat{\boldsymbol{\psi}}_{k-1} = \boldsymbol{\psi}} \!=\! \!-\! \left[\begin{matrix} 1 & 0 & 0 & 0 \\ 0 & 1 & 0  & 0 \\ 0 & 0 & 1 & 0 \\ 0 &0 &0 &1 \end{matrix} \right].\!\!
	\end{aligned}
	\end{equation}
	Then the stable condition is obtained that:
	\begin{align}
	\mathbf{E} \!=\! \mathbf{C}_1  \cdot \varepsilon + \frac{1}{2} \!=\! - \!\left[\begin{matrix} \varepsilon \!-\! \frac{1}{2} & 0 & 0 & 0 \\ 0 & \varepsilon \!-\! \frac{1}{2} & 0 & 0 \\ 0 & 0 & \varepsilon \!-\! \frac{1}{2} & 0\\ 0 & 0 & 0 & \varepsilon \!-\! \frac{1}{2}& \end{matrix} \right]  \prec 0,
	\end{align}
	which leads to $\varepsilon > \frac{1}{2}$.
	
	\item[4)] The noise vector $\hat{\mathbf{z}}_k$ satisfies:\\
	\begin{equation}
	\mathbb{E}\left[\left\|\hat{\mathbf{z}}_k\right\|_2^2\right] = \operatorname{tr}\left\{\mathbf{I}(\hat{\boldsymbol{\psi}}_{k\!-\!1},\!\mathbf{W}_k)^{-1}\right\} < +\infty,
	\end{equation}
	and
	\begin{equation}
	\underset{v\rightarrow\infty}{\lim}\ \ \underset{k\ge1}{\sup}\ \ \int\limits_{\left\| \hat{z}_k \right\|_2 > v} \left\| \hat{\mathbf{z}}_k \right\|_2^2 p(\hat{\mathbf{z}}_k) d\hat{\mathbf{z}}_k = 0.
	\end{equation}
	
\end{itemize}

Let
\begin{align}
\mathbf{F} = &~ \lim_{\begin{matrix} k \rightarrow +\infty \\ \hat{\boldsymbol{\psi}}_k \rightarrow \boldsymbol{\psi} \end{matrix}} \mathbb{E}\left[\hat{\mathbf{z}}_k \hat{\mathbf{z}}_k^\text{T}\right] \\
\overset{(a)}{=} &~ \lim_{\begin{matrix} k \rightarrow +\infty \\ \hat{\boldsymbol{\psi}}_k \rightarrow \boldsymbol{\psi} \end{matrix}}\mathbf{I}(\hat{\boldsymbol{\psi}}_{k},\!\mathbf{W}_{k\!+\!1})^{-1} = \mathbf{I}(\boldsymbol{\psi}, \mathbf{W}^*)^{-1}, \nonumber
\end{align}
where step $(a)$ is obtained from \eqref{eq_rzc}.

By Theorem 6.6.1 \cite[Section 6.6]{nevel1973stochastic}, we have
\begin{align*}\sqrt{k+K_0}\left( \hat{\boldsymbol{\psi}}_{k} - \boldsymbol{\psi} \right) \overset{d}{\rightarrow}\mathcal{N}\left( 0, \boldsymbol\Sigma \right), \end{align*}
where
\begin{equation}\label{eq_Sigma}\begin{aligned}
\boldsymbol\Sigma = &~\alpha^2  \cdot \int_0^\infty e^{\mathbf{E}v} \mathbf{F} e^{\mathbf{E}^\text{H}v} dv \\
= &~\frac{\varepsilon^2}{2\varepsilon - 1}\mathbf{I}(\boldsymbol{\psi}, \mathbf{W}^*)^{-1}.\\ %& = \left( \frac{\lambda^2 }{2M(M-1)^2\pi^2 d^2\rho} \right)^2 *  \frac{2M(M-1)^2\pi^2 d^2\rho}{\lambda^2} \\
%& \ge I_{\max}^{-1} = \frac{\lambda^2 }{2M(M-1)^2\pi^2 d^2\rho}.
\end{aligned}\end{equation}
Due to that $\lim_{k\rightarrow\infty}\sqrt{{(k+K_0)}/{k}} = 1$, we have
\begin{align*}
\sqrt{k}\left( \hat{\boldsymbol{\psi}}_{k} - \boldsymbol{\psi} \right) \rightarrow \sqrt{k}\cdot\sqrt{\frac{k+K_0}{k}}\left( \hat{\boldsymbol{\psi}}_{k} - \boldsymbol{\psi} \right) \overset{d}{\rightarrow}\mathcal{N}\left( 0, \boldsymbol\Sigma \right),
\end{align*}
if $k\rightarrow +\infty$. Thus, we can get
\begin{align}
\sqrt{k}\left( \hat{\boldsymbol{\psi}}_{k} - \boldsymbol{\psi} \right) \overset{d}{\rightarrow}\mathcal{N}\left( 0, \boldsymbol\Sigma \right).
\end{align}
By adapting $\epsilon=1$ in \eqref{eq_Sigma}, we can obtain
\vspace{-1mm}
\begin{equation}\label{eq_asypsi}
\begin{aligned}
\sqrt{k}\left( \hat{\boldsymbol{\psi}}_{k} - \boldsymbol{\psi} \right) \overset{d}{\rightarrow}\mathcal{N}\left( 0, \mathbf{I}(\boldsymbol{\psi}, \mathbf{W}^*)^{-1} \right).
\end{aligned}
\end{equation}
\vspace{-1mm}Since $\hat{\boldsymbol{\psi}}_{k} \to \boldsymbol{\psi}$ as $k \to +\infty$, $\hat{\textbf{h}}_k - \textbf{h}_k$ is linear to $ \hat{\boldsymbol{\psi}}_{k} - \boldsymbol{\psi}$. Hence, $\hat{\textbf{h}}_k - \textbf{h}_k$ is also asymptotically Gaussian.

Combining \eqref{eq_fLB}, \eqref{eq_asypsi} and \eqref{eq_CMMSE}, we can conclude that
\begin{align}
\mathop {\lim }\limits_{k \to +\infty } \frac{k}{MN} \mathbb{E} \left[{\left\| \hat{\textbf{h}}_k - \textbf{h}_k \right\|}_2^2 \big| \hat{\boldsymbol{\psi}}_k \to \boldsymbol{\psi} \right] = {I}_{\min}(\boldsymbol{\psi}).
\end{align}

\section{Proof of Lemma \ref{le_lower_bound}}\label{sec_proof_le_lower_bound}
The following lemmas are introduced to prove Lemma \ref{le_lower_bound}.

\begin{lemma}[Lemma 3 \cite{JLiJoint2018}]\label{le_gronwall}
	Given $T$ by \eqref{eq_T} and
	\begin{align}\label{eq_nT}
	k_T \overset{\Delta}{=} \inf \left\{i \in \mathbb{Z}: t_{k+i} \ge t_k + T \right\}.
	\end{align}
	If there exists a constant $C>0$, which satisfies
	\begin{equation}\label{eq_gronwall1}
	\begin{aligned}
	&~\left\| \bar{\boldsymbol{\psi}}(t_{k+l}) - \tilde{\boldsymbol{\psi}}^k(t_{k+l})\right\|_2 \\
	\le &~L \sum_{i=1}^{l} a_{k+i} \left\| \bar{\boldsymbol{\psi}}(t_{k+i-1}) - \tilde{\boldsymbol{\psi}}^k(t_{k+i-1}) \right\|_2  + C,
	\end{aligned}
	\end{equation}
	for all $k \ge 0$ and $1 \le l \le k_T$, then
	\begin{equation}\label{eq_gronwall2}
	\begin{aligned}
	\underset{t\in\left[ t_k, t_{k+k_T} \right]}{\sup} \left\| \bar{\boldsymbol{\psi}}(t) - \tilde{\boldsymbol{\psi}}^k(t)\right\|_2 \le  \frac{C_{\mathbf{f}} b_{k+1}}{2} + C e^{L (T+b_1)},
	\end{aligned}
	\end{equation}
	where $L$ and $C_{\mathbf{f}}$ are defined in \eqref{eq_Lip} and \eqref{eq_CT} separately.
\end{lemma}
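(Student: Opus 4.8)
The plan is to treat \eqref{eq_gronwall1} as a discrete Grönwall recursion in the grid-point deviations and then lift the resulting estimate to a supremum over the whole interval via the piecewise-linear structure of $\bar{\boldsymbol{\psi}}$. Write $u_l \triangleq \left\| \bar{\boldsymbol{\psi}}(t_{k+l}) - \tilde{\boldsymbol{\psi}}^k(t_{k+l}) \right\|_2$ and note $u_0 = 0$, since $\tilde{\boldsymbol{\psi}}^k(t_k) = \bar{\boldsymbol{\psi}}(t_k) = \hat{\boldsymbol{\psi}}_k$ by the constructions \eqref{eq_continuous} and \eqref{eq_ODE_new}. Identifying the weights $a_{k+i}$ in \eqref{eq_gronwall1} with the step sizes $b_{k+i}$, the hypothesis reads $u_l \le L \sum_{i=1}^{l} b_{k+i}\, u_{i-1} + C$ for $1 \le l \le k_T$.

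First I would solve this recursion by the standard discrete Grönwall argument: by induction on $l$ and the telescoping identity $\prod_{i=1}^{l}(1+L b_{k+i}) = 1 + \sum_{i=1}^{l} L b_{k+i}\prod_{m=1}^{i-1}(1+L b_{k+m})$, one gets $u_l \le C\prod_{i=1}^{l}(1 + L b_{k+i}) \le C\exp\!\left(L\sum_{i=1}^{l} b_{k+i}\right)$. Next I would bound the exponent: for $l \le k_T$ we have $\sum_{i=1}^{l} b_{k+i} = t_{k+l} - t_k \le t_{k+k_T} - t_k$, and the definition \eqref{eq_nT} of $k_T$ together with the fact that $b_j = \epsilon/(j+K_0)$ is decreasing (so $b_{k+k_T}\le b_1$) gives $t_{k+k_T} - t_k < T + b_1$. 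Hence $u_l \le C e^{L(T+b_1)}$ for every $0 \le l \le k_T$, which controls the deviation at all grid points.

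The remaining and more delicate step is to pass from grid points to the continuous supremum in \eqref{eq_gronwall2}. For $t \in [t_{k+l}, t_{k+l+1}]$ write $t = t_{k+l} + s\,b_{k+l+1}$ with $s\in[0,1]$ and use that the linear interpolation is the convex combination $\bar{\boldsymbol{\psi}}(t) = (1-s)\bar{\boldsymbol{\psi}}(t_{k+l}) + s\,\bar{\boldsymbol{\psi}}(t_{k+l+1})$; inserting $\tilde{\boldsymbol{\psi}}^k$ at both grid endpoints yields
\[
\left\| \bar{\boldsymbol{\psi}}(t) - \tilde{\boldsymbol{\psi}}^k(t) \right\|_2 \le (1-s)u_l + s\,u_{l+1} + (1-s)\left\| \tilde{\boldsymbol{\psi}}^k(t) - \tilde{\boldsymbol{\psi}}^k(t_{k+l}) \right\|_2 + s\left\| \tilde{\boldsymbol{\psi}}^k(t) - \tilde{\boldsymbol{\psi}}^k(t_{k+l+1}) \right\|_2 .
\]
The first two terms are at most $Ce^{L(T+b_1)}$. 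For the ODE terms, \eqref{eq_ODE_new} and the uniform drift bound $\left\|\mathbf{f}\right\|_2 \le C_{\mathbf{f}}$ from \eqref{eq_CT} give $\left\| \tilde{\boldsymbol{\psi}}^k(t) - \tilde{\boldsymbol{\psi}}^k(t_{k+l}) \right\|_2 \le C_{\mathbf{f}}\,s\,b_{k+l+1}$ and $\left\| \tilde{\boldsymbol{\psi}}^k(t) - \tilde{\boldsymbol{\psi}}^k(t_{k+l+1}) \right\|_2 \le C_{\mathbf{f}}(1-s)b_{k+l+1}$, so their contribution is $2s(1-s)C_{\mathbf{f}}\,b_{k+l+1}$. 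Since $\max_{s\in[0,1]} 2s(1-s) = \tfrac12$ and $b_{k+l+1}\le b_{k+1}$, this is at most $\frac{C_{\mathbf{f}} b_{k+1}}{2}$, and adding the two pieces gives \eqref{eq_gronwall2}.

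I expect the main obstacle to be precisely this last interpolation step: the noise increments $\hat{\mathbf{z}}_{k+i}$ from \eqref{eq_rtracking} enter the iterates, so one must be careful that they are absorbed entirely into the grid-point deviations $u_l$ (already bounded through the hypothesis constant $C$) and do not reappear in the within-step estimate. The convex-combination representation is what achieves this, and the factor $\tfrac12$ is exactly $\max_s 2s(1-s)$. The discrete Grönwall and the exponent bound are routine; since the structure is identical to the one-dimensional version, I would follow Lemma 3 of \cite{JLiJoint2018}, verifying only that the vector-valued $\mathbf{f}$ here satisfies the same Lipschitz bound \eqref{eq_Lip} and uniform bound \eqref{eq_CT}.
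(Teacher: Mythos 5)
Your argument is correct. Note that the paper itself gives no proof of this lemma---it is imported verbatim as Lemma~3 of \cite{JLiJoint2018}---so there is no in-paper derivation to compare against; what you have written is the standard discrete Gr\"onwall recursion (your identification of the $a_{k+i}$ in \eqref{eq_gronwall1} with the step sizes $b_{k+i}$ is the right reading of what is evidently a notational slip in the statement), the exponent bound $t_{k+k_T}-t_k < T+b_1$ from the definition of $k_T$ and the monotonicity of $b_j$, and the convex-combination interpolation step whose $2s(1-s)\le \tfrac12$ maximization produces exactly the $\tfrac{C_{\mathbf{f}}b_{k+1}}{2}$ term, all of which check out and reproduce \eqref{eq_gronwall2}.
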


%\begin{lemma}\label{le_cheb}
%If a $D$-dimensional process
%$$\{\mathbf{M}_n = [M_{n,1}, \ldots, M_{n,D}]^\text{T}: n = 1, 2, \ldots\},$$
%statisfies that: (i) $\mathbf{M}_n$ is Gaussian distributed with zero mean, i.e., $\mathbf{M}_n \sim \mathcal{N}(\mathbf{0}, \mathbf{V}_n)$, and (ii)~$\sum_{i = 1}^D \left(M_{n,i}^2 - \mathbb{E}\left[M_{n,i}^2\right]\right)$ is a martingale in $n$, then
%\begin{equation}\label{eq_lock5}
%\begin{aligned}
%	&~P\left( \underset{0\le n \le k}{\sup}\sum_{i = 1}^D \left(M_{n,i}^2 - \mathbb{E}\left[M_{n,i}^2\right]\right) > \eta \right) \\
%	\le &~\frac{e^{-C\left(\sum_{i=1}^D \lambda_{k,i} + \eta\right)} }{\sqrt{\prod_{i = 1}^D \left(1 - 2C\lambda_{k,i}\right)}},
%\end{aligned}
%\end{equation}
%for any $\eta > 0$, where $\lambda_{k,1} \le \lambda_{k,2} \le \cdots \le \lambda_{k,D}$ are the eigenvalues of $\mathbf{V}_k$, and $0 < C < \frac{1}{2\lambda_{k,D}}$.
%\end{lemma}
\begin{lemma}[Lemma 4 \cite{JLiSuper2017}]\label{le_cheb}
	If $\{M_i: i = 1, 2, \ldots\}$ satisfies that: (i)  $M_i$ is Gaussian distributed with zero mean, and (ii) $M_i$ is a martingale in $i$, then
	\begin{equation}\label{eq_lock5}
	\begin{aligned}
	& P\left( \underset{0\le i \le k}{\sup}\left|M_i\right| > \eta \right) \le 2\exp\left\{-\frac{\eta^2}{2\operatorname{Var}\left[M_k\right]}\right\},
	\end{aligned}
	\end{equation}
	for any $\eta > 0$.
\end{lemma}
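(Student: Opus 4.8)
The plan is to prove this maximal concentration bound by combining the Chernoff (exponential-moment) method with Doob's maximal inequality for submartingales, using the Gaussian law of $M_k$ only at the very end to evaluate a moment generating function. The whole argument rests on turning the running maximum of $M_i$ into the running maximum of a nonnegative submartingale, to which Doob's inequality applies cleanly.

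First I would reduce the two-sided event to a one-sided one. Since $\{M_i\}$ is a zero-mean martingale, so is $\{-M_i\}$, and each $-M_i$ is again Gaussian with the same variance $\operatorname{Var}[M_k]$. Hence it suffices to bound $P\big(\sup_{0\le i\le k} M_i > \eta\big)$ and then apply the identical estimate to $\{-M_i\}$; the union bound over these two events is exactly what produces the prefactor $2$ in \eqref{eq_lock5}. Next, for a fixed $\lambda>0$ I would introduce $X_i \triangleq e^{\lambda M_i}$. Because $x\mapsto e^{\lambda x}$ is convex and each $M_i$ has finite exponential moments (being Gaussian), conditional Jensen together with the martingale property gives $\mathbb{E}[X_{i+1}\mid\mathcal{F}_i] \ge e^{\lambda\,\mathbb{E}[M_{i+1}\mid\mathcal{F}_i]} = X_i$, so $\{X_i\}$ is a nonnegative submartingale with respect to the underlying filtration $\mathcal{F}_i$.

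The event $\{\sup_{0\le i\le k} M_i > \eta\}$ coincides with $\{\sup_{0\le i\le k} X_i > e^{\lambda\eta}\}$, so Doob's maximal inequality applied at the terminal index $k$ yields
\begin{equation*}
P\Big(\sup_{0\le i\le k} M_i > \eta\Big) \le e^{-\lambda\eta}\,\mathbb{E}\!\left[e^{\lambda M_k}\right].
\end{equation*}
Now I would insert the Gaussian moment generating function: since $M_k\sim\mathcal{N}\big(0,\operatorname{Var}[M_k]\big)$, one has $\mathbb{E}[e^{\lambda M_k}] = \exp\{\lambda^2\operatorname{Var}[M_k]/2\}$, so the right-hand side equals $\exp\{\lambda^2\operatorname{Var}[M_k]/2 - \lambda\eta\}$. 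Minimizing this exponent over $\lambda>0$ at $\lambda^\star = \eta/\operatorname{Var}[M_k]$ gives the one-sided bound $\exp\{-\eta^2/(2\operatorname{Var}[M_k])\}$, and the symmetric argument supplies the same bound for $\{-M_i\}$; summing the two establishes \eqref{eq_lock5}.

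The argument is essentially routine, so I do not expect a deep obstacle; the two points that require care are the following. First, one must genuinely verify that $e^{\lambda M_i}$ is an \emph{integrable submartingale} and not merely invoke that each $M_i$ is marginally Gaussian—this step uses the joint martingale structure through a common filtration $\mathcal{F}_i$, which is precisely hypothesis (ii). Second, applying Doob's inequality at the terminal index $k$ is what makes $\operatorname{Var}[M_k]$ the correct quantity in the exponent: because martingale increments are orthogonal, $\operatorname{Var}[M_i]$ is nondecreasing in $i$, so $\operatorname{Var}[M_k]$ is the largest variance along the path, which is exactly what one wants controlling a running maximum. These two points are where I would concentrate the rigor.
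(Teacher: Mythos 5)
Your proof is correct. Note, however, that the paper itself does not prove this lemma at all: it is imported verbatim as ``Lemma 4'' from the cited reference \cite{JLiSuper2017}, so there is no in-paper argument to compare against. Your argument is the standard and essentially canonical proof of this kind of maximal inequality: symmetrize to reduce $\lvert M_i\rvert$ to one-sided events (giving the prefactor $2$), pass to the nonnegative submartingale $e^{\lambda M_i}$ via conditional Jensen, apply Doob's maximal inequality at the terminal index $k$, evaluate the Gaussian moment generating function $\mathbb{E}[e^{\lambda M_k}] = \exp\{\lambda^2\operatorname{Var}[M_k]/2\}$, and optimize $\lambda^\star = \eta/\operatorname{Var}[M_k]$. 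All steps are sound, and you correctly isolate the two places where rigor is needed: integrability of $e^{\lambda M_i}$ (guaranteed since each $M_i$ is Gaussian) and the fact that only the marginal law of $M_k$, not of the whole path, enters the final bound. Two cosmetic points worth tightening if this were written out in full: the lemma's index set starts at $i=1$ while the supremum runs from $i=0$, so one should state the convention $M_0=0$ (which is how the paper applies it, with $\boldsymbol{\xi}_{\tilde{k}(l)}$ subtracted off); and the degenerate case $\operatorname{Var}[M_k]=0$ should be dispatched separately, since $\lambda^\star$ is then undefined while the claimed bound holds trivially.
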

%\begin{proof}
%See Appendix \ref{sec_proof_le_cheb}.
%% The proof is omitted due to space limitation.
%\end{proof}

\begin{lemma}[Lemma 5 \cite{JLiSuper2017}]\label{le_sum}
	If given a constant $C > 0$, then
	\begin{equation}\label{eq_increasing}
	\begin{aligned}
	G(v) = \frac{1}{v}\exp\left[-\frac{C}{v}\right],
	\end{aligned}
	\end{equation}
	is increasing for all $0 < v < C$.
\end{lemma}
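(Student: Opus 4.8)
The plan is to prove monotonicity directly by computing the derivative $G'(v)$ and showing it is strictly positive throughout the open interval $(0,C)$. Since $G$ is smooth on $(0,\infty)$, once I establish $G'(v)>0$ on $(0,C)$ the strict monotonicity of $G$ there follows immediately.

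First I would differentiate $G(v)=v^{-1}\exp(-C/v)$ by the product rule, taking care with the chain rule applied to the exponential. Since $\frac{d}{dv}(-C/v)=C/v^2$, we have $\frac{d}{dv}\exp(-C/v)=\frac{C}{v^2}\exp(-C/v)$. Combining this with the derivative of $v^{-1}$ gives
\begin{equation}
G'(v)=-\frac{1}{v^2}\exp\left(-\frac{C}{v}\right)+\frac{C}{v^3}\exp\left(-\frac{C}{v}\right)=\frac{\exp(-C/v)}{v^3}\,(C-v).
\end{equation}
The key observation is that the sign of $G'(v)$ is governed entirely by the linear factor $(C-v)$, since both $\exp(-C/v)$ and $v^{-3}$ are strictly positive for every $v>0$.

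Finally, on $0<v<C$ each of the three factors $\exp(-C/v)$, $v^{-3}$, and $(C-v)$ is strictly positive, so $G'(v)>0$ and $G$ is strictly increasing, as claimed. This also explains why the interval of monotonicity is exactly $(0,C)$: the factor $(C-v)$ vanishes at $v=C$ and turns negative beyond it, so $G$ is decreasing on $(C,\infty)$. There is no genuine obstacle here; the only point requiring care is the bookkeeping in the chain-rule differentiation of the exponential and the subsequent factoring of $G'(v)$ into a manifestly sign-definite product.
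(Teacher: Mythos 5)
Your proof is correct: the computation $G'(v)=\frac{e^{-C/v}}{v^{3}}(C-v)$ is accurate, and the sign analysis on $(0,C)$ is immediate. The paper itself does not prove this lemma but imports it by citation from \cite{JLiSuper2017}; your direct differentiation is the standard (and essentially the only natural) argument for this elementary calculus fact, so there is nothing further to compare.
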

Let $\boldsymbol{\xi}_0 \overset{\Delta}{=} \mathbf{0}$ and $\boldsymbol{\xi}_k \overset{\Delta}{=} \sum_{l=1}^{k} b_{l} \mathbf{\hat{z}}_{l} $, $k \ge 1$, where $\mathbf{\hat{z}}_{l}$ is given in \eqref{eq_z}. With \eqref{eq_continuous} and \eqref{eq_ODE_new}, we have for $t_{k+l}, 1 \le l \le k_T$,
\begin{align}\label{eq_seq_trace} \bar{\boldsymbol{\psi}}(t_{k+l}) = &~\bar{\boldsymbol{\psi}}(t_k)  + \sum_{i=1}^{l} a_{k+i} \mathbf{f}\left(\bar{\boldsymbol{\psi}}(t_{n+i-1}), \boldsymbol{\psi}\right) \\
&~+ (\boldsymbol{\xi}_{k+l} - \boldsymbol{\xi}_{k}), \nonumber
\end{align}
and
\begin{align}\label{eq_ode_trace}
\tilde{\boldsymbol{\psi}}^n(t_{k+l}) = &~\tilde{\boldsymbol{\psi}}^k(t_k) + \int_{t_k}^{t_{k+l}} \mathbf{f}\left(\tilde{\boldsymbol{\psi}}^k(v), \boldsymbol{\psi}\right) dv \\
= &~\tilde{\boldsymbol{\psi}}^k(t_k) + \sum_{i=1}^{l} b_{k+i} \mathbf{f}\left(\tilde{\boldsymbol{\psi}}^k(t_{k+i-1}), \boldsymbol{\psi}\right) \nonumber \\
&~+ \int_{t_k}^{t_{k+l}} \left[\mathbf{f}\left(\tilde{\boldsymbol{\psi}}^k(v), \boldsymbol{\psi}\right) - \mathbf{f}\left(\tilde{\boldsymbol{\psi}}^k(\underline{v}), \boldsymbol{\psi}\right) \right]dv,  \nonumber
\end{align}
where $\underline{v} \overset{\Delta}{=} \max \left\{ t_k: t_k \le v, k \ge 0 \right\}$ for $v \ge 0$.
%Note that we only care about $\hat{x}_n \in \mathcal{I} \subset \mathcal{B}$, so the projection operator does not take effect in \eqref{eq_seq_trace} and we omit it.

To bound $\int_{t_k}^{t_{k+l}} \left[\mathbf{f}\left(\tilde{\boldsymbol{\psi}}^k(v), \boldsymbol{\psi}\right) - \mathbf{f}\left(\tilde{\boldsymbol{\psi}}^k(\underline{v}), \boldsymbol{\psi}\right) \right]dv$ on the RHS of (\ref{eq_ode_trace}), we obtain the Lipschitz constant of function $\mathbf{f}(\mathbf{v}, \boldsymbol{\psi})$ considering the first varible $\mathbf{v}$, given by
\begin{equation}\label{eq_Lip}
L \overset{\Delta}{=} \underset{\mathbf{v}_1 \ne \mathbf{v}_2}{\sup} \frac{\left\| \mathbf{f}(\mathbf{v}_1, \boldsymbol{\psi}) -  \mathbf{f}(\mathbf{v}_2, \boldsymbol{\psi}) \right\|_2}{\left\| \mathbf{v}_1 - \mathbf{v}_2 \right\|_2}.
\end{equation}
Similar to \eqref{eq_ub_fx}, for any $t \ge t_k$, we can obtain that there exists a constant $0 < C_{\mathbf{f}} < \infty$ such that
\begin{equation}\label{eq_CT}
\begin{aligned}
\left\| \mathbf{f}\left(\tilde{\boldsymbol{\psi}}^k(t), \boldsymbol{\psi}\right) \right\|_2 \le C_{\mathbf{f}}.
\end{aligned}
\end{equation}
Hence, we have
\begin{equation}\label{eq_int}
\begin{aligned}
& \left\| \int_{t_k}^{t_{k+m}} \left[\mathbf{f}\left(\tilde{\boldsymbol{\psi}}^k(v), \boldsymbol{\psi}\right) - \mathbf{f}\left(\tilde{\boldsymbol{\psi}}^k(\underline{v}), \boldsymbol{\psi}\right) \right]dv \right\|_2 \\
\le & \int_{t_k}^{t_{k+l}} \left\| \mathbf{f}\left(\tilde{\boldsymbol{\psi}}^k(v), \boldsymbol{\psi}\right) - \mathbf{f}\left(\tilde{\boldsymbol{\psi}}^k(\underline{v}), \boldsymbol{\psi}\right) \right\|_2 dv \\
\overset{(a)}{\le} & \int_{t_k}^{t_{k+l}} L \left\| \tilde{\boldsymbol{\psi}}^k(v) - \tilde{\boldsymbol{\psi}}^k(\underline{v}) \right\|_2 dv \\
\overset{(b)}{\le} & \int_{t_k}^{t_{k+l}} L \left\| \int_{\underline{v}}^{v} \mathbf{f}\left(\tilde{\boldsymbol{\psi}}^k(s), \boldsymbol{\psi}\right) ds \right\|_2 dv \\
\le & \int_{t_k}^{t_{k+l}} \int_{\underline{v}}^{v} L \left\| \mathbf{f}\left(\tilde{\boldsymbol{\psi}}^k(s), \boldsymbol{\psi}\right) \right\|_2 ds dv \\
\overset{(c)}{\le} & \int_{t_k}^{t_{k+l}} \int_{\underline{v}}^{v} C_{\mathbf{f}} L ds dv =   \int_{t_k}^{t_{k+l}}C_{\mathbf{f}} L (v - \underline{v}) dv \\
=  & \sum_{i=1}^{l} \int_{t_{k+i-1}}^{t_{k+i}}C_{\mathbf{f}} L (v - t_{k+i-1}) dv \\
= & \sum_{i=1}^{l} \frac{C_{\mathbf{f}} L (t_{k+i} - t_{k+i-1})^2}{2}= \frac{C_{\mathbf{f}} L}{2} \sum_{i=1}^{l} b_{k+i}^2,\end{aligned}
\end{equation}
where Step $(a)$ is due to (\ref{eq_Lip}), Step $(b)$ is due to the definition in (\ref{eq_ODE_new}), and Step $(c)$ is due to (\ref{eq_CT}). Then, by subtracting $\tilde{\boldsymbol{\psi}}^k(t_{k+l})$ in \eqref{eq_ode_trace} from $\bar{\boldsymbol{\psi}}(t_{k+l})$  in \eqref{eq_seq_trace} and taking norms, the following inequality can be obtained from (\ref{eq_Lip}) and (\ref{eq_int}) for $k \ge 0, 1 \le l \le k_T$:
\begin{equation}\label{eq_lock2}
\begin{aligned}
& \left\| \bar{\boldsymbol{\psi}}(t_{k+l}) - \tilde{\boldsymbol{\psi}}^k(t_{k+l})\right\|_2 \\
\le & L \sum_{i=1}^{l} b_{k+i} \left\| \bar{\boldsymbol{\psi}}(t_{k+i-1}) - \tilde{\boldsymbol{\psi}}^k(t_{k+i-1}) \right\|_2 \\
&  + \frac{C_{\mathbf{f}} L}{2} \sum_{i=1}^{l} b_{k+i}^2+ \left\|\boldsymbol{\xi}_{k+l} - \boldsymbol{\xi}_{k}\right\|_2 \\
\le & L \sum_{i=1}^{l} b_{k+i} \left\| \bar{\boldsymbol{\psi}}(t_{k+i-1}) - \tilde{\boldsymbol{\psi}}^k(t_{k+i-1}) \right\|_2 \\
&  + \frac{C_{\mathbf{f}} L}{2} \sum_{i=1}^{k_T} b_{k+i}^2+ \underset{1 \le l\le k_T}{\sup}\left\|\boldsymbol{\xi}_{k+l} - \boldsymbol{\xi}_{k}\right\|_2.
%	& \left| \bar{x}(t_{n+m}) - \tilde{x}^n(t_{n+m})\right| \\
%	\le & L \sum_{i=1}^{m} a_{n+i} \left| \bar{x}(t_{n+i-1}) - \tilde{x}^n(t_{n+i-1}) \right| \\
%	&  + \frac{\sqrt{M} L}{2} \sum_{i=1}^{m} a_{n+i}^2+ |\xi_{n+m} - \xi_{n}|.
\end{aligned}
\end{equation}

Applying Lemma \ref{le_gronwall} to (\ref{eq_lock2}) and letting
\begin{align*}C = \frac{C_{\mathbf{f}} L}{2} \sum_{i=1}^{k_T} b_{k+i}^2+ \underset{1\le l\le k_T}{\sup}\left\|\boldsymbol{\xi}_{k+l} - \boldsymbol{\xi}_{k}\right\|_2,\end{align*}
yields
\begin{equation}\label{eq_lock3}
\begin{aligned}
&~\underset{t\in\left[ t_k, t_{k+k_T} \right]}{\sup} \left\| \bar{\boldsymbol{\psi}}(t) - \tilde{\boldsymbol{\psi}}^k(t)\right\|_2	 \\
\le &~C_e \left\{ \frac{C_{\mathbf{f}} L}{2} \big[c(k) - c(k+k_T)\big] \right.  \\
& \left. + \underset{1 \le l\le k_T}{\sup}\left\|\boldsymbol{\xi}_{k+l} - \boldsymbol{\xi}_{k}\right\|_2 \right\} + \frac{C_{\mathbf{f}} c_{k+1}}{2} ,
\end{aligned}
\end{equation}
where $C_e \overset{\Delta}{=} e^{L (T+b_1)}$, and $c(k) \overset{\Delta}{=} \sum_{i > k} b_{i}^2$.
% Suppose the initial point $\hat{x}_0 \in \mathcal{B}$.
% Then, we will derive the lower bound of the probability that the sequence $\{\hat{x}_n: n \ge 0\}$ remains inside this invariant set.
% Let $T_0 \overset{\Delta}{=} 0$ and $T_{m+1} \overset{\Delta}{=} \min \left\{ t_i: t_i \ge T_n + T, i \ge 0 \right\}$ for $m \ge 0$. Then $T_{m+1} - T_m \in [T, T+a_1]$ and $T_m = t_{\tilde{n}(m)}$ for some $\tilde{n}(m) \uparrow \infty$, where $\tilde{n}(0) = n_0$. Let $\tilde{x}^{\tilde{n}(m)}(t)$ denote the solution of ODE (\ref{eq_ODE}) for $t \in I_m \overset{\Delta}{=} \left[ T_m, T_{m+1} \right]$ with $\tilde{x}^{\tilde{n}(m)}(T_m) = \bar{x}(T_m)$, $m \ge 0$.
Letting $k = \tilde{k}(l)$ in (\ref{eq_lock3}), we have $k + k_T = \tilde{k}(l+1)$ due to the definition of $T_{l+1} = t_{\tilde{k}(l+1)}$ in \emph{Step 2} of Appendix \ref{proof_Converge to real beam direction} and
\begin{equation}\label{eq_lock3-2}
\begin{aligned}
&~\underset{t\in I_l}{\sup} \left\| \bar{\boldsymbol{\psi}}(t) - \tilde{\boldsymbol{\psi}}^{\tilde{k}(l)}(t)\right\|_2	 \\
\le &~C_e \left\{ \frac{C_{\mathbf{f}} L}{2} \big[ c(\tilde{k}(l)) - c(\tilde{k}(l+1)) \big]\right. \\
& \left.  + \underset{\tilde{k}(l) \le p \le \tilde{k}(l+1)}{\sup}\left\|\boldsymbol{\xi}_{p} - \boldsymbol{\xi}_{\tilde{k}(l)}\right\|_2 \right\} + \frac{C_{\mathbf{f}} b_{\tilde{k}(l)+1}}{2}.
\end{aligned}
\end{equation}

Suppose that the step size $\{b_k: k > 0\}$ satisfies
\begin{equation}\label{eq_lock_constr}
C_e \frac{C_{\mathbf{f}} L}{2} \big[c(\tilde{k}(l)) - c(\tilde{k}(l+1))\big] +  \frac{C_{\mathbf{f}} b_{\tilde{k}(l)+1}}{2} < \frac{\delta}{2},
\end{equation}
for $l \ge 0$.

Given $\underset{t\in I_l}{\sup} \left\| \bar{\textbf{x}}(t) - \tilde{\textbf{x}}^{\tilde{k}(l)}(t)\right\| \!>\! \delta$, we can obtain from \eqref{eq_lock3-2} and \eqref{eq_lock_constr} that
\begin{equation*}
\begin{aligned}
&~\underset{\tilde{k}(l)\le p \le \tilde{k}(l+1)}{\sup}\left\|\boldsymbol{\xi}_{p} - \boldsymbol{\xi}_{\tilde{k}(l)}\right\|_2 \\
\ge &~\frac{1}{C_e} \left( \underset{t\in I_l}{\sup} \left\| \bar{\boldsymbol{\psi}}(t) - \tilde{\boldsymbol{\psi}}^{\tilde{k}(l)}(t)\right\|_2 - \frac{C_{\mathbf{f}} L}{2} \big[ c(\tilde{k}(l)) \right. \\
& \left.  - c(\tilde{k}(l+1)) \big] - \frac{C_{\mathbf{f}} a_{\tilde{k}(l)+1}}{2}\right) \\
> &~\frac{1}{C_e}\left( \underset{t\in I_l}{\sup} \left| \bar{\textbf{x}}(t) - \tilde{\textbf{x}}^{\tilde{k}(l)}(t)\right| - \frac{\delta}{2} \right) \\
> &~\frac{\delta}{2C_e}.
\end{aligned}
\end{equation*}
Then, we get
\begin{equation}\label{eq_lock4}
\begin{aligned}
&~P\left( \left. \underset{t\in I_m}{\sup} \left\| \bar{\textbf{x}}(t) - \tilde{\textbf{x}}^{\tilde{k}(l)}(t)\right\| > \delta \right|\right. \\
&~~~~~~\left.\underset{t\in I_i}{\sup} \left\| \bar{\textbf{x}}(t) - \tilde{\textbf{x}}^{\tilde{k}(i)}(t)\right\| \le \delta, 0 \le i < l \right) \\
{\le} & P\left( \left. \underset{\tilde{k}(l)\le p \le \tilde{k}(l+1)}{\sup}\left\|\boldsymbol{\xi}_{p} - \boldsymbol{\xi}_{\tilde{k}(l)}\right\|_2 > \frac{\delta}{2C_e} \right| \right. \\
&~~~~~~\left.  \underset{t\in I_i}{\sup} \left\| \bar{\textbf{x}}(t) - \tilde{\textbf{x}}^{\tilde{k}(i)}(t)\right\| \le \delta, 0 \le i < l \right) \\
\overset{(a)}{=} &~P\left( \underset{\tilde{k}(l)\le p \le \tilde{k}(l+1)}{\sup}\left\|\boldsymbol{\xi}_{p} - \boldsymbol{\xi}_{\tilde{k}(l)}\right\|_2 > \frac{\delta}{2C_e} \right),
\end{aligned}
\end{equation}
where Step $(a)$ is due to the independence of noise, i.e., $ \boldsymbol{\xi}_{p}-\boldsymbol{\xi}_{\tilde{k}(l)} , \tilde{k}(l) \le p \le \tilde{k}(l+1)$ are independent of $\hat{\textbf{x}}_k, 0 \le k \le \tilde{k}(l)$.

%\begin{lemma}[Lemma 4.2 in \cite{borkar2008stochastic}]\label{le_42}
%
%\end{lemma}

The lower bound of the probability that the sequence $\{\hat{\textbf{x}}_k: k \ge 0\}$ remains in the invariant set $\mathcal{I}$ is given by
\begin{align}\label{eq_p_invariant}
& P\left( \hat{\textbf{x}}_k \in \mathcal{I}, \forall k \ge 0 \right) \nonumber\\
\overset{(a)}{\ge} & P\left( \underset{t\in I_m}{\sup} \left\| \bar{\textbf{x}}(t) - \tilde{\textbf{x}}^{\tilde{k}(l)}(t)\right\| \le \delta, \forall l \ge 0 \right)  \nonumber\\
\overset{(b)}{\ge} & 1 - \sum_{l\ge 0} P\left( \left. \underset{t\in I_m}{\sup} \left\| \bar{\textbf{x}}(t) - \tilde{\textbf{x}}^{\tilde{k}(l)}(t)\right| > \delta \right\| \right.  \\
&~~~~~~~~~~~~~~~\left. \underset{t\in I_i}{\sup} \left\| \bar{\textbf{x}}(t) - \tilde{\textbf{x}}^{\tilde{k}(i)}(t)\right\| \le \delta, 0 \le i < l \right)  \nonumber\\
\overset{(c)}{\ge} & 1 - \sum_{l\ge 0} P\Bigg( \underset{\tilde{k}(l)\le p \le \tilde{k}(l+1)}{\sup}\left\|\boldsymbol{\xi}_{p} - \boldsymbol{\xi}_{\tilde{k}(l)}\right\|_2 > \frac{\delta}{2C_e} \Bigg),\nonumber
\end{align}
where Step $(a)$ is due to Lemma \ref{le_sufficient}, Step $(b)$ is due to Lemma 4.2 in \cite{borkar2008stochastic}, and Step $(c)$ is due to \eqref{eq_lock4}. Let $\left\|\cdot\right\|_{\infty}$ denote the max-norm, i.e., $\left\|\mathbf{u}\right\|_{\infty} = \max_{l} |[\mathbf{u}]_l|$. Note that for $\mathbf{u} \in \mathbb{R}^{D}$, $\left\|\mathbf{u}\right\|_{2} \le \sqrt{D} \left\|\mathbf{u}\right\|_{\infty}$. Hence we have
\begin{align}\label{eq_lock6-0}
&~ P\left( \underset{\tilde{k}(l)\le p \le \tilde{k}(l+1)}{\sup}\left\|\boldsymbol{\xi}_{p} - \boldsymbol{\xi}_{\tilde{k}(l)}\right\|_2 > \frac{\delta}{2C_e} \right) \nonumber\\
\le &~ P\left( \underset{\tilde{k}(l)\le p \le \tilde{k}(l+1)}{\sup} \left\|\boldsymbol{\xi}_{p} - \boldsymbol{\xi}_{\tilde{k}(l)}\right\|_{\infty} > \frac{\delta}{4C_e} \right)  \\
= &~ P\left( \underset{\tilde{k}(l)\le p \le \tilde{k}(l+1)}{\sup} \max_{1 \le s \le 4} \left|\big[\boldsymbol{\xi}_{p}\big]_s - \big[\boldsymbol{\xi}_{\tilde{k}(l)}\big]_s \right| > \frac{\delta}{4C_e} \right) \nonumber\\
= &~ P\left( \max_{1 \le s \le 4} \underset{\tilde{k}(l)\le p \le \tilde{k}(l+1)}{\sup} \left|\big[\boldsymbol{\xi}_{p}\big]_s - \big[\boldsymbol{\xi}_{\tilde{k}(l)}\big]_s \right| > \frac{\delta}{4C_e} \right) \nonumber\\
\le &~ \sum_{s = 1}^4 P\left( \underset{\tilde{k}(l)\le p \le \tilde{k}(l+1)}{\sup} \left|\big[\boldsymbol{\xi}_{p}\big]_s - \big[\boldsymbol{\xi}_{\tilde{k}(l)}\big]_s \right| > \frac{\delta}{4C_e} \right). \nonumber
\end{align}

With the increasing $\sigma$-fields $\{\!\mathcal{G}_k\!:\!k\!\ge\!0\!\}$ defined in Appendix \ref{proof_Converge to unique stable point}, we have for $k \ge 0$,
\begin{itemize}
	\item[1)] $\boldsymbol{\xi}_k \!=\! \sum_{l=1}^{k} b_{l} \hat{\mathbf{z}}_{l} \sim \mathcal{N}(0, \sum_{l=1}^k b_k^2 \mathbf{I}(\hat{\boldsymbol{\psi}}_{l\!-\!1},\!\mathbf{W}_l)^{-1})$,
	
	\item[2)] $\boldsymbol{\xi}_k$ is $\mathcal{G}_k$-measurable, i.e., $\mathbb{E} \left[ \left. \boldsymbol{\xi}_k \right| \mathcal{G}_k \right] = \boldsymbol{\xi}_k$,
	
	\item[3)] $\mathbb{E} \left[ \left\| \boldsymbol{\xi}_k \right\|^2_2 \right] = \sum_{l=1}^k b_k^2 \operatorname{tr}\left\{\mathbf{I}(\hat{\boldsymbol{\psi}}_{l\!-\!1},\!\mathbf{W}_l)^{-1}\right\}< +\infty$,
	
	\item[4)] $\mathbb{E} \left[ \left. \boldsymbol{\xi}_k \right| \mathcal{G}_l \right] = \boldsymbol{\xi}_l$ for all $0 \le l < k$.
\end{itemize}
Therefore, $\left[\boldsymbol{\xi}_k\right]_s, s = 1,2,3,4$ is a Gaussian martingale with respect to $\mathcal{G}_k$, and satisfies
\begin{align}\label{eq_lock6-1}
\operatorname{Var}\left[\big[\boldsymbol{\xi}_{k+l}\big]_s - \big[\boldsymbol{\xi}_{k}\big]_s\right] = &~\sum_{i = k+1}^{k+l} b_i^2 \left[\mathbf{I}(\hat{\boldsymbol{\psi}}_{i\!-\!1},\!\mathbf{W}_i)^{-1}\right]_{s,s} \nonumber\\
\le &~\sum_{i = k+1}^{k+l} b_i^2 \frac{C_{\mathbf{I}}\sigma^2}{|{s_p}|^2} \\
= &~\frac{C_{\mathbf{I}}\sigma^2}{|{s_p}|^2} \big[c(k) - c(k+l)\big],\nonumber
\end{align}
where $C_{\mathbf{I}} \!\overset{\Delta}{=}\! \max_{s} \max_{i \ge 1} \frac{|{s_p}|^2}{\sigma^2}\big[\mathbf{I}(\hat{\boldsymbol{\psi}}_{i\!-\!1},\!\mathbf{W}_i)^{-1}\big]_{s,s}$. Let $\eta \!=\! \frac{\delta}{4C_e}$, $M_i \!=\! \big[\boldsymbol{\xi}_{\tilde{k}(l)+i}\big]_s - \big[\boldsymbol{\xi}_{\tilde{k}(l)}\big]_s, s\!=\! 1, 2, 3 ,4$ and $p = {\tilde{k}(l+1) - \tilde{k}(l)}$ in Lemma \ref{le_cheb}, then from \eqref{eq_lock6-0} and \eqref{eq_lock6-1}, we can obtain
\begin{align}\label{eq_lock6}
%P\left( \left. \underset{t\in I_m}{\sup} \left| \bar{x}(t) - x^n(t)\right| > \delta \right|   \underset{t\in I_m}{\sup} \left| \bar{x}(t) - x^m(t)\right| \le \delta, 0 \le m < n \right)
&~ P\left( \underset{\tilde{k}(l)\le p \le \tilde{k}(l+1)}{\sup} \left|\big[\boldsymbol{\xi}_{p}\big]_s - \big[\boldsymbol{\xi}_{\tilde{k}(l)}\big]_s \right| > \frac{\delta}{4C_e} \right) \nonumber \\
\le & ~2\exp\left\{-\frac{\delta^2}{32C_e^2\operatorname{Var}\left[\big[\boldsymbol{\xi}_{\tilde{k}(l)+i}\big]_s - \big[\boldsymbol{\xi}_{\tilde{k}(l)}\big]_s\right]}\right\} \\
\le & ~2\exp\left\{-\frac{\delta^2|{s_p}|^2}{32C_{\mathbf{I}}C_e^2\big[c(\tilde{k}(l)) - c(\tilde{k}(l+1))\big]\sigma^2}\right\}.\nonumber
\end{align}
Combining \eqref{eq_p_invariant}, \eqref{eq_lock6-0} and \eqref{eq_lock6}, we have
%\begin{equation}
\begin{align}\label{eq_lock7}
&~P\left( \hat{\textbf{x}}_k \in \mathcal{I}, \forall k \ge 0 \right) \\
\ge &~1 -  8\sum_{l \ge 0} \exp\left\{-\frac{\delta^2|{s_p}|^2}{32C_{\mathbf{I}}C_e^2\big[c(\tilde{k}(l)) - c(\tilde{k}(l+1))\big]\sigma^2}\right\}. \nonumber
%	& P\left( \underset{t\in I_m}{\sup} \left| \bar{x}(t) - x^n(t)\right| > \delta \ \text{for some}\  n \ge 0 \right) \\
%	\le & \sum_{n\ge 0}P\left( \left. \underset{t\in I_m}{\sup} \left| \bar{x}(t) - x^n(t)\right| > \delta \right|   \underset{t\in I_m}{\sup} \left| \bar{x}(t) - x^m(t)\right| \le \delta, 0 \le m < n \right) \le 2 \sum_{n\ge 0} e^{-\frac{\rho\delta_1^2}{b(\tilde{n}(m)) - b(\tilde{n}(m+1))}}.
\end{align}
%\end{equation}

To use Lemma \ref{le_sum}, we assume that the step-size $b_k$ satisfies
\begin{equation}\label{eq_lock_constr2}
c(0) = \sum_{i > 0} b_i^2 \le \frac{\delta^2|{s_p}|^2}{32C_{\mathbf{I}}C_e^2\sigma^2}.
\end{equation}
Then, from Lemma \ref{le_sum}, we can obtain
\begin{equation*}%\label{eq_lock8}
\begin{aligned}
&~\frac{\exp\left\{-\frac{\delta^2|{s_p}|^2}{32C_{\mathbf{I}}C_e^2\big[c(\tilde{k}(l)) - c(\tilde{k}(l+1))\big]\sigma^2}\right\}}{c(\tilde{k}(l)) - c(\tilde{k}(l+1))} \\
\le &~\frac{\exp\left\{-\frac{\delta^2|{s_p}|^2}{32C_{\mathbf{I}}C_e^2 c(0)\sigma^2}\right\} }{c(0)},
\end{aligned}
\end{equation*}
for $c(\tilde{k}(l)) - c(\tilde{k}(l+1)) < c(\tilde{k}(l)) \le c(0)$. Hence, we have
%\begin{equation*}%\label{eq_lock9}
%\begin{aligned}
%	&~\exp\left\{-\frac{\rho\delta^2}{4C_e^2\big[b(\tilde{n}(m)) - b(\tilde{n}(m+1))\big]}\right\} \\
%	= &~\big[ b(\tilde{n}(m)) - b(\tilde{n}(m+1)) \big] \\
%	&~\cdot \frac{\exp\left\{-\frac{\rho\delta^2}{4C_e^2\big[b(\tilde{n}(m)) - b(\tilde{n}(m+1))\big]}\right\}}{b(\tilde{n}(m)) - b(\tilde{n}(m+1))} \\
%	\le &~\big[ b(\tilde{n}(m)) - b(\tilde{n}(m+1)) \big] \cdot \frac{\exp\left\{-\frac{\rho\delta^2}{4C_e^2b(0)}\right\} }{b(0)},
%\end{aligned}
%\end{equation*}
%and
\begin{align}\label{eq_lock7-2}
&\sum_{l\ge 0} \exp\left\{-\frac{\delta^2|{s_p}|^2}{32C_{\mathbf{I}}C_e^2\big[c(\tilde{k}(l)) - c(\tilde{k}(l+1))\big]\sigma^2}\right\} \\
\le &\sum_{l \ge 0} \left[ c(\tilde{k}(l)) - c(\tilde{k}(l+1))\right] \cdot \frac{\exp\left\{-\frac{\delta^2|{s_p}|^2}{32C_{\mathbf{I}}C_e^2 c(0)\sigma^2}\right\}}{c(0)} \nonumber \\
= &c(0) \cdot \frac{\exp\left\{-\frac{\delta^2|{s_p}|^2}{32C_{\mathbf{I}}C_e^2c(0)\sigma^2}\right\} }{c(0)} = \exp\left\{-\frac{\delta^2|{s_p}|^2}{32C_{\mathbf{I}}C_e^2c(0)\sigma^2}\right\}.\nonumber
%e^{-\frac{3\rho\delta^2}{8 \pi^2 \alpha^2 e^{2\delta}}},
\end{align}
As $C_e = e^{L (T+b_1)}$, $c(0) = \sum_{i > 0} b_{i}^2$, and $b_k, T, L$ are given by \eqref{eq_stepsize}, \eqref{eq_T}, \eqref{eq_Lip} separately, we can obtain
\begin{equation}\label{eq_exponential}
\begin{aligned}
\frac{\delta^2|{s_p}|^2}{32C_{\mathbf{I}}C_e^2c(0)\sigma^2} = &~ \frac{\delta^2|{s_p}|^2}{32C_{\mathbf{I}} e^{2L (T+\frac{\alpha}{K_0+1})} \sigma^2\sum\limits_{i \ge 1} \frac{\epsilon^2}{(i+K_0)^2}} \\
=&~\frac{\delta^2}{\sum\limits_{i \ge 1} \frac{32C_{\mathbf{I}} e^{2L (T+\frac{\epsilon}{K_0+1})}}{(i+K_0)^2}} \!\cdot\!\frac{|{s_p}|^2}{\epsilon^2 \sigma^2}.
\end{aligned}
\end{equation}
In \eqref{eq_exponential}, $0 < \delta < \inf_{\textbf{v} \in \partial \mathcal{B}} \left\| \textbf{v} - \hat{\textbf{x}}_\text{b} \right\|$, (\ref{eq_lock_constr}) and (\ref{eq_lock_constr2}) should be satisfied, where a sufficiently large $K_0 \ge 0$ can make both (\ref{eq_lock_constr}) and (\ref{eq_lock_constr2}) true.

To ensure that $\hat{\textbf{x}}_0 + b_{1}\left[\mathbf{f}\left(\hat{\boldsymbol{\psi}}_0, \boldsymbol{\psi}\right)\right]_{3,4}$ does not exceed the mainlobe $\mathcal{B}(\textbf{x})$, i.e., the first step-size $b_{1}$ satisfies
\begin{align*}\left|\hat{x}_{0,1} + b_{1}\left[\mathbf{f}\left(\hat{\boldsymbol{\psi}}_0, \boldsymbol{\psi}\right)\right]_3 - x_1\right| < 1\\
\left|\hat{x}_{0,2} + b_{1}\left[\mathbf{f}\left(\hat{\boldsymbol{\psi}}_0, \boldsymbol{\psi}\right)\right]_4 - x_2\right| < 1\\
\end{align*}
we can obtain the maximum $\epsilon$ as follows
\begin{small}
	\begin{align}%\label{eq_alpha_range}
	\epsilon_{\max} &= {\min}\frac{(K_0+1)}{\left|\!\left[\!\mathbf{f}\left(\hat{\boldsymbol{\psi}}_0, \boldsymbol{\psi}\right)\!\right]_3\!\right|}\left\{1 - \lvert x_1-\hat{x}_{0,1} \rvert, 1 - \lvert x_2-\hat{x}_{0,2}\rvert \right\}\nonumber\\
	&\leq \frac{(K_0+1)}{\left|\!\left[\!\mathbf{f}\left(\hat{\boldsymbol{\psi}}_0, \boldsymbol{\psi}\right)\!\right]_3\!\right|}\\
	& \triangleq \epsilon_b.\nonumber
	\end{align}
\end{small}
Hence, from \eqref{eq_exponential}, we have
\begin{align}\label{eq_exponential-2}
\frac{\delta^2|{s_P}|^2}{32C_{\mathbf{I}}C_e^2c(0)\sigma^2} \!\cdot\! \frac{\epsilon^2\sigma^2}{|{s_p}|^2}  \ge \frac{\delta^2}{\sum\limits_{i \ge 1} \frac{32C_{\mathbf{I}} e^{2L (T+\frac{\epsilon_{b}}{K_0+1})}}{(i+K_0)^2}} \overset{\Delta}{=} C.
\end{align}

Combining \eqref{eq_lock7}, \eqref{eq_lock7-2} and \eqref{eq_exponential-2}, yields
\begin{equation*}
\begin{aligned}
P\left( \hat{\textbf{x}}_k \in \mathcal{I}, \forall k \ge 0 \right) \ge 1 - 8e^{-\frac{C|{s_p}|^2}{\epsilon^2\sigma^2}},
\end{aligned}
\end{equation*}
which completes the proof.

\IEEEpeerreviewmaketitle

\end{document}